\newif\iftwocolumn
\newif\ifonecolumn
\newif\iflncs
\newif\ifsp
\newif\ifanon
\newif\ifnonanon
\newif\iflong
\newif\ifshort
\newif\ifappendices
\newcommand{\LOGdirty}[2]{\ensuremath{\mathbb{L}_{#1}^{#2}}}
\newcommand{\ledgercup}{\ensuremath{\mathbb{L}}^{\cup}}
\newcommand{\ledgercap}{\ensuremath{\mathbb{L}}^{\cap}}
\newcommand{\ledger}{\ensuremath{\mathbb{L}}}
\newcommand{\ledgeraug}{\ensuremath{\mathbb{L}_{+}}}
\newcommand{\largest}{\ensuremath{\overline{\prover}}}
\newcommand{\dtreesp}[0]{\ensuremath{\mathcal{T}}}
\newcommand{\mroot}{\ensuremath{\left<\dtreesp\right>}}
\newcommand{\UTXOP}{\ensuremath{\mathsf{UTXO}^p}}
\newcommand{\UTXOL}{\ensuremath{\mathsf{UTXO}^l}}
\newcommand{\prover}{\ensuremath{\mathcal{P}}}
\newcommand{\verifier}{\ensuremath{\mathcal{V}}}
\newcommand{\eoracle}{execution oracle\xspace}
\newcommand{\Eoracle}{Execution oracle\xspace}
\newcommand{\EOracle}{Execution Oracle\xspace}
\newcommand{\coracle}{consensus oracle\xspace}
\newcommand{\Coracle}{Consensus oracle\xspace}
\newcommand{\COracle}{Consensus Oracle\xspace}
\newcommand{\wf}{well-formed\xspace}
\newcommand{\Wf}{Well-formed\xspace}
\newcommand{\wfc}{well-formedness\xspace}
\newcommand{\transition}{\delta}
\newcommand{\genesisstate}{st_0}
\newcommand{\genesisstatec}{\left<st_0\right>}
\newcommand{\st}{\ensuremath{\mathsf{st}}}
\newcommand{\stc}{\ensuremath{\left<\mathsf{st}\right>}}
\def\chain{\mathcal{C}}
\newcommand{\ie}[0]{\emph{i.e.}\xspace}
\newcommand{\eg}[0]{\emph{e.g.}\xspace}
\newcommand{\cf}[0]{\emph{cf.}\xspace}
\newcommand{\tx}[0]{\ensuremath{\mathsf{tx}}}
\newcommand{\concat}{\,\|\,}
\newcommand{\challenger}[1]{{\color{blue} #1}}
\newcommand{\responder}[1]{{\color{orange} #1}}
\newcommand{\negl}{\mathrm{negl}}
\begin{document}
\captionsetup[figure]{font=small,skip=4pt}
\captionsetup[algorithm]{font=small,skip=4pt}

\title{Light Clients for Lazy Blockchains}%
\blfootnote{Authors are listed alphabetically. Contact author: DT.\\Full version of the paper with appendices is available at \href{https://eprint.iacr.org/2022/384}{eprint.iacr.org/2022/384}~\cite{full-version}.}%

\iflncs
\author{%
Ertem Nusret Tas\inst{1} \and%
David Tse\inst{1} \and%
Lei Yang\inst{2} \and%
Dionysis Zindros\inst{1}}%
\institute{
Stanford University\\
\email{\{nusret,dntse,dionyziz\}@stanford.edu}\\
\and
MIT CSAIL\\
\email{leiy@csail.mit.edu}
}

\maketitle

\begin{abstract}
\emph{Lazy} blockchains decouple consensus from transaction verification and execution to increase throughput.
Although they can contain invalid transactions (\eg, double spends) as a result, these can easily be filtered out by full nodes that check if there have been previous conflicting transactions. However, creating light (SPV) clients that do not see the whole transaction history becomes a challenge:
A record of a transaction on the chain does not necessarily entail transaction confirmation.
In this paper, we devise a protocol that enables the creation of efficient light clients for lazy blockchains.
The number of interaction rounds and the communication complexity of our protocol are logarithmic in the blockchain execution time.
Our construction is based on a bisection game that traverses the Merkle tree containing the ledger of all -- valid or invalid -- transactions.
We prove that our proof system is succinct, complete and sound, 
and empirically demonstrate the feasibility of our scheme.
\end{abstract}

\section{Introduction}
\label{sec:introduction}
A traveler in Naples saw twelve beggars lying in the sun.
He offered a lira to the laziest of them.
Eleven of them jumped up to claim it, so he gave it to the twelfth~\cite{idleness}.
Towards scalable blockchains,
the holy grail of cryptocurrency adoption,
it has become clear that \emph{lazy} systems
will similarly win the race.

\emph{Eager} blockchain protocols, such as Bitcoin and Ethereum, combine transaction verification and execution with consensus
to ensure that only \emph{valid} transactions are included in their ledger.
In contrast,
lazy blockchain protocols
separate the consensus layer
(responsible for ordering transactions)
from the execution layer
(responsible for interpreting them) to remove the execution bottleneck on scalability~\cite{lazyledger}.
In these systems, a population of \emph{consensus nodes}
collects transactions and places them in a total order,
without care for their validity.
This produces a confirmed \emph{dirty ledger},
a sequence of totally ordered, but potentially invalid,
transactions -- such as double spends.
It is the responsibility of \emph{full nodes} to
\emph{sanitize} the dirty ledger and ascertain which
transactions are valid.
This is done by executing the valid transactions one by one, and ignoring transactions
that are not applicable.
Examples of lazy distributed ledger protocols include Celestia (LazyLedger)~\cite{lazyledger},
Prism~\cite{prism}, Parallel Chains~\cite{parallel-chains} and Snap-and-Chat~\cite{snap-and-chat}.

As consensus nodes do not execute transactions, 
they also cannot find and include the state commitments in the blockchain.
\begin{figure}[t]
    \centering
    \includegraphics[width=0.75\columnwidth]{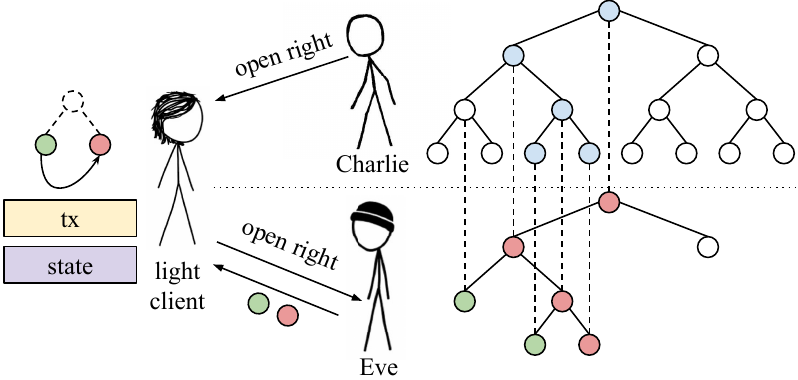}
    \caption{Bisection Game. Charlie the challenger helps the
    light client iteratively traverse the tree of Eve the evil responder.
    A green node indicates a match, while a red node indicates a mismatch, between the two dirty trees.}
    \label{fig:bisection_game}
\end{figure}
Hence, lazy protocols cannot easily support light clients,
and techniques from the realm of eager systems,
such as SPV (Simple Payment Verification~\cite{bitcoin,ethereum_spv}),
are not applicable (\cf \ifappendices Appendix~\ref{sec:attack} \else\cite[Appendix A]{full-version} \fi
for an attack on the succinctness of the SPV clients on lazy blockchains).
For instance, in Ethereum, to prove to the light clients their current account balance, a full node presents a block header containing a state commitment, together with a Merkle inclusion proof of the account to be verified within the commitment.
However, in a lazy protocol, commitments posted to the blockchain cannot be trusted since the consensus nodes do not check the validity of the state.

In this paper, we resolve this outstanding problem by introducing the
first \emph{light client for lazy blockchains}.
Consider a light client, such as the mobile wallet of a vendor, wishing to confirm
an incoming payment.
Our construction allows it
to synchronize with the network and quickly learn its latest account balance.
Towards this purpose, the light client
first connects to several full nodes (\eg, servers by Infura, Chainlink, Alchemy), at least one of which is honest (existential honesty).
It then asks them its account balance.
If the answers received contradict each other, then it
deduces that at least one of them is adversarial.
It interactively interrogates the full nodes
in order to determine which of them is truthful.

Lazy blockchains also appear in the context of optimistic rollups on Ethereum \cite{arbitrum,optimism,fuel}.
In these rollups, transactions are bundled and posted to Ethereum.
Then, the rollup full nodes execute the transactions and send state commitments to a smart contract.
If an invalid commitment appears on the contract, honest full nodes post fraud proofs to warn the light clients about the invalid state.
To guarantee that they will see a fraud proof on-chain when the state is invalid, these clients wait for a \emph{dispute delay} period, typically one week, before accepting a rollup state commitment.
Thus, Ethereum acts like a lazy blockchain towards the pending rollup transactions and state commitments that are less than a week old.
With our light client protocol used on these pending transactions, clients can sync with the latest rollup state within seconds.

\noindent
\textbf{Contributions.}
Our contributions 
are:
  (1) We put forth the first \emph{light client} for \emph{lazy blockchains}, achieving
      exponential improvement over full nodes in terms of communication and computational complexity (Section~\ref{sec:protocol});
  (2) We show our system is \emph{complete}, \emph{sound}, and \emph{succinct} with reduction-based proofs (Section~\ref{sec:analysis});
  (3) We implement our scheme and measure its performance 
  (Section~\ref{sec:implementation}).

Experiments show that our light client construction can be efficiently implemented on commodity mobile hardware, and only requires slight, incremental changes to blockchain full nodes serving these clients. Specifically, to synchronize with the network, a light client connecting to 17 full nodes distributed across the world only downloads a dozen MBs of data, as opposed to hundreds of GBs if running as a full node. The entire process takes less than 25 seconds.

\subsection{Construction Overview}
\label{sec:construction-overview}

Consider a light client connected to two full nodes, Charlie and Eve.
Charlie is honest and Eve is adversarial.
The client begins by downloading the \emph{canonical} (confirmed) header chain, each header containing the Merkle root of the transactions (valid and invalid) in its block.
To find out its account balance, the client queries the full nodes.
If both of them return the same answer,
it rests assured that the balance reported is accurate, implying that the protocol terminates quickly in the optimistic case.
Otherwise, it must identify the truthful party.

To help convince the light client, Charlie augments his dirty ledger with some extra information: Together with every transaction, he includes a state commitment
after the particular transaction has been applied to the previous state.
If a transaction is invalid, he does not update the state. 
He organizes this \emph{augmented dirty ledger} into a binary Merkle tree, the \emph{dirty tree}.
All honest full nodes following this process will construct the same dirty tree and hold the same, correct dirty tree root (assuming they claim the same number of leaves that is a power of two).
If the client somehow learns the \emph{correct} dirty tree root,
then it can be convinced of its balance with a Merkle proof.
Thus, it suffices for the client to discover the correct root.
(In practice, the dirty tree can be organized on the granularity of blocks rather than transactions; \cf Section~\ref{sec:state-transition}.)

Charlie gives the correct dirty tree root to the light client, whereas
Eve gives an incorrect root.
Since the two roots are different,
the underlying augmented dirty ledgers must differ somewhere.
Charlie helps
the client identify the first leaf in Eve's dirty tree
that differs from his own via a \emph{bisection game} (\cf
\ifappendices Appendix~\ref{sec:bisection-game} \else\cite[Appendix B]{full-version} \fi
for a formal description):
With reference to his own dirty tree,
he guides the client through a path on Eve's dirty tree that starts at the
root and ends at the first leaf of disagreement.
He does this
by iteratively asking Eve to reveal an increasingly deeper node at a time.
Given a node revealed at a certain height, Charlie queries the left or
the right child as illustrated in Figure~\ref{fig:bisection_game}.
The left child is queried if it does not match the corresponding internal node of his own tree,
indicating a mismatch; otherwise he selects to query the right child,
since the left subtrees are identical.
When the process finishes, the
light client has arrived at the first point of disagreement between Charlie's
and Eve's augmented dirty ledgers.

Once the augmented dirty ledger entry of disagreement is identified, the client must verify that Eve's entry is fraudulent, as claimed by Charlie:
It either contains an incorrect transaction or an invalid state commitment.
If the transaction within Eve's entry is different from the one in the confirmed dirty ledger at the claimed position,
the client can detect this by asking for the transaction's Merkle inclusion proof with respect to the header chain 
client already holds 
(\cf \ifappendices Appendix~\ref{sec:consensus-oracle} \else\cite[Appendix E.2]{full-version} \fi
for a formalization of this inclusion check and \ifappendices Appendix~\ref{sec:coracle-constructions} \else\cite[Appendix F]{full-version} \fi for an overview of how it can be implemented on different consensus protocols).
On the other hand, if the transaction is correct, the client can locally evaluate the correct state commitment 
at that position by applying the transaction to the previous state commitment (which is valid as Charlie agrees with it).
For this purpose, the client need not download the whole previous state tree, but instead asks for a fraud proof from Charlie.
Fraud proofs were first introduced by Al-Bassam et al.~\cite{albassam2018fraud} 
to provide security for light clients against dishonest \emph{majorities} that can include invalid state commitments in Ethereum.
They consist of the state elements touched by the transaction and their Merkle proofs within the previous state commitment.
They allow the client to obtain the new, correct state commitment by updating these state elements and the relevant inner nodes of the state Merkle tree.
Therefore, any discrepancy in the state commitment can be caught by the client (\cf \ifappendices Appendix~\ref{sec:execution-oracle} \else\cite[Appendix E.3]{full-version} \fi for a formalization of this state check and how it can be implemented on UTXO based protocols).

\subsection{Related work}
\label{sec:related-work}

\emph{Eager} light clients were first introduced by Naka\-moto~\cite{bitcoin}.
\emph{Superlight} clients for eager proof-of-work blockchains were put forth as
NIPoPoWs \cite{nipopows,flyclient,compactsuperblocks,pow-sidechains} and Mina~\cite{coda} (formerly known as CODA).
Improved and superlight clients for eager proof-of-stake chains were described
by Gaži et al.~\cite{pos-sidechains} and Agrawal et al.~\cite{popos}\footnote{The paper~\cite{popos} was made public after an initial version of this work.} respectively.
Our techniques are orthogonal to theirs and can be composed as discussed in
\ifappendices Appendix~\ref{sec:super_light_clients}\else\cite[Appendix D]{full-version}\fi.
For an overview of different light client constructions, we refer the reader to Chatzigiannis et al.~\cite{light-clients-sok}.

Our interactive bisection game is based on the work of Canetti et al.~\cite{refereed-computation}
which was first applied in the blockchain setting by Arbitrum~\cite{arbitrum}.
Computation over large public logs was also explored by VerSum~\cite{versum}.
Contrary to Canetti and Arbitrum, where bisection games are used to
dispute \emph{computation} over static data, our
bisection games are administered over \emph{ledgers}, ever-growing and with different sizes.
This challenge
requires us to introduce novel techniques in these refereed games such as the use of Merkle Mountain Ranges~\cite{mmr,mmr-grin} and a \emph{Suffix Monologue} in our construction (Section~\ref{sec:full-protocol}).
Finally, on the multi-server case, we improve the quadratic communication complexity of Canetti~\cite{refereed-computation} to linear by our multiparty tournaments (Section~\ref{sec:tournament}).

As an alternative to our construction, recursive compositions~\cite{recursiveSNARK} of SNARKs \cite{sasson-recursive-snark,coda} or STARKs can be used to support non-interactive lazy light clients.
For example, Mina~\cite{coda} relies on recursive SNARKs with trusted setup to enable verification of \emph{all} past state transitions in constant time.
Plumo~\cite{plumo} proposes a SNARK-based blockchain client with trusted setup that can prove months of state history with a single transition proof.
Halo~\cite{halo} (later formalized by~\cite{bunz-recursive-proof}) introduced the first practical recursive SNARK without trusted setup.
Our work also does not require a trusted setup and our provers can update their state in an online fashion within milliseconds on commodity hardware, with minimal RAM requirements (for comparison, zkBridge~\cite{zkbridge} that uses SNARK proofs incurs a proving cost of $\$50$ million per year).
We also do not require changes in the consensus layer to support pairing-friendly and ZK-friendly elliptic curves.
Our construction uses simple primitives that are straightforward to implement today, and give insight to the structure of the underlying problem.
Lastly, although the ZK-based solutions do not require synchrony and the existential honesty assumption for the safety of the lazy light clients (albeit requiring them for liveness), these assumptions are already needed for the clients to identify the correct header chain (consensus security) upon bootstrapping on many blockchains such as Bitcoin, Ethereum and Cardano.
Therefore, our work does not introduce extra assumptions for the security of the lazy light client construction.

\section{Preliminaries \& Model}
\label{sec:model}
\noindent
\textbf{Notation.}
For a natural number $n$,
we use $[n]$ to denote the set $\{1, \cdots, n\}$.
We use $\epsilon$ for the \emph{empty string}. Given two strings $a$ and $b$, we write $a \concat b$
for some unambiguous encoding of their \emph{concatenation}.
Given a sequence $X$, $X[i]$ represents the $i^\text{th}$ element (starting from $0$).
Negative indices address elements from the end, 
so $X[-1]$ is the last element. We use $X[i{:}j]$ to denote the subsequence of $X$
consisting of the elements indexed from $i$ (inclusive) to $j$ (exclusive). The notation $X[i{:}]$ means the
subsequence of $X$ from $i$ onwards, while $X[{:}j]$ means the subsequence of $X$ up to (but excluding) $j$.
We use $|X|$ to denote the size of a sequence. For a non-empty sequence $X$, we use $(x{:}xs) \gets X$ to mean that
the first element of $X$ is assigned to $x$, while the rest of the elements are assigned to the (potentially empty)
sequence $xs$. 
In our multi-party algorithms, we use $m \dashrightarrow A$ to indicate that message $m$ is
sent to party $A$ and $m \dashleftarrow A$ to indicate that message $m$ is received from party $A$.
We use $X \preceq Y$ ($X \prec Y$) to mean that $X$ is a (strict) prefix of $Y$.
If either $X \preceq Y$ or $Y \preceq X$, then $X$ and $Y$ are said to be consistent.
We use $X\mid Y$ to denote that $X$ is a \emph{subarray} of $Y$, \ie, all elements in $X$ appear in $Y$ consecutively.
We use $H$ to denote a generic, \emph{collision-resistant} cryptographic hash function~\cite{katz}. 

There are three types of nodes: \emph{consensus nodes}, \emph{full nodes}, and \emph{light clients}.

\smallskip
\noindent
\textbf{Consensus nodes} receive \emph{constant} size transactions from the network and run a consensus protocol to obtain chains consisting of blocks.
These chains are subsequently broadcast to all other nodes.
Upon receiving a \emph{confirmed} chain from the consensus nodes, each node reads its chain and produces a sequence of transactions (with total order) called the \emph{ledger}.

The consensus nodes are \emph{lazy}:
They treat transactions as meaningless strings, without validating them.
They include in their proposed blocks \emph{any} received transaction
with some spam-resilience mechanism (\eg, they typically maintain a minimal notion of state that enables transactions to pay fees for block space).

The ledgers held by different nodes satisfy two properties:
(1) \emph{Safety}
mandates that the ledgers of all honest nodes are consistent with each other;
(2) \emph{Liveness} mandates that, if an
honest node broadcasts a new transaction, it will eventually
appear in the ledger of all honest nodes within some finite delay.

\smallskip
\noindent
\textbf{Full nodes} 
do not execute the consensus protocol, and instead, rely on the consensus nodes to provide them with a confirmed chain and the associated ledger. 
Contrary to consensus nodes, full nodes
execute transactions to maintain a \emph{state} (\eg, a Merkle tree of account balances)
uniquely determined by the ledger.
An empty ledger corresponds to a constant \emph{genesis state}, $\genesisstate$.
To determine the state of a non-empty ledger, transactions from the ledger
are iteratively applied on top of the state, starting at the genesis state.
This 
is captured by a transition
function $\transition(\cdot, \cdot)$ taking a state and a transaction and producing
a new state. 
Given a dirty ledger $\ledger = \tx_1 \cdots \tx_n$, the state becomes
$\transition(\transition(\cdots \transition(\genesisstate, \tx_1), \cdots), \tx_n)$.
We use the shorthand notation $\transition^*$ to apply a sequence of transactions
$\overline{\tx} = \tx_1 \cdots \tx_n$
to a state, \ie,
$\transition^*(\genesisstate, \overline{\tx}) =
\transition(\transition(\cdots \transition(\genesisstate, \tx_1), \cdots),\allowbreak\tx_n)$.

Some transactions may not be applicable to a particular state, in which case they are said to be \emph{invalid} with respect to the state (\eg, double spends).
As we are dealing with \emph{lazy} systems, invalid
transactions may still be contained in the ledger, hence the ledger
is called \emph{dirty}.
We denote by $\LOGdirty{r}{v}$ the dirty ledger in the view of a full node $v$ at time $r$.
If safety is guaranteed, then we use $\ledgercup_r$ to denote
the longest among all the dirty ledgers kept by honest nodes at round $r$.
Similarly, we use $\ledgercap_r$ to denote the shortest among them.
We skip $r$ in this notation if it is clear from the context.
By convention, if a transaction $\tx$
cannot be applied to state $\st$,
we let $\transition(\st, \tx) = \st$. 
Each state is committed to by a succinct representation called the \emph{state commitment} (\eg, a Merkle root) and denoted by $\stc$.
State commitments have constant size.
We denote by $\left<\cdot\right>$
the commitment function that takes a state and produces its commitment, \ie, $\stc$ is the commitment to the state $\st$.

\smallskip
\noindent
\textbf{Light clients} wish to find out a particular state element (\eg, its own account balance) 
without downloading the whole ledger or executing the transactions.
As in the SPV model, the light client downloads and verifies the \emph{header chain} from the consensus nodes (\eg, the longest chain headers containing \emph{transaction roots}), but not the transactions themselves. 
Given a chain $\chain$ with $|\chain|$ blocks and the corresponding ledger of size $L = |\ledger|$, a full node downloads
data proportional to $\mathcal{O}(|\chain| + L)$, where $|\chain|$ comes from the header chain and $L$ comes from the transactions.
In contrast, a light client wants to learn its desired state element by downloading asymptotically less data. 
We call a light client \emph{succinct} if instead of $L$, it only needs to download $O(\text{poly}\log L)$ bits after obtaining the header chain.

\smallskip
\noindent
\textbf{The Prover--Verifier model.}
We are interested in a \emph{light client} $\verifier$ who is booting up the network for the first time.
It connects to full nodes who are fully synchronized with the rest of the network.
The client acts as a \emph{verifier}, while the full nodes act as \emph{provers}~\cite{nipopows}.
We assume at least one of the provers that $\verifier$ connects to is honest
(the standard \emph{non-eclipsing} assumption~\cite{backbone,varbackbone,eclipse,eclipse-ethereum}),
but the rest can be adversarially controlled.
The honest provers follow the specified protocol and the adversary can run any probabilistic polynomial-time algorithm.

\smallskip
\noindent
\textbf{Network.}
Time proceeds in discrete rounds.
The network is \emph{synchronous}, \ie, a message sent by one honest node
at the end of round $r$ is received by all honest nodes at the beginning of round $r + 1$. 
The adversary can inject arbitrary, but bounded number of messages to the network.
She can also reorder the
messages sent by honest nodes and deliver them in a different order to different honest nodes.
However, she cannot censor honest messages.
As popular lazy blockchain systems such as Celestia (LazyLedger)~\cite{lazyledger},
Prism~\cite{prism}, and Parallel Chains~\cite{parallel-chains}
were proven secure under the synchronous network model, our construction does not impose extra requirements for these systems.

\section{Construction}
\label{sec:protocol}

We next describe the protocol for ledgers of variable and dynamic lengths.
\subsection{Augmented Dirty Ledgers, Dirty Trees and MMRs}
\label{sec:constructing-dirty-ledgers}

The prover \emph{augments} each element of its dirty ledger $\ledger$ and produces an \emph{augmented dirty ledger} $\ledgeraug$, where every transaction in the original dirty ledger is replaced with
a pair.
The pair, denoted by $(\tx, \stc)$, contains the original transaction $\tx$ as well as a commitment $\stc$ to the state after this transaction is applied.
The first element of $\ledgeraug$ is the pair $(\epsilon, \genesisstatec)$, consisting of the empty string (as there is no genesis transaction) and the genesis state commitment $\genesisstatec$.
The state commitment of $\ledgeraug[i + 1]$ is computed by applying the transaction at $\ledger[i+1]$ to the state committed to by $\ledgeraug[i]$.
Concretely, if $\ledger = (\tx_1, \tx_2, \ldots)$, then
$\ledgeraug =  (
               (\epsilon, \genesisstatec),
               (\tx_1, \left<\transition(\genesisstate, \tx_1)\right>),
               (\tx_2, \left<\transition(\transition(\genesisstate, \tx_1), \tx_2))\right>),
               ..
              )$.
              
The dirty tree $\dtreesp$ corresponding to an augmented dirty ledger $\ledgeraug$ is defined as the Merkle tree that contains $\ledgeraug[i]$ as the $i$-th leaf.

To organize ledgers of various sizes, provers use Merkle Mountain Ranges (MMRs). 
Provers construct their MMRs on their augmented dirty ledgers.
To build an MMR, a prover divides his augmented dirty ledger $\ledgeraug$ into segments $s_1, s_2, \ldots, s_k$ with lengths $\overline{\ell} = (\ell_1, \ell_2, \ldots, \ell_k)$, where $\ell_1 = 2^{q_1} > \ell_2 = 2^{q_2} > \ldots > \ell_k = 2^{q_k}$ are unique decreasing powers of $2$.
Each of these $k$ segments is then organized into a dirty tree, and those trees $\dtreesp = (\dtreesp_1, \dtreesp_2, \ldots, \dtreesp_k)$ 
are collected into an array, that is the MMR $\dtreesp$.
The roots $\mroot = (\mroot_1, \mroot_2, \ldots, \mroot_k)$ of these dirty trees, where $\mroot_i = \left<\dtreesp_i\right>$,
are called the \emph{peaks}.
When there is a new transaction, the provers update their MMRs in amortized constant time, worst case update time per transaction being logarithmic in the size of the dirty ledger.

\subsection{Views in Disagreement}
\label{sec:views-in-disagreement}

Consider a light client $\verifier$ that connects to an honest prover $\prover$ and an adversarial prover $\prover^*$, but does not know who is who.
Let $\st$ be the current state in $\prover$'s view at round $r$.
Let $\ledger$ denote the dirty ledger, $\ledgeraug$ denote the augmented dirty ledger, and $\dtreesp$ denote the
MMR of $\prover$ at round $r$.
The last entry $\ledgeraug[-1]$ of the honest augmented dirty ledger
contains the commitment $\stc$ to the latest state $\st$.
The client $\verifier$ wishes to learn the value of a particular state element in $\st$.
For this purpose,
$\verifier$ only needs to learn a truthful state commitment $\stc$;
as from there, an inclusion proof into $\stc$ suffices to show inclusion of any state element value.
So the goal of $\prover$ is to convince $\verifier$ of the correct state commitment $\stc$.
If both provers respond to $\verifier$'s request with the same commitment $\stc$, then $\verifier$ knows
that the received state commitment is correct (because at least one prover is honest).
If they differ, it must discover the truth. 
If at any point in time, one of the provers \emph{timeouts}, \ie, fails
to respond in one round of receiving its query, the prover is considered adversarial
and ignored thereafter (as the network is synchronous, no honest prover would timeout). 
In practice, this equates to a short timeout in the network connection.

Suppose the two provers claim different state commitments, $\stc$ and $\stc^*$ respectively, where $\stc \neq \stc^*$.
To prove the correctness of its commitment, $\prover$ sends to $\verifier$ the peaks $\mroot = (\mroot_1, \ldots, \mroot_k)$ of its MMR $\dtreesp$,
the length $\ell=|\ledgeraug|$ of its augmented dirty ledger, 
and the Merkle proof $\pi$ from the last leaf $\ledgeraug[-1]$, which contains $\stc$, to the root $\mroot_k$.
The adversary $\prover^*$ sends to $\verifier$ the alleged peaks $\mroot^* = (\mroot^*_1, \ldots, \mroot^*_{k^*})$, an alleged length $\ell^*$,
and an alleged Merkle proof $\pi^*$.
Since $\stc \neq \stc^*$, if $\pi$ and $\pi^*$ both verify, then we have that
$\mroot_k \neq \mroot^*_{k^*}$.
In this case, $\verifier$ mediates a \emph{challenge game} between $\prover$ and $\prover^*$ to determine
which of the peaks $\mroot$ or $\mroot^*$ were constructed honestly:

\begin{definition}[\Wf Ledgers, Trees and MMRs]\label{def:safe}
An augmented dirty ledger $\ledgeraug$ is said to be \emph{\wf} at round $r$ with respect to transition $\transition$, genesis state $\st_0$, and commitment function $\left<\cdot\right>$
if: $\ledgeraug[0]=(\epsilon, \genesisstatec)$ and,
$\forall \ i \in [|\ledgeraug|-1]$, $\ledgeraug[i]=(\tx_i,\left<\st_i\right>)$ such that $(\tx_{i-1}, \tx_i) \mid \ledgercup_r$, and $\st_i = \transition^*(\genesisstate,\ledger[{:}i])$.

A dirty tree or MMR $\dtreesp$ is said to be \emph{\wf} if its leaves correspond to the entries of a \wf augmented dirty ledger.
\end{definition}

The augmented dirty ledger and MMR held by an honest prover are always \wf.
Hence, to determine whether $\mroot_k$ or $\mroot^*_{k^*}$ contain the correct state commitment, it suffices for the verifier to check
if $(\mroot_1, \ldots, \mroot_{k})$ or $(\mroot^*_1, \ldots, \mroot^*_{k^*})$ correspond to the peaks of a \emph{\wf} MMR.

\subsection{Challenge Game}
\label{sec:full-protocol}

\begin{figure}[t]
    \centering
    \ifonecolumn
    \includegraphics[width=0.8 \textwidth]{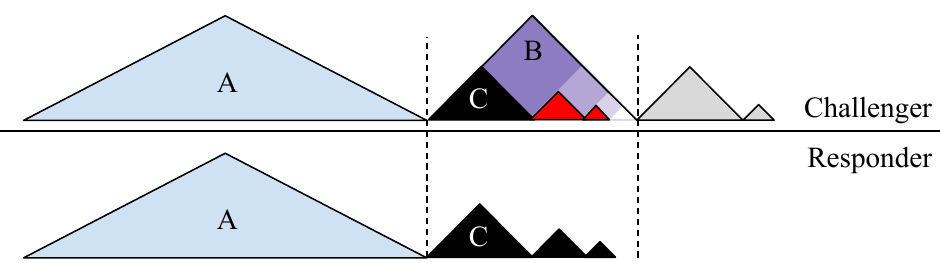}
    \fi
    \iftwocolumn
    \includegraphics[width=\linewidth]{figures/tree-vs-peaks.pdf}
    \fi
    \caption{The challenger's MMR (top) is compared to the responder's alleged MMR.
             The first two peaks (A in blue) are the same,
             so they are skipped by Alg.~\ref{alg.peaks.vs.peaks}.
             The second peak of the challenger is reached
             (B in purple) and compared against the responder's
             second peak (C). When found to be
             different, the challenger knows that the remaining
             responder peaks (in black, bottom) will lie within
             his own current tree (B, in purple); so
             Alg.~\ref{alg.peaks.vs.peaks} Line~\ref{alg.peaks.vs.peaks.subtree}
             calls Alg.~\ref{alg.challenger.tree.vs.peak}
             to compare the black peaks against the purple tree.}
    \label{fig.tree-vs-peaks}
\end{figure}

We now explore the challenge game that allows the verifier to compare competing claims by two provers.
During the game, the prover with the larger claimed ledger length $\ell$
acts as the challenger while the other acts as the responder.
The goal of the challenger is to identify the first point on the responder's alleged augmented dirty ledger that disagrees with his own ledger.
The challenge game consists of two phases:
During the first \emph{zooming phase}, the challenger reduces his search of the first point of disagreement
to a single tree within the responder's MMR. 
After this first phase is completed, the second phase consists of either the two parties playing a \emph{bisection game} (\cf Section~\ref{sec:construction-overview}, and for a more detailed description \ifappendices Appendix~\ref{sec:bisection-game}\else \cite[Appendix B]{full-version}\fi) or the challenger going into a \emph{suffix monologue}.

\noindent
\textbf{Zooming phase.}
To narrow his search down to a single tree, the challenger first calls Alg.~\ref{alg.peaks.vs.peaks} to identify the earliest peak
among the responder's peaks that disagrees with his own peaks.
Alg.~\ref{alg.peaks.vs.peaks} iterates over the responder's peaks (Alg.~\ref{alg.peaks.vs.peaks} Line~\ref{alg.peaks.vs.peaks.iterate}) until the challenger finds a peak $\mroot^*_i$ among those returned by the responder, that is different from the corresponding root $\left<\dtreesp_i\right>$ in his own peaks.
If the number of leaves under both peaks are the same, the challenger plays a bisection game on the Merkle trees whose roots are $\left<\dtreesp_i\right>$ and $\mroot^*_i$ (Alg.~\ref{alg.peaks.vs.peaks} Line~\ref{alg.peaks.vs.peaks.challenge}).
Otherwise, if the number of leaves under $\mroot^*_i$ is smaller than the number of leaves under $\left<\dtreesp_i\right>$,
then, all the alleged data within the responder's remaining peaks lies under the $i^\text{th}$ peak of the challenger
(see Figure~\ref{fig.tree-vs-peaks}).
The challenger has now reduced his search to his own $i^\text{th}$ tree
and can compare it against the responder's remaining peaks. This is done by calling
Alg.~\ref{alg.challenger.tree.vs.peak} on the remaining peaks of the responder (Alg.~\ref{alg.peaks.vs.peaks} Line~\ref{alg.peaks.vs.peaks.subtree}).

\begin{algorithm}[t]
  \caption{The algorithm run by the challenger to identify the first peak in the responder's MMR that is different from that of the challenger. The variables $\challenger{\dtreesp}$ and $\challenger{\overline{\ell}}$ denote the challenger's sequence of Merkle trees and a sequence of their respective sizes, whereas $\responder{\mroot^*}$ and $\responder{\overline{\ell}^*}$ denote the responder's sequence of peaks and the corresponding number of leaves respectively.  The algorithm $\textsc{BisectionGame}$ initiates a bisection game between the challenger's tree and the responder's alleged tree with the same size.}
  \label{alg.peaks.vs.peaks}
  \begin{algorithmic}[1]\small
    \Function{\sc \challenger{PeaksVsPeaks}}{$\challenger{\dtreesp}, \challenger{\overline{\ell}}, \responder{\mroot^*}, \responder{\overline{\ell}^*}$}
        \For{$i = 0$ to $|\responder{\mroot^*}| - 1$}\label{alg.peaks.vs.peaks.iterate}
            \If{$\challenger{\overline{\ell}}[i] \neq \responder{\overline{\ell}^*}[i]$}
                \State \Return $\challenger{\textsc{TreeVsPeaks}}(\challenger{\dtreesp}[i], \responder{\mroot^*}[i{:}], \responder{\overline{\ell}^*}[i{:}])$ \label{alg.peaks.vs.peaks.subtree}
            \EndIf
            \If{$\challenger{\dtreesp}[i]\textrm{.root} \neq \responder{\mroot^*}[i]$}
                \State \Return $\challenger{\textsc{BisectionGame}}(\challenger{\dtreesp}[i], \challenger{\overline{\ell}}[i])$ \label{alg.peaks.vs.peaks.challenge}
            \EndIf
        \EndFor
    \EndFunction
  \end{algorithmic}
\end{algorithm}

\begin{algorithm}[t]
  \caption{The algorithm run by the challenger to identify the first subtree, under one of the challenger's larger Merkle trees, that is different from the responder's peak. The variable $\challenger{\dtreesp}$ denotes the challenger's larger Merkle tree whereas $\responder{\mroot^*}$ and $\responder{\overline{\ell}^*}$ denote the responder's sequence of peaks (with some of the first elements chopped off during the recursion) and the corresponding number of leaves respectively.}
  \label{alg.challenger.tree.vs.peak}
  \begin{algorithmic}[1]\small
    \Function{\sc \challenger{TreeVsPeaks}}{\challenger{\dtreesp}, \responder{$\mroot^*$}, \responder{$\overline{\ell}^*$}}
         \Assert{\textrm{\challenger{\dtreesp}.size} > \sum_{\ell^* \in \overline{\ell}^*} \responder{\ell^*}}

         \If{$|\responder{\mroot^*}| = 0$}
             \State\Return\Comment{Done: the MMRs are \wf.} \label{alg.challenger.tree.vs.peak.no.bisection}
         \EndIf
         \Let{(\responder{\textrm{peak}}{:}\responder{\textrm{peaks}})}{\responder{\mroot^*}}
         \Let{(\responder{\textrm{reSize}}{:}\responder{\textrm{reSizes}})}{\responder{\overline{\ell}^*}}
         \If{$\lfloor\frac{\challenger{\dtreesp}\textrm{.size}}{2}\rfloor > \responder{\textrm{reSize}}$} \label{alg.challenger.tree.vs.peak.1}
             \State $\challenger{\textsc{TreeVsPeaks}}(\challenger{\dtreesp}\textrm{.left}, \responder{\mroot^*}, \responder{\overline{\ell}^*})$ \label{alg.challenger.tree.vs.peak.2}
         \ElsIf{$\textrm{\challenger{tree}.left.root} = \textrm{\responder{peak}}$} \label{alg.challenger.tree.vs.peak.3}
             \State $\textsc{\challenger{TreeVsPeaks}}(\challenger{\dtreesp}\textrm{.right}, \responder{\textrm{peaks}}, \responder{\textrm{reSizes}})$ \label{alg.challenger.tree.vs.peak.4}
         \Else
             \State \textsc{\challenger{BisectionGame}}$(\challenger{\dtreesp}, \responder{\textrm{reSize}})$ \label{alg.challenger.tree.vs.peak.5}
         \EndIf
    \EndFunction
  \end{algorithmic}
\end{algorithm}

\begin{algorithm}[t]
  \caption{The algorithm ran by the responder to reply to the challenger's queries while the challenger tries to identify the first point of disagreement against the responder's MMR. The variable $\responder{\ledgeraug}$ denotes the responder's augmented dirty ledger. The algorithm $\textsc{MakeMMR}$ returns the MMR based on the given augmented dirty ledger.}
  \label{alg.respond}
  \begin{algorithmic}[1]\small
    \Function{\sc \responder{Respond}}{\responder{\ledgeraug}}
        \Let{\rm \responder{\textrm{trees}}}{\textsc{MakeMMR}(\responder{\ledgeraug})}
        \Let{\responder{\textrm{peaks}}}{\{\textrm{\responder{tree}.root}: \textrm{\responder{tree}} \in \responder{\textrm{trees}}\}}
        \Send{\textrm{\responder{peaks}}}{\challenger{\textsc{Challenger}}}
        \Receive{\challenger{\textrm{pNum}}}{\challenger{\textsc{Challenger}}}
        \Let{\responder{\textrm{tree}}}{\responder{\textrm{trees}}[\challenger{\textrm{pNum}}]}\Comment{Enter a particular Merkle Tree}
        \Let{\challenger{\mathrm{loc}}}{\bot}
        \While{$\responder{\textrm{tree}}\textrm{.size} > 1$}
            \Send{(\textrm{\responder{tree}.left}, \textrm{\responder{tree}.right})}{\challenger{\textsc{Challenger}}}
            \Receive{\challenger{\textrm{dir}}}{\challenger{\textsc{Challenger}}}
            \If{$\challenger{\textrm{dir}} = 0$}
                \Let{\responder{\textrm{tree}}}{\responder{\textrm{tree}}\textrm{.left}}
            \Else
                \Let{\responder{\textrm{tree}}}{\responder{\textrm{tree}}\textrm{.right}}
            \EndIf
            \Let{\challenger{\textrm{loc}}}{\challenger{\textrm{loc}} \concat \challenger{\textrm{dir}}}
        \EndWhile
        \Send{\responder{\textrm{dirtyLedger}}[\challenger{\textrm{loc}}]}{\challenger{\textsc{Challenger}}}
    \EndFunction
    \end{algorithmic}
\end{algorithm}

Alg.~\ref{alg.challenger.tree.vs.peak} narrows the search for the first point of disagreement to one of responder's peaks, so that $\verifier$ can compare the two trees using a bisection game.
Consider the responder's remaining peaks overlayed onto the challenger's tree $\dtreesp_i$ (dashed lines in Figure~\ref{fig.tree-vs-peaks}).
They correspond to certain inner nodes within $\dtreesp_i$ (black, red, and red subtrees at the top of Figure~\ref{fig.tree-vs-peaks}).
Alg.~\ref{alg.challenger.tree.vs.peak} locates the first such inner node that disagrees with the responder's corresponding peak
(the left-most red subtree in Figure~\ref{fig.tree-vs-peaks}).
Finally, at this point, the challenger plays the bisection game on the sub-trees under this inner node and the responder's currently inspected root (Alg.~\ref{alg.challenger.tree.vs.peak} Line~\ref{alg.challenger.tree.vs.peak.5}). 
After the bisection game, either the challenger or the responder is declared the winner and the other one is declared the loser.

\noindent
\textbf{Suffix monologue.}
When the MMRs are well-formed,
there is no first point of disagreement between the two alleged augmented dirty ledgers, and the ledgers form a prefix of one another.
In that case, the provers will not enter into a bisection game,
and it is the challenger's turn to present his augmented dirty ledger entries extending the responder's ledger with size $\ell^*$. 
Specifically,
the challenger presents the \emph{suffix} $\ledgeraug[\ell^*{:}\min(\ell,\ell^* + \psi)]$ and the verifier checks the transitions within this suffix.
Concretely, for every consecutive $(\tx_j,\left<\st_j\right>)$ and
$(\tx_{j+1},\left<\st_{j+1}\right>)$, for $\ell^* \leq j < \min(\ell,\ell^* + \psi)$,
the verifier checks the inclusion of $\tx_j$ and $\tx_{j+1}$ in the header chain as before, and verifies that the state $\left<\st_{j+1}\right>$ has been computed correctly using $\transition$.
The verifier also checks the transition from $\ledgeraug^*[\ell^*-1]=\ledgeraug^*[-1]$, \ie, the responder's last entry, to $\ledgeraug[\ell^*]$, \ie, the first entry in the challenger's suffix, since the challenger, by starting the suffix monologue, claims that his augmented dirty ledger is a suffix of the responder's.

The parameter $\psi$ is a constant selected in accordance with the chain growth and liveness parameters of the blockchain (\cf \ifappendices Appendix~\ref{sec:generalize}\else\cite[Appendix E]{full-version}\fi).
The bound $\psi$ on the number of transitions to check prevents the suffix monologue from violating succinctness.
By the Common Prefix property~\cite{backbone}, discrepancy in the lengths of two honest provers' ledgers is bounded when there is an upper bound on the chain growth rate, which is the case for our protocols of interest (\cf the ledger Lipschitz property in \ifappendices Appendix~\ref{sec:consensus-protocol}\else\cite[Appendix E.1]{full-version}\fi).
Hence, if the challenger presents $\psi$ or more extra entries with consecutive transactions and correct state transitions,
then the responder is declared a loser, as he presented too short a ledger to possibly be honest.
In other words, if an adversarial responder presents a much shorter ledger, then the honest challenger sends $\psi$ entries, proving to the verifier that the adversary's ledger is too short.
On the contrary, an adversarial challenger cannot present a \wf ledger much longer than an honest responder's ledger, without breaking the underlying consensus protocol.
If the challenger fails to present a \wf suffix, then the responder is declared the winner,
while the challenger is declared the loser.
Otherwise, if the suffix presented is \wf and has length less than $\psi$, then \emph{both} provers are declared winners of
the challenge game.
Unlike the bisection game, at the end of the suffix monologue,
\emph{both} the challenger and the responder can win.

\subsection{Multiparty Tournaments}
\label{sec:tournament}
Upon joining the network, the verifier $\verifier$ contacts a subset of all available provers\footnote{
For instance, according to \url{https://github.com/bitcoin/bitcoin/blob/master/doc/reduce-traffic.md}, Bitcoin makes $8$ outbound full-relay connections.}
for queries.
If all of the responses are the same, $\verifier$ accepts the response as the correct answer.
If it receives different responses, $\verifier$ arbitrates a \emph{tournament}
among the provers that responded.
The tournament's purpose is to select a prover whose latest claimed state
is 
as up-to-date as the state obtained by applying the transition function iteratively on
the transactions in $\ledger^\cap$.

Suppose $\verifier$ hears back from $n$ provers.
Before the tournament, $\verifier$ orders the $n$ provers into a sequence $\prover_1, \prover_2, \ldots, \prover_n$ in an increasing order of their (claimed) augmented dirty ledger sizes.
This sequence dictates the order in which the provers play the bisection games.
Then, $\verifier$ starts the tournament that consists of $n$ steps (\cf \ifappendices Appendix~\ref{sec:bisection-game} \else\cite[Appendix B]{full-version} \fi for the algorithm run by $\verifier$).
Before the first step, it initializes the set $\mathcal{S} = \emptyset$.
At the end of each step $t$, $\mathcal{S}$ contains the provers that have engaged in at least one challenge game, and have not lost any by step $t$.

The tournament starts with a challenge game between $\prover_1$ and $\prover_2$, during which $\prover_1$ with the larger alleged augmented dirty ledger challenges $\prover_2$.
The winners are added to $\mathcal{S}$ and the tournament moves to the second step.
At each step of the tournament, the set $\mathcal{S}$ is updated to contain the winners so far.
Players may be removed from the set $\mathcal{S}$ if they lose, and new winners can be added
to $\mathcal{S}$ as they win. Each player $\prover$ is considered in order.
Let $\largest$ denote the prover in $\mathcal{S}$ that claims to have the largest augmented dirty ledger at a given step $i$.
The prover among $\{\prover,\largest\}$ with the larger alleged augmented dirty ledger challenges the other prover. 
Depending on the outcome of the challenge game, there are three cases:

\noindent
1. If both provers win, $\prover$ is added to $\mathcal{S}$ and the tournament moves to step $i+1$. 

\noindent
2. If $\prover$ loses, the tournament directly moves to step $i+1$ and $\mathcal{S}$ stays the same. 

\noindent
3. If $\prover$ wins and $\largest$ loses, $\largest$ is removed from $\mathcal{S}$.
Then, $\prover$ challenges the new $\largest$, the prover with the largest alleged size among those remaining in the set $\mathcal{S}$ (or vice versa).
This case is repeated until either one of cases (1) or (2) happens, or there are no provers left in $\mathcal{S}$. 
If the latter happens, $\prover$ is added to $\mathcal{S}$ and the tournament moves to step $i+1$. 

The procedure above is repeated until the end of step $n-1$, after which, $\largest$ wins the tournament.
Then, the verifier accepts the state commitment of $\largest$ among those remaining in $\mathcal{S}$ as the correct state.

The tournament consists of $\mathcal{O}(n)$ bisection games and has an $\mathcal{O}(n)$ running time\footnote{In contrast, the playoff in \cite[Appendix G]{refereed-computation} consists of $\mathcal{O}(n^2)$ games and has an $\mathcal{O}(n)$ running time due to games happening in parallel.}. 
The reason is that, after each game,
one party is eliminated from the winners, either by not being added to $\mathcal{S}$, or by being
removed from $\mathcal{S}$, and every party can be eliminated at most once.
Parallelizing the tournaments can also help make the runtime sublinear in the number of provers.

\section{System Considerations}
\label{sec:implementation}
\subsection{Running Time of Bisection Games}
\label{sec:implementation-results}
\noindent\textbf{Implementation and experimental setup.} We report on a prototype implementation of the prover and the verifier in 1000 lines of Go\footnote{
\ifnonanon
The code is open source under MIT license, and is available at \url{https://github.com/yangl1996/super-light-client}.
\fi
\ifanon
Anonymized source code repository is available at\\ https://anonymous.4open.science/r/light-client-lazy-blockchain/.
\fi
}, and set up 17 provers on AWS \texttt{r5.xlarge} instances distributed across 17 data centers around the globe. 
All provers have ledgers of the same size, but only one of the provers has the correct augmented dirty ledger. Ledgers of the remaining 16 provers differ from the correct one at a randomly selected position. 
To simplify the prototype, we do not implement a state transition function $\delta$, \eg, the Ethereum Virtual Machine. (We explore the cost of proving state transitions in the next subsection.) 
Instead, all transactions and state commitments (\cf \ifappendices Algorithm~\ref{alg.verifier}\else \cite[Algorithm 4]{full-version}\fi) are random byte strings. 
We hard code the prover and the verifier such that a state commitment is valid to the verifier only if the prover has the correct commitment in its augmented dirty ledger. 
We run the verifier under a residential internet connection with $300$ Mbps downlink and $10$ Mbps uplink bandwidth. 
The verifier connects to all of the 17 provers, and arbitrates the tournament (\cf \ifappendices Algorithm~\ref{alg.tournament}\else \cite[Algorithm 6]{full-version}\fi) among them.

\noindent\textbf{Verification latency.} We first explore the duration of the tournament. So far, we have only discussed \emph{binary} Merkle trees for use in our bisection game, but we can consider $m$-ary Merkle trees more generally. Increasing the degree $m$ of a dirty tree flattens the tree, resulting in fewer rounds of interactivity in the bisection game. On the other hand, opening an inner node now requires sending $m$ children, resulting in higher bandwidth usage. \ifappendices Appendix~\ref{sec:latency-bandwidth-tradeoff} \else\cite[Appendix C]{full-version} \fi models the latency-bandwidth trade-off realized by tuning $m$. Here, we experimentally explore the trade-off. 
For this experiment, we fix the ledger size to $10$ million transactions and vary $m$. 
For each configuration, we run $10$ tournaments and measure the average and the standard deviation of the duration (Figure~\ref{fig:implementation}a). 
When $m=300$, the duration reaches the lowest, at $18.37$s. 
Most blockchains confirm transactions with a latency of tens of seconds. 
In comparison, the tournament adds little extra time on top of the end-to-end latency that a light client perceives for new transactions. 
Under this configuration, a tournament consists of $16$ games, and each game involves $4$ rounds of interaction between the verifier and the provers. 
On average, each round of interaction lasts for $0.287$s. 
In comparison, the average network round-trip time (RTT) from the verifier to the provers is $0.132$s.

As we vary $m$, tournaments take longer to finish. 
Specifically, a smaller $m$ makes each Merkle tree opening smaller, but increases the number of openings per game, making network propagation latency the main bottleneck of the game. 
In contrast, a larger $m$ makes bandwidth the main bottleneck. 
As we increase $m$ to $10000$, each opening of the Merkle tree becomes large enough such that message transmission is affected by the fluctuation of the internet bandwidth, causing a higher variation in the tournament duration.

\noindent\textbf{Scalability.} We now evaluate how our scheme scales as the ledger size grows. 
We fix the tree degree $m$ to $300$, vary the size of the ledger from $1000$ to $100$ million transactions and report the mean and the standard deviation over $10$ tournaments for each datapoint (Figure~\ref{fig:implementation}b). 
As we increase the ledger size by $10^5\times$, the tournament duration grows from $13$s to $26$s, an increase of only $2\times$. This is because the number of interactions in a game is equal to the depth of the Merkle tree, which grows logarithmically with the ledger size.

\begin{figure*}[t]
\minipage[t]{0.33\textwidth}
  \centering
    \centering
    \includegraphics[width=\linewidth]{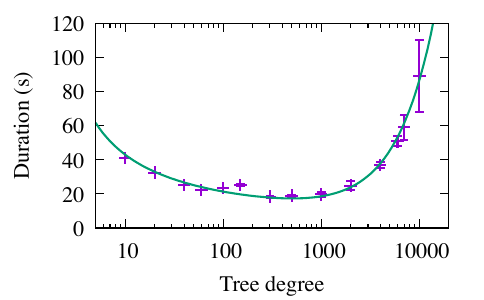}
    \label{fig:duration-exp}
\endminipage\hfill
\minipage[t]{0.33\textwidth}
   \centering
    \includegraphics[width=\linewidth]{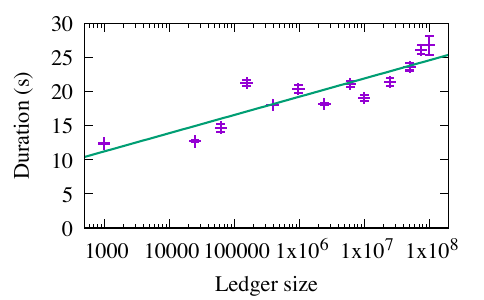}
    \label{fig:scale-exp}
\endminipage\hfill
\minipage[t]{0.33\textwidth}%
  \centering
    \includegraphics[width=\linewidth]{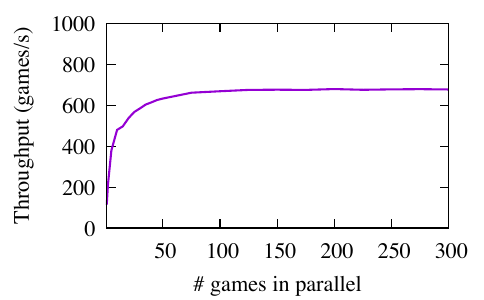}
    \label{fig:throughput-exp}
\endminipage
\caption{
  \label{fig:implementation}
  In (a) and (b), we measure the time to complete a tournament of 17 geo-distributed provers. Error bars show the standard deviation. Solid lines show the trend.
(a) Time when varying the tree degree $m$.
(b) Time when varying the ledger size $L$.
(c) Throughput of games with two provers and one verifier co-located in a data center. The verifier initiates games with variable parallelism to saturate the provers.}
\end{figure*}

\noindent\textbf{Prover throughput.} Finally, we evaluate the throughput of a prover participating in many games (Figure~\ref{fig:implementation}c). 
To minimize the network influence, we run two provers in the same datacenter. 
Each prover has a ledger of $10$M transactions, which differ at a random location. 
We start a verifier in the same datacenter, which initiates a variable number of bisection games between the two provers in parallel. 
We gradually ramp up the parallelism to generate enough load and saturate the provers. 
During the process, the achieved throughput first increases due to the increased load, and then stays flat because the provers have saturated their computational resources. 
Specifically, a prover running our prototype can support a throughput of $680$ games/second using its $4$ virtual CPU cores. 
We expect the throughput to scale with the available CPU cores and disk IO.

\subsection{Proving State Transitions}
\label{sec:state-transition}

We next discuss the cost of proving and verifying state transitions, which happens when a bisection game ends with a point of disagreement. For concreteness, we use the Ethereum Virtual Machine (EVM) as an example, but the discussion applies to other state machines.

\noindent\textbf{Ledger granularity.} So far, we have assumed that the bisection game runs with a granularity of transactions.
While the proof size is small in this natural configuration (less than 20 state elements on average for recent Ethereum transactions), an honest prover needs to maintain snapshots of ledger states as of every historical transaction to generate such proofs for arbitrary points in the dirty ledger. 
Maintaining these snapshots can be costly, since even blockchain nodes in ``archival'' mode---ones that store the most historical data---do not keep such fine-grained information.
We propose that real-world deployments use a granularity of \emph{blocks}, \ie, treating an entire block as a single entry in the dirty ledger. To generate state transition proofs, provers only need access to state snapshots as of each block, which
are readily available from archival nodes. 
A direct benefit is that provers can be implemented using public RPC APIs provided by EVMs, namely the \texttt{debug\_traceBlock} RPC which lists all state elements read/written by a block. 
This allows provers to make use of existing archival nodes and eliminates the need to maintain separate state snapshots.

An apparent downside of this coarse-grained approach is that state transition proofs are larger, consisting of state elements touched by an entire block plus the relevant Merkle proofs. 
However, our experiments show that such proofs are less than 10 MB for recent Ethereum blocks, which can be downloaded within 0.3 seconds with a 300 Mbps internet connection used in previous experiments, adding little to the seconds-long duration of the bisection game.

\noindent\textbf{Verification costs.} Upon downloading a state transition proof (consisting of the state elements touched by the transactions within the block at the first point of disagreement and their Merkle proofs), the verifier needs to check the proof by executing the transactions locally. We implemented a verifier by forking \texttt{foundry}\footnote{https://github.com/foundry-rs/foundry}, an EVM implementation in Rust, and used it to benchmark verification costs on commodity mobile hardware.

Experimental results show that verifying state transition of recent Ethereum blocks takes less than 0.8 seconds per block on average on a M1 MacBook Pro and consumes 2.5 Joules of energy\footnote{Measured using \texttt{powermetrics} built into macOS.}. The same verification takes less than 1.5 seconds on an underpowered tablet with a 2-core Intel m5 low-power CPU manufactured in 2015.
In comparison, a \emph{full node} syncing with the latest EVM state from genesis has to execute all historical transactions, which takes at least \emph{a full day} on a workstation with 32 GB of RAM and a 4-core Intel Xeon CPU, and uses 540 GB of SSD. Our construction saves significant time, computation, and storage because the light client only needs to locally execute the one block at the first point of disagreement.

\section{Analysis}
\label{sec:analysis}
We state our security theorems informally in this section.
For the rigorous theorem statements and proofs, see \ifappendices Appendix~\ref{sec:proofs}\else\cite[Appendix G]{full-version}\fi.
We begin by defining \emph{State Security}, which captures the verifier's goal of obtaining a state
consistent with the rest of the network: There is no disagreement with the other honest nodes
(safety), and the state downloaded is recent (liveness).

\begin{definition}[State Security]
\label{def:state-security}
  An interactive Prover--Verifier protocol $(P, V)$ is \emph{state secure} with safety parameter $\nu$, if
  there exists a ledger $\ledger$ such that the state commitment $\stc$ obtained by the verifier
  at the end of the protocol execution at round $r$ satisfies
  $\left<\transition^*(\st_0, \ledger)\right> = \stc$, and for all rounds $r' \geq r + \nu$: $\ledger$ is a \emph{prefix} of $\ledger^\cup_{r'}$ (safety) and $\ledger^\cap_r$ is a \emph{prefix} of $\ledger$ (liveness).

\end{definition}

The theorems for succinctness and security of the protocol are given below.
Security consists of two components: completeness and soundness.

\begin{lemma}[Succinctness (Informal)]
\label{thm:succinctness-informal}
The challenge game invoked at round $r$ with sizes $\ell_1$ and $\ell_2>\ell_1$ ends in $\mathcal{O}(\log(\ell_1))$ rounds of communication and
has, considered in isolation, a total communication complexity of $\mathcal{O}(\log r)$.
\end{lemma}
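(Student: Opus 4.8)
The plan is to split one invocation of the challenge game (Algorithm~\ref{alg.challenge}) into three successive phases --- the \emph{zooming phase} (Algorithms~\ref{alg.peaks.vs.peaks} and~\ref{alg.challenger.tree.vs.peak}), followed by exactly one of a \emph{bisection game} (Algorithm~\ref{alg.tree.vs.tree}) or a \emph{suffix monologue} --- and to bound, separately for each phase, the number of interaction rounds and the number of bits exchanged, then add the three contributions.

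The first thing I would establish is that the zooming phase costs nothing beyond a one-time preamble. At the outset the responder transmits its peak list $\mroot^*$ together with the peak sizes; since a Merkle Mountain Range over $\ell_1$ leaves has at most $\lceil\log\ell_1\rceil$ peaks, this preamble is $O(\log\ell_1)$ hashes. Every subsequent comparison performed by Algorithms~\ref{alg.peaks.vs.peaks} and~\ref{alg.challenger.tree.vs.peak} is between that already-received data and the challenger's \emph{own} MMR $\dtreesp$, so these two algorithms are pure local computation and add zero rounds and zero bits. The crucial step is then to show that whenever the protocol does enter a bisection game, it does so on a Merkle tree with at most $\ell_1$ leaves: in one branch that tree is one of the responder's peaks, and in the other it is a subtree of $\dtreesp_i$ whose leaf count was driven down to equal a responder peak --- and every responder peak has at most $\ell_1$ leaves. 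Consequently the tree has height at most $\lceil\log\ell_1\rceil$, Algorithm~\ref{alg.tree.vs.tree} descends exactly one level per exchange (one index bit up, a constant number of hashes down), and the game runs in $O(\log\ell_1)$ rounds and $O(\log\ell_1)$ bits; the terminal step --- opening the disputed leaf $j$ (and leaf $j-1$ with a Merkle proof), plus one consensus-oracle and one execution-oracle query, which are single-round by definition (Sections~\ref{sec:consensus-oracle} and~\ref{sec:execution-oracle}) --- adds only $O(1)$ rounds and $O(\log\ell_1 + f(r) + g(r))$ bits.

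For the suffix-monologue branch I would observe that the challenger reveals the \emph{truncated} suffix $\ledgeraug[\ell^*{:}\min(\ell,\ell^*+\alpha(u+\nu))]$ --- at most $\alpha(u+\nu)$ entries, each of constant size --- after which the verifier performs, for each consecutive pair, one consensus-oracle and one execution-oracle query; issuing all of these in parallel gives $O(1)$ rounds and $O(\alpha(u+\nu)(f(r)+g(r)))$ bits. Summing the three phases yields $O(\log\ell_1)$ rounds overall, and since $\alpha$, $u$, $\nu$ are constants this coincides with $O(\log(\ell_1+\alpha(u+\nu)))$; combining with $\ell_1 \le |\ledger| = O(r)$ from the Ledger Lipschitz property (a claimed length exceeding $\alpha r$ being rejected outright), the messages produced by one game in isolation --- i.e.\ not counting the oracle sub-protocols --- amount to $O(\log r)$ bits. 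I expect the only non-routine point to be the claim that the bisection game is always played on a tree of at most $\ell_1$ leaves: this is exactly what keeps the round count logarithmic in the \emph{smaller} ledger rather than the larger one, and proving it requires carefully tracking Fact~\ref{fact.subsumption} through the zooming phase, checking that each step of Algorithms~\ref{alg.peaks.vs.peaks} and~\ref{alg.challenger.tree.vs.peak} either discards a tail of small peaks that provably fits inside the current subtree or moves strictly downward without enlarging the candidate tree.
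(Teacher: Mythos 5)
Your proposal is correct and follows essentially the same route as the paper's proof of Theorem~\ref{thm:succinctness}: the zooming phase is charged zero communication, the bisection branch is bounded via the argument of Lemma~\ref{lem:bisection-succinctness} on a tree of at most $\ell_1$ leaves, the suffix monologue is bounded by the $\alpha(u+\nu)$ truncation plus one oracle query per consecutive pair, and the conversion to $\mathcal{O}(\log r)$ uses the Lipschitz bound. The only additions are cosmetic: you explicitly justify the claim that the bisection tree has at most $\ell_1$ leaves (which the paper simply asserts), and you gloss the $\ell_1 = \mathcal{O}(r)$ step with an ``oversized claims rejected'' rule where the paper instead invokes $|\ledger^{\cup}_r| < \alpha r$ for honest ledgers.
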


\begin{theorem}[Completeness (Informal)]
\label{thm:completeness-informal}
The honest responder wins the challenge game against any PPT adversarial challenger.
\end{theorem}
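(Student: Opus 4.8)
The plan is to walk through the verifier's decision procedure (Figure~\ref{alg.verifier}) and show that, whatever a PPT adversarial challenger does, none of the conditions under which the verifier would declare the challenger the winner can fail when it plays against an honest responder; and that the only branch of the challenge game that does not end inside a bisection game---the suffix monologue---is one in which the honest responder takes no action and therefore cannot lose either. Together these give that the honest responder always wins (and, in the suffix monologue, possibly the challenger wins too, which is still a win for the responder). Throughout I use the standing hypotheses of the formal statement: the \coracle and \eoracle are complete, and the ledger is safe.

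First I would record the invariant that an honest prover's augmented dirty ledger $\ledgeraug$ and the MMR $\dtreesp$ it builds at the start of the game are \wf in the sense of Definition~\ref{def:safe}. Consequently every message the honest responder sends is (i)~a genuine peak, inner node, or leaf of $\dtreesp$, (ii)~a genuine Merkle authentication path inside $\dtreesp$---or inside the immediately preceding peak, in the corner case where the pinpointed index is the first leaf of a sub-tree---or (iii)~a state-transition proof $\pi$ produced by the \eoracle's proof-computing machine $M$. All of these are computable in time polynomial in the ledger length, so the honest responder never times out, and all of them have exactly the syntactic shape the verifier expects; this settles the timeout and syntactic-validity conditions. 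The Merkle-consistency condition $h=H(h_l\concat h_r)$ for a revealed parent/children triple and the validity of the Merkle proof for leaf $j-1$ both hold because $\dtreesp$ is the output of $\textsc{MakeMerkleTree}$/$\textsc{MakeMMR}$, in which every internal node is literally the hash of its two children and every root-to-leaf path is a genuine authentication path. Finally, if the pinpointed index is $j=0$, \wfc forces $\ledgeraug[0]=(\epsilon,\genesisstatec)$ and $\tx_0=\epsilon$, which is precisely the genesis check the verifier performs.

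It remains to discharge the two oracle-backed conditions for $j\geq 1$, where the honest responder reveals its genuine leaves $\ledgeraug[j-1]=(\tx_{j-1},\stc_{j-1})$ and $\ledgeraug[j]=(\tx_j,\stc_j)$. For condition~\ref{condition:challenger-consensus}: by \wfc, $(\tx_{j-1},\tx_j)$ is a subarray of the honest responder's dirty ledger $\ledger$, which by ledger \emph{safety} (and its monotonicity in the round) is a prefix of $\ledgercup$ at the later round at which the \coracle is consulted; hence $(\tx_{j-1},\tx_j)\mid\ledgercup$ at that round, and \emph{completeness} of the \coracle makes the query return $\true$---an honest prover can always supply the witness, namely the transaction positions, their Merkle proofs into the transaction roots, and the block positions on the canonical header chain. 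For condition~\ref{condition:challenger-execution}: \wfc supplies a state $\st_{j-1}=\transition^*(\genesisstate,\ledger[{:}j-1])$ with $\stc_{j-1}=\left<\st_{j-1}\right>$ and $\stc_j=\left<\transition(\st_{j-1},\tx_j)\right>$, and \emph{completeness} of the \eoracle guarantees that $\pi=M(\st_{j-1},\tx_j)$ satisfies $\left<\transition\right>(\stc_{j-1},\tx_j,\pi)=\stc_j$, so the verifier accepts. Hence no challenger-winning condition can fail, and since the bisection game has bounded length (the verifier forwards only logarithmically many queries), it terminates with the responder declared the winner. In the suffix-monologue branch the honest responder relies only on its one earlier, correctly-formed message---the opening of $\ledgeraug^{*}[-1]$ together with its inclusion proof---and says nothing thereafter, so it is never the loser; at worst both provers are declared winners.

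I expect the only real friction to be two bits of bookkeeping. The first is pinning down that the honest responder's dirty ledger at the round it committed to its MMR is still a prefix of $\ledgercup$ at the (strictly later) round the \coracle is queried---which is exactly the monotonicity built into ledger \emph{safety}, so it reduces to invoking \emph{safety} with the two relevant rounds. The second is the corner case in which the pinpointed leaf $j$ is the first leaf of a sub-tree of the MMR, so that $\ledgeraug[j-1]$ and its inclusion proof must be opened against the previous peak rather than the current one; this changes the wording of the invariant above but not its substance, since that entry is still a genuine \wf entry of the honest responder's structure.
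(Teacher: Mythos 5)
Your proposal is correct and follows essentially the same route as the paper: the paper factors the argument into a Bisection Completeness lemma (conditions (1)--(7) of Algorithm~\ref{alg.verifier} cannot fail against a \wf responder, using ledger safety plus completeness of the \coracle and \eoracle) and then observes that the suffix-monologue branch lets the responder win automatically, which is exactly your two-part structure with the lemma inlined. Your extra bookkeeping (the previous-peak corner case and the round at which the oracles are consulted) is consistent with, and slightly more explicit than, the paper's treatment.
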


\begin{theorem}[Soundness (Informal)]
\label{thm:soundness-informal}
Let $H$ be a collision resistant hash function.
For all PPT adversarial responders $\mathcal{A}$, an honest challenger wins the challenge game against $\mathcal{A}$ with overwhelming probability in $\lambda$.
\end{theorem}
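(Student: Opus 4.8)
The plan is to reason conditionally on the event that no collision of $H$ occurs among the polynomially many hash values exchanged in the game, and to show that under this event an honest challenger $\prover_h$ wins unless the adversary has broken the soundness of the \coracle (Definition~\ref{def:coracle-p}) or of the \eoracle (Definition~\ref{def:eoracle-p}); a union bound over the collision-resistance experiment and these two soundness experiments then yields the overwhelming-probability claim. First I would set the stage: $\prover_h$ holds a \wf augmented dirty ledger $\ledgeraug$ (Definition~\ref{def:safe}) of length $\ell$ with MMR peaks $\mroot$, and, being the challenger, $\ell \geq \ell^*$, where $\ell^*$, $\mroot^*$ and $\stc^* \neq \stc$ are the adversary's claims. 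An honest challenger never triggers a responder-victory condition of Figure~\ref{alg.verifier}, since it always answers with a valid single-bit query within a round; conversely, if the responder ever times out, sends a syntactically malformed message, opens an inner node $h$ into children $h_l, h_r$ with $H(h_l \concat h_r) \neq h$, or fails to supply the Merkle proof required for leaf $j-1$, the verifier immediately declares $\prover_h$ the winner. Hence from now on all responder messages may be taken to be well-formed and Merkle-consistent, so under the no-collision event the responder's answers are consistent with a \emph{unique} sequence of claimed leaves beneath its claimed peaks.

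The second step is navigation correctness. Walking along the path followed by Algorithms~\ref{alg.tree.vs.tree}, \ref{alg.peaks.vs.peaks} and~\ref{alg.challenger.tree.vs.peak}, I would prove by induction the invariant that the subtree (respectively, the responder-peak range) currently inspected by the challenger contains the first index $j$ at which $\ledgeraug$ and the responder's claimed ledger $\ledgeraug^*$ disagree, whenever such an index exists. Fact~\ref{fact.subsumption} is what keeps the zooming phase well-defined: as soon as a responder peak is strictly shorter than the corresponding challenger peak, every later responder peak must fall inside the challenger's current tree, so Algorithm~\ref{alg.challenger.tree.vs.peak} can overlay them without ambiguity. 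If no index of disagreement exists, then (again under no collisions) $\ledgeraug^*$ is a genuine prefix of $\ledgeraug$ with $\ell^* < \ell$, the game enters the suffix monologue, and $\prover_h$ -- holding the genuine \wf ledger -- presents either its whole \wf suffix $\ledgeraug[\ell^*{:}]$, in which case both provers win, or, when $\ell - \ell^* > \alpha(u+\nu)$, its first $\alpha(u+\nu)$ suffix entries with consecutive transactions and correct state transitions, in which case the responder is declared the loser. In every branch $\prover_h$ wins.

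It remains to analyse the endgame, where the interaction reaches leaf $j$. Because $j$ is the \emph{first} disagreement and $\ledgeraug$ is \wf, and because the responder's Merkle proof ties its revealed leaf $j-1$ to the leaf actually committed under its peaks, leaf $j-1$ of the two ledgers coincides; hence the responder's claimed $\stc_{j-1}$ equals $\left<\st_{j-1}\right>$ for the genuine state $\st_{j-1} = \transition^*(\genesisstate, \ledger[{:}j-1])$, while the responder's revealed leaf $j$, written $(\tx_j^*, \stc_j^*)$, differs from the honest $(\tx_j, \left<\transition(\st_{j-1}, \tx_j)\right>)$: either $\tx_j^* \neq \tx_j$, or $\tx_j^* = \tx_j$ and $\stc_j^* \neq \left<\transition(\st_{j-1}, \tx_j)\right>$. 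In the first case, $(\tx_{j-1}, \tx_j) \mid \ledgercup_r$ and transactions are unique, so by Ledger Safety $(\tx_{j-1}, \tx_j^*) \nmid \ledgercup_{r+\nu}$; thus the verifier's consensus check (condition~\ref{condition:challenger-consensus}) must reject, for otherwise the tuple $(\tx_{j-1}, \tx_j^*, r)$ would witness a break of \coracle soundness, and $\prover_h$ wins. In the second case, any responder proof $\pi$ with $\left<\transition\right>(\stc_{j-1}, \tx_j, \pi) = \stc_j^*$ would make $(\st_{j-1}, \tx_j, \pi)$ a violation of \eoracle soundness (the reduction knows the witness $\st_{j-1}$ because $\prover_h$ is a full node), so the execution check (condition~\ref{condition:challenger-execution}) must reject and $\prover_h$ wins. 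The case $j = 0$ is immediate, since the verifier already knows the genesis entry $(\epsilon, \genesisstatec)$ and rejects any mismatch. Bounding the three bad events -- a hash collision, a \coracle soundness break, an \eoracle soundness break -- each by $\negl(\lambda)$, a union bound completes the proof.

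The step I expect to be the main obstacle is the navigation-correctness invariant of the second paragraph: making it fully rigorous across the MMR machinery -- the interplay of Algorithms~\ref{alg.peaks.vs.peaks} and~\ref{alg.challenger.tree.vs.peak} mediated by Fact~\ref{fact.subsumption} -- and, in particular, pinning down exactly when the protocol legitimately enters a suffix monologue rather than a bisection game and verifying that $\prover_h$ survives the suffix monologue in every case, including the boundary between presenting the full suffix and presenting exactly $\alpha(u+\nu)$ entries so that the responder is disqualified. The pure power-of-two bisection step is essentially the classical refereed-delegation argument~\cite{refereed-computation,arbitrum}; the novelty, and the care, lie entirely in the ragged-length bookkeeping.
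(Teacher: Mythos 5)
Your proposal is correct and follows essentially the same route as the paper's proof: navigate to the first point of disagreement (or the suffix monologue), bind the responder's revealed leaf $j-1$ (and, in the monologue case, its last entry) to the challenger's \wf ledger via collision resistance of the Merkle structure, and reduce any surviving disagreement at leaf $j$ to a soundness break of the \coracle or the \eoracle, finishing with a union bound. The paper merely packages the same argument more modularly -- an explicit Merkle-security proposition, a ``Bisection Pinpointing'' lemma, and concrete reduction adversaries instead of your global ``no-collision'' conditioning -- and, as you implicitly do when asserting the verifier accepts the \wf suffix, it also invokes oracle \emph{completeness} in the suffix-monologue branch.
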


\begin{theorem}[Tournament Runtime (Informal)]
\label{thm:tournament-runtime-informal}
Consider a tournament started at round $r$ with $n$ provers.
Given at least one honest prover, for any PPT adversary $\mathcal{A}$, the tournament ends in $\mathcal{O}(n\log r)$
rounds of communication and has, considered in isolation, a total communication complexity of
$\mathcal{O}(n\log r)$, with overwhelming probability in $\lambda$.
\end{theorem}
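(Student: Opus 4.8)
The plan is to multiply two bounds: the tournament runs only $\mathcal{O}(n)$ challenge games, and each game — considered in isolation — costs $\mathcal{O}(\log r)$ rounds and $\mathcal{O}(\log r)$ communication. The first bound is structural: Algorithm~\ref{alg.tournament} plays one ``outer'' game per step among its $n$ steps, and within a step it re-challenges only when the reigning champion $\largest$ \emph{loses}, in which case $\largest$ is permanently removed from $\mathcal{S}$; since every prover is considered exactly once and hence can be removed at most once, there are at most $n-1$ such re-challenges, so $\mathcal{O}(n)$ games overall. The per-game bound would invoke the Succinctness lemma (Lemma~\ref{thm:succinctness-informal}): a challenge game between claimed sizes $\ell_1 < \ell_2$ ends in $\mathcal{O}(\log(\ell_1 + \alpha(u+\nu)))$ rounds and $\mathcal{O}(\log \ell_1 + \alpha(u+\nu)(f(r)+g(r)))$ bits. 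It is worth recalling why the interactive cost is governed by the \emph{smaller} claim: the zooming phase (Algorithm~\ref{alg.peaks.vs.peaks}) interrogates only the responder's $\mathcal{O}(\log\ell_1)$ peaks; the descent Algorithm~\ref{alg.challenger.tree.vs.peak} makes down the \emph{challenger's} (possibly enormous) tree is purely local to the challenger and costs no rounds or messages; the bisection game (Algorithm~\ref{alg.tree.vs.tree}) that follows runs on subtrees with at most $\ell_1$ leaves by Fact~\ref{fact.subsumption}; and the suffix monologue, if reached, adds only $\mathcal{O}(\alpha(u+\nu))$ single-round oracle queries.

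The crux is to show that $\ell_1 = \mathcal{O}(r)$ in \emph{every} game of the tournament — \ie that a dishonest prover cannot prolong a game by announcing an astronomically long ledger. I would establish, by induction on the step $t$, the invariant that every prover currently in $\mathcal{S}$ has claimed size at most $B := |\ledgercup_r| + \alpha(u+\nu) + 1$, which is $\mathcal{O}(r)$ by Ledger Lipschitz. Three ingredients drive this. (i)~The provers are pre-sorted by claimed size, so when $\prover_t$ enters it first plays the champion $\largest$, whose claimed size is $\le$ that of $\prover_{t-1}$; hence the \emph{smaller} claimant in every game is the reigning champion, a member of $\mathcal{S}$. (ii)~Fix an honest prover $\prover_h$. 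By Completeness (Theorem~\ref{thm:completeness-informal}) it wins every game in which it is the responder, and by Soundness (Theorem~\ref{thm:soundness-informal}) every game in which it is the challenger, with overwhelming probability; so $\prover_h$ is never removed from $\mathcal{S}$. (iii)~By soundness of the consensus oracle (Definition~\ref{def:coracle-p}), no prover can exhibit a well-formed augmented dirty ledger (Definition~\ref{def:safe}) extending $\LOGdirty{r}{\prover_h}$ by more than $\alpha(u+\nu)$ entries, since every appended pair must survive a check $(\tx_{j-1},\tx_j)\mid\ledgercup$ that pins it onto the true longest honest ledger, and $|\ledgercup_{r+\nu}| - |\LOGdirty{r}{\prover_h}| < \alpha(u+\nu)$ by Safety, Liveness and Lipschitz. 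Combining: before step $h$ the invariant holds trivially because sorting forces all processed provers to claimed size $\le |\LOGdirty{r}{\prover_h}| \le |\ledgercup_r|$; at and after step $h$, any $\prover$ that gets added to $\mathcal{S}$ must have drawn with $\prover_h$ at some point (it cannot beat $\prover_h$, and it cannot be added by the ``$\mathcal{S}$ became empty'' rule since $\prover_h$ is always present), and such a draw arises only in a suffix monologue where $\prover$ validly presents its whole ledger as a length-$<\alpha(u+\nu)$ extension of $\prover_h$'s — claiming any more makes all $\alpha(u+\nu)$ checked suffix entries fail — so $\prover$'s claimed size is below $B$.

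Assembling everything yields $\mathcal{O}(n)$ games, each with $\ell_1 \le B = \mathcal{O}(r)$ and hence $\mathcal{O}(\log r)$ rounds and $\mathcal{O}(\log r)$ bits by Lemma~\ref{thm:succinctness-informal}, for $\mathcal{O}(n\log r)$ rounds and $\mathcal{O}(n\log r)$ communication in total ``in isolation'' — excluding the one-time header-chain download that instantiates the consensus oracle, and absorbing the polylogarithmic $f(r),g(r)$ and the constant $\alpha(u+\nu)$ as the informal Succinctness statement does (the exact bound additionally carries an $\mathcal{O}(n\,\alpha(u+\nu)(f(r)+g(r)))$ oracle-communication term). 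The $\negl(\lambda)$ failure probability aggregates the events that the adversary equivocates on a Merkle opening (ruled out by collision resistance of $H$) or has a wrong oracle answer accepted (ruled out by consensus- and execution-oracle soundness), both of which underlie ingredients (ii) and (iii). I expect the \textbf{main obstacle} to be exactly ingredient (iii) together with observation (i): a clean argument that over-claiming a ledger length cannot prolong a game — because the oversized tree is only walked locally by the challenger, and because the suffix-monologue cap, backed by consensus-oracle soundness, confines the champion's claimed size to $\mathcal{O}(r)$ throughout the tournament.
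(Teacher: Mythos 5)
Your proposal is correct in the same sense and along essentially the same lines as the paper's proof of Theorem~\ref{thm:tournament-runtime}: you count $\mathcal{O}(n)$ challenge games structurally, bound the champion's claimed size by $|\ledgercup_r|+\mathcal{O}(\alpha(u+\nu))$ using the suffix-monologue cap (Lemma~\ref{lem:monologue-succinctness}) together with the fact that the honest prover never leaves $\mathcal{S}$, and then apply the per-game Succinctness bound, exactly as the paper does. One caution: your pivotal inductive claim that ``any prover added to $\mathcal{S}$ must have drawn with $\prover_h$'' is the same assertion the paper makes (``$\largest$ \ldots must have at least once won the challenge game as a challenger against an honest responder''), and in both write-ups it leaves unaddressed the case of a prover that enters $\mathcal{S}$ by drawing with an \emph{adversarial} champion already larger than $\prover_h$, without ever facing $\prover_h$; closing that case (e.g., by observing that each such draw can inflate the champion's claimed size by at most $\alpha(u+\nu)-1$, so the bound degrades only to $|\ledgercup_r|+n\,\alpha(u+\nu)$, which still gives $\mathcal{O}(n\log r)$) is what would be needed to make either your argument or the paper's fully airtight, so on this point your attempt is neither weaker nor stronger than the paper's own proof.
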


The theorem below is a direct consequence of the above theorems.

\begin{theorem}[Security (Informal)]
\label{thm:security-informal}
Consider a tournament started at round $r$ with $n$ provers.
Given at least one honest prover, for any PPT adversary $\mathcal{A}$, the state commitment obtained by the prover at the end of the tournament satisfies \emph{State Security} with overwhelming probability in $\lambda$.
\end{theorem}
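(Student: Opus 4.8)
The plan is to derive State Security as a corollary of the Tournament Runtime theorem (Theorem~\ref{thm:tournament-runtime}), Completeness (Theorem~\ref{thm:completeness}), and Soundness (Theorem~\ref{thm:soundness}), combined with the ledger axioms (Safety, Liveness, Lipschitz). The runtime/communication bounds are already delivered verbatim by Theorem~\ref{thm:tournament-runtime}, so the only new content is showing that the state commitment $\stc$ accepted by the verifier at the end of the tournament admits a ledger $\ledger$ with $\left<\transition^*(\st_0,\ledger)\right> = \stc$ that is simultaneously a prefix of $\ledger^\cup_{r'}$ for all $r' \geq r+\nu$ and a suffix of $\ledger^\cap_r$. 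I would fix a PPT adversary $\mathcal{A}$ and argue that, except with negligible probability in $\lambda$, all of the following hold in the tournament execution.

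First I would establish that the winner $\largest$ of the tournament is a prover whose claimed augmented dirty ledger is \wf. Let $\prover_h$ be the guaranteed honest prover. By Completeness, $\prover_h$ never loses a challenge game (as responder it beats any adversarial challenger), so $\prover_h$ is never removed from $\mathcal{S}$ once added, and it is added the first time it is considered; hence $\mathcal{S} \neq \emptyset$ at the end, and the tournament winner exists. By Soundness, whenever an honest challenger faces an adversarial responder holding a non-\wf ledger, the honest challenger wins with overwhelming probability; I would take a union bound over the $\mathcal{O}(n)$ games to keep the total error negligible. Combining these: any prover that survives to the end either (i) is honest, or (ii) never had its \wfc successfully challenged — and I must argue that case (ii) forces well-formedness, since $\prover_h$, as the honest party, always challenges (or is challenged by) the current $\largest$ along the way, so a non-\wf survivor would at some point have played against an honest opponent with a longer-or-equal ledger and lost. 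The delicate point here is the length bookkeeping in the tournament: I need that $\largest$ at the end has claimed length at least $|\ledger^\cap_r|$ (so that liveness holds) — this follows because $\prover_h$ claims a ledger of length $\geq |\ledger^\cap_r|$ and is never eliminated, so the final $\largest$ has length $\geq |\LOGdirty{r}{\prover_h}| \geq |\ledger^\cap_r|$; and at most $|\ledger^\cup_{r}|$ up to the Lipschitz slack, because the suffix monologue truncates any honest-exceeding claims via the $\alpha(u+\nu)$ bound.

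Next I would translate "\largest holds a \wf ledger of the right length" into the two State Security clauses. Let $\ledger$ be the dirty ledger underlying $\largest$'s \wf augmented dirty ledger, so by Definition~\ref{def:safe} and the Execution Oracle's soundness the last state commitment equals $\left<\transition^*(\st_0,\ledger)\right>$, which is exactly $\stc$. For \textbf{Safety}: the \wfc condition forces every consecutive pair $(\tx_{i-1},\tx_i)$ to satisfy $(\tx_{i-1},\tx_i)\mid\ledger^\cup$, verified through the \coracle; by Consensus Oracle Soundness (delay parameter $\nu$) and Ledger Safety, every such pair, and hence the whole sequence $\ledger$, sits consecutively inside $\ledger^\cup_{r+\nu}$ starting from genesis, so $\ledger \preceq \ledger^\cup_{r+\nu} \preceq \ledger^\cup_{r'}$ for all $r'\ge r+\nu$ by Safety monotonicity. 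For \textbf{Liveness}: from $|\ledger| \geq |\ledger^\cap_r| = |\LOGdirty{r}{P}|$ for the honest party $P$ achieving the minimum, together with $\ledger \preceq \ledger^\cup_r$ and $\LOGdirty{r}{P} \preceq \ledger^\cup_r$ (both being prefixes of the common longest ledger), a length comparison gives $\LOGdirty{r}{P} \preceq \ledger$, i.e. $\ledger$ is a suffix — actually an extension — of $\ledger^\cap_r$ as required.

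The main obstacle I anticipate is the tournament combinatorics: carefully arguing that the "current $\largest$" mechanics (Algorithm~\ref{alg.tournament}, Lines~\ref{alg.tournament.protocol.1}--\ref{alg.tournament.added}, including the inner \textbf{while} loop that repeatedly re-challenges after an upset) guarantee that $\prover_h$ is compared, directly or transitively, against the eventual winner, so that no non-\wf or too-short prover can slip through to become $\largest$. The clean way to handle this is an invariant argument: I would show that at the end of every step $t$, either $\prover_h \in \mathcal{S}$ with the largest claimed length among \wf members, or every member of $\mathcal{S}$ that claims a length $\geq |\LOGdirty{r}{\prover_h}|$ is \wf — and that this invariant, together with the suffix-monologue truncation bound $\alpha(u+\nu)$ and the Common Prefix consequence already invoked in Section~\ref{sec:full-protocol}, is preserved under all three case outcomes of a challenge game. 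Everything else (the error union bound over $\mathcal{O}(n)$ games, the runtime restatement from Theorem~\ref{thm:tournament-runtime}) is routine.
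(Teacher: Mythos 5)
Your proposal takes a genuinely different route from the paper's, and it contains a gap worth pinpointing. The paper factors the theorem through a dedicated lemma, Lemma~\ref{lem:challenge-security}, which states that a \emph{single} challenge game in which at least one of the two provers is honest already produces a state-secure commitment for the verifier; the security theorem then reduces to observing (via Theorems~\ref{thm:completeness} and~\ref{thm:soundness}) that the honest prover $\prover_{i^*}$ survives in $\mathcal{S}$ once added, and that the final $\largest$ is either $\prover_{i^*}$ itself or a prover who faced $\prover_{i^*}$. You do not use this lemma; instead you aim to show that the final $\largest$'s entire claimed augmented dirty ledger is \wf and then derive the two State Security clauses from \wfc.

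The gap is that \wfc of the winner's claimed ledger is not what the challenge game establishes, and in fact need not hold. When an adversarial challenger reaches the suffix monologue against a shorter responder, the verifier validates only the suffix $\ledgeraug[\ell^*{:}]$ together with the single joining transition from the responder's last entry; it never re-examines the challenger's alleged prefix below $\ell^*$. An adversarial winner can therefore carry a garbage prefix through every game, so its alleged ledger is not \wf with respect to any dirty ledger. What \emph{does} propagate across a suffix monologue is state security of the \emph{accepted commitment}: if the responder's accepted commitment derives from a ledger $\ledger'$ with $\ledger' \preceq \ledger^\cup_{r+\nu}$ and $\ledger^\cap_r \preceq \ledger'$, and the challenger extends it by entries that pass the \coracle and \eoracle checks, then the challenger's accepted commitment derives from $\ledger'$ concatenated with the validated suffix, which inherits both clauses of Definition~\ref{def:state-security}. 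That property, not \wfc, is the invariant you should carry; it is exactly what the paper packages as Lemma~\ref{lem:challenge-security}. Your instinct about the tournament combinatorics is otherwise sound and, if anything, sharper than the earlier part of your write-up: the assertion that ``$\prover_h$ always challenges (or is challenged by) the current $\largest$'' is too strong, since after $\prover_h$ joins $\mathcal{S}$ a larger newcomer can mutual-win a monologue against the then-current $\largest$ and become $\largest$ without ever facing $\prover_h$, so the state-security invariant genuinely must be chained transitively through successive monologue winners, as your final paragraph's ``directly or transitively'' anticipates. Replace the \wfc target with state security of the accepted commitment and the remaining bookkeeping (union bound over the $\mathcal{O}(n)$ games, the Lipschitz and suffix-truncation length bounds, and the translation into the Safety/Liveness clauses) lines up with the paper's argument.
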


\ifnonanon
\section*{Acknowledgements}
We thank Shresth Agrawal, Kostis Karantias, Angel Leon, Joachim Neu, and Apostolos Tzinas
for several insightful discussions on this project.
Ertem Nusret Tas is supported by the Stanford Center for Blockchain Research. Lei Yang is supported by a gift from the Ethereum Foundation.
\fi

\bibliographystyle{splncs04}
\bibliography{references}

\ifappendices
\ifsp
  \appendix
\fi
\iflncs
  \appendix
\fi

\section{Attack on SPV Clients on Lazy Blockchains}
\label{sec:attack}
We examine a hypothetical attack on the succinctness of the communication complexity of the Bitcoin's SPV~\cite{bitcoin} on a lazy blockchain protocol with UTXO-based execution model.
Suppose at round $r$, the confirmed sequence of blocks contain the transactions $\tx_i$, $i \in [n]$ in the reverse order: for $i \in \{2,\ldots,n-1\}$, $\tx_i$ appears in the prefix of $\tx_{i-1}$. 
Each transaction $\tx_i$, $i \in [n]$, spends two UTXO's, $\UTXOP_i$ and $\UTXOL_i$, such that for $i=2,..,n$, $\UTXOL_i = \UTXOP_{i-1}$, and $\UTXOL_1$ $\UTXOP_i$ $i \in \{2,\ldots,n-1\}$, and $\UTXOP_n$ are all distinct UTXOs.
Thus, if $\tx_{i-1}$ appears in the prefix of $\tx_i$, it \emph{invalidates} $\tx_i$ as $\tx_i$ will be double-spending $\UTXOL_i = \UTXOP_{i-1}$ already spent by $\tx_{i-1}$.
However, $\tx_i$, $i \in \{3,\ldots,n\}$, does not invalidate $\tx_j$ for any $j<i-1$.
We assume that no transaction outside the set $\tx_{i}$, $i \in [n]$, invalidates $\tx_i$ for any $i \in [n-1]$.

If $n$ is even, $\tx_1$ would be invalid, since each transaction $\tx_i$ with an odd index $i \in \{1,3,\ldots,2n-1\}$, will be invalidated by the transaction $\tx_{i-1}$.
However, if $n$ is odd, each transaction $\tx_i$, with an even index $i \in \{2,4,\ldots,2n-1\}$, will be invalidated by the transaction $\tx_{i-1}$, which has an odd index.
Hence, $\tx_2$ would be invalid, implying that $\tx_1$ would be valid as no transaction other than $\tx_2$ can invalidate $\tx_1$ by our assumption.

Consider a light client whose goal is to learn whether $\tx_1$ is valid with respect to the transactions in its prefix.
Suppose $n$ is even, \ie, $\tx_1$ is invalid.
Towards its goal, the client asks full nodes it is connected to, whether $\tx_1$ is valid with respect to the transactions in its prefix.
Then, to convince the light client that $\tx_1$ is invalid, an honest full node shows $\tx_2$, which invalidates $\tx_1$, along with its inclusion proof.
However, in this case, an adversarial full node can show $\tx_3$ to the client, which in turn invalidates $\tx_2$, and gives the impression that $\tx_1$ is valid.
Through an inductive reasoning, we observe that the adversarial nodes can force the honest full node to show all of the transactions $\tx_i$ with even indices $i \in \{2,\ldots, n\}$, to the light client, such that no adversarial full node can verifiably claim $\tx_1$'s validity anymore.
However, since $n$ can be arbitrarily large, \eg, a constant fraction of the confirmed ledger length, light client in this case would have to download and process linear number of transactions in the ledger length.

A related attack on rollups that use a lazy blockchain as its parent chain is described by~\cite{woods-attack}. 

\section{Bisection Game}
\label{sec:bisection-game}
\begin{figure*}
\fbox{
\begin{minipage}[t]{\linewidth}

The game is terminated by the verifier as soon as the victory of one of the provers becomes certain.

\noindent
\textbf{Challenger wins.}
The challenger wins the bisection game in the verifier's view whenever one of the following conditions fails:
\begin{enumerate}
    \item The responder must not timeout, \ie, must reply to a query within one round of receiving it from the verifier.

    \item The response of the responder must be syntactically valid according to the expectations of the verifier, \eg,
          if the challenger has asked for the two children of a Merkle tree inner node, these must be two hashes.

    \item For the two nodes $h_l$ and $h_r$ returned by the responder as the children of a node $h$ on its dirty tree, $h = H(h_l \concat h_r)$.

    \item If $j \geq 1$, the Merkle proof for the $(j-1)^\text{st}$ leaf of the responder's dirty tree is valid.

    \item \label{condition:challenger-consensus}
          If $j \geq 1$, $(\tx_{j-1}, \tx_j) \mid \ledger^{\cup}$.

    \item \label{condition:challenger-execution}
          If $j \geq 1$, for the claimed state commitments $\stc_{j-1}, \stc_j$, there is an underlying state $\st_{j-1}$
          such that $\left<\st_{j-1}\right> = \stc_{j-1}$ and $\stc_j = \left<\transition(\st_{j-1}, \tx_j)\right>$.

    \item If $j = 0$, the claimed state commitment $\stc_0$ matches the genesis state commitment $\left<\st_0\right>$ known to the verifier
          and $\tx_0 = \epsilon$.
\end{enumerate}

To check condition~\ref{condition:challenger-consensus}, the verifier consults the already downloaded header chain (\cf Appendix~\ref{sec:consensus-oracle}).
To check condition~\ref{condition:challenger-execution}, the verifier requests a \emph{proof} from the responder
that illustrates the correct state transition from $\stc_{j-1}$ to $\stc_j$, \eg, the balances that were
updated by $\tx_j$ (\cf Appendix~\ref{sec:execution-oracle}).

\noindent
\textbf{Responder wins.}
The responder wins and the challenger loses in the verifier's view if one of the following conditions fails:

\begin{enumerate}
    \item The challenger must send valid queries. A valid query is a root number on the first round (if the responder holds multiple Merkle trees), or a single bit in any next round.
    \item The challenger must not timeout, \ie, must send a query within a round of being asked by the verifier.
\end{enumerate}

If both parties respond according to these rules, then the responder wins.
\end{minipage}%
}%
\caption{The alg. ran by the verifier to determine the winner of the bisection game.}
\label{alg.verifier}
\end{figure*}

The bisection game is played between two provers, a \emph{challenger} and a \emph{responder}.
The challenger sends queries to the responder through the verifier. 
The responder replies through the same channel (\cf Figure~\ref{fig:bisection_game}).
The challenger sends his first query to the verifier.
The verifier forwards this query to the responder. 
The responder sends his response back to the verifier. 
Lastly, the verifier forwards this response to the challenger.
Subsequently, the challenger follows up with more queries. 
As the verifier forwards queries and responses, it checks they are well-formed and the responses correspond to the queries.
All communication between the two provers passes through the verifier.
The challenger's goal is to convince the verifier that the responder's dirty tree root does \emph{not} correspond to the root of a \wf tree.
The responder's goal is to defend his claim that his dirty tree root \emph{is} the root of a \wf tree.
We design the game so that an honest challenger always wins against an adversarial responder,
and an honest responder always wins against an adversarial challenger.

\begin{algorithm}
  \caption{\label{alg.tree.vs.tree}
    The algorithm ran by an honest challenger to identify the first point of disagreement against the responder's dirty tree, given that the trees have the same size $\ell$, but different roots. The variable $\challenger{\dtreesp}$ represents the challenger's dirty tree. 
   }
  \begin{algorithmic}[1]\small
    \Function{\sc \challenger{BisectionGame}}{\rm \challenger{\dtreesp}, \challenger{$\ell$}}
        \If{$\challenger{\ell} = 1$}
            \State\Return\Comment{We are done; let the verifier check the leaf}
        \EndIf
        \Receive{(\responder{h_l^*}, \responder{h_r^*})}{\responder{\textsc{Responder}}}\Comment{Ask to open inner node}\label{alg.tree.vs.tree.open-response}
        \Let{(\challenger{h_l}, \challenger{h_r})}{\challenger{\dtreesp\textrm{.left.root}}, \challenger{\dtreesp\textrm{.right.root}}}
        \If{$\challenger{h_l} = \responder{h_l^*}$}
            \Send{1}{\textsc{\responder{Responder}}}
            \State $\challenger{\textsc{BisectionGame}}(\challenger{\dtreesp\textrm{.right}}, \lfloor\frac{\challenger{\ell}}{2}\rfloor)$\label{alg.tree.vs.tree.recurse-right}
        \Else
            \Send{0}{\textsc{\responder{Responder}}}
            \State $\challenger{\textsc{BisectionGame}}(\challenger{\dtreesp\textrm{.left}}, \lfloor\frac{\challenger{\ell}}{2}\rfloor)$\label{alg.tree.vs.tree.recurse-left}
        \EndIf
    \EndFunction
    \end{algorithmic}
\end{algorithm}

The game proceeds as a \emph{binary search}~\cite{refereed-computation,practical-delegation,arbitrum}.
For simplicity, let us for now assume that $\ell = \ell^*$ and they are a power of two.
If $\mroot \neq \mroot^*$, then there must be a \emph{first point of disagreement}
between the two underlying augmented dirty ledgers $\ledgeraug$ and $\ledgeraug^*$ alleged by the two provers.
During the game, an honest challenger tries to identify the first point of disagreement between his augmented dirty ledger $\ledgeraug$
and the one the adversarial responder claims to hold.
Let $j$ be the index of that first point of disagreement pinpointed by the honest challenger.
Then, the challenger asks the responder to reveal the $(j-1)^\text{st}$ and $j^\text{th}$ entries of his augmented dirty ledger.
Upon observing that the revealed entries violate the \wfc conditions of Definition~\ref{def:safe},
the verifier concludes that the responder's tree is not \wf.
On the other hand, the honest responder replies to the adversarial challenger's queries truthfully.
Therefore, the adversarial challenger cannot pinpoint any violation.

The honest challenger runs Algorithm~\ref{alg.tree.vs.tree}, whereas the honest responder runs
Algorithm~\ref{alg.respond.bisection}.
The verifier forwards and verifies exactly up to $\log\ell$ inner node queries and one leaf query.
Then, at the end of the algorithm, the challenger arrives at the first point $j$ of disagreement,
and the honest responder reveals the leaf data $(\tx_j, \left<\st_{j}\right>)$
(Algorithm~\ref{alg.respond.bisection} Line~\ref{alg.respond.bisection.reveal}).
Finally, if $j \geq 1$, the honest responder also sends the leaf $(\tx_{j-1}, \left<\st_{j-1}\right>)$ at index $j-1$, along with its Merkle proof $\pi$
within its dirty tree in a single round of interaction (if $j = 0$, then the verifier already knows the contents of the first leaf).
This last response is only checked by the verifier and does not need to be forwarded to the challenger.
For brevity, we omit this portion from the responder's algorithm.

\begin{algorithm}
  \caption{The algorithm ran during the bisection game by the responder to reply to the challenger's queries. The variable $\responder{\ledgeraug}$ denotes the responder's augmented dirty ledger. The algorithm $\textsc{MakeMerkleTree}$ returns the Merkle tree based on the given augmented dirty ledger.}
  \label{alg.respond.bisection}
  \begin{algorithmic}[1]\small
    \Function{\sc \responder{Respond}}{\responder{\ledgeraug}}
        \Let{\responder{\dtreesp^*}}{\textsc{MakeMerkleTree}(\responder{\ledgeraug})}
        \Send{\responder{\dtreesp^*}\textrm{.root}}{\challenger{\textsc{Challenger}}}
        \While{$\responder{\dtreesp^*}\textrm{.size} > 1$}
            \Let{(\responder{h_l^*, h_r^*})}{(\responder{\dtreesp^*}\textrm{.left.root}, \responder{\dtreesp^*}\textrm{.right.root})}
            \Send{(\responder{h_l^*}, \responder{h_r^*})}{\challenger{\textsc{Challenger}}}\label{alg.respond.bisection.open}
            \Receive{\challenger{\textrm{dir}}}{\challenger{\textsc{Challenger}}}
            \If{$\challenger{\textrm{dir}} = 0$}
                \Let{\responder{\dtreesp^*}}{\responder{\dtreesp^*}\textrm{.left}}
            \Else
                \Let{\responder{\dtreesp^*}}{\responder{\dtreesp^*}\textrm{.right}}
            \EndIf
        \EndWhile
        \Send{\responder{\dtreesp^*}\textrm{.data}}{\challenger{\textsc{Challenger}}}\label{alg.respond.bisection.reveal}
    \EndFunction
    \end{algorithmic}
\end{algorithm}

If the responder is adversarial, she could send malformed responses.
We use the notation $\stc_j$ to denote the
\emph{claimed} $j^\text{th}$ state commitment by the responder,
but this may be malformed and does not necessarily correspond to an actual commitment
$\left<\st_j\right>$, where $\st_j$ is the $j^\text{th}$ state of an honest party's augmented dirty ledger.
In fact, it may not be a commitment at all. Similarly, the claimed tree root $\mroot^*$, provided by the
adversary may not necessarily be a correctly generated Merkle tree.

\begin{algorithm}
  \caption{The tournament among the provers administered by the verifier. It takes a sequence of provers $\mathcal{P}$, ordered from the one with the largest alleged augmented dirty ledger size to the smallest. The algorithm $\textsc{Challenge}$ initiates a challenge game with the first given prover as the challenger and the second one as the responder.}
  \label{alg.tournament}

  \begin{algorithmic}[1]\small
    \Function{\sc Tournament}{$\mathcal{P}$}
      \Let{\textrm{sizes}}{\{\,\}}
      \For{$p \in \mathcal{P}$}
        \Let{\textrm{sizes}[p]}{p.\textrm{getsize()}}
      \EndFor
      \Let{\mathcal{S}}{\{\mathcal{P}[0]\}}
      \Let{\textrm{largest}}{\mathcal{P}[0]}
      \For{$i = 1 \text{ to } |\mathcal{P}| - 1$}
        \Do
          \If{$\textrm{largest}.\textrm{getsize()} > \textrm{sizes}[i]$}
          \Let{\textrm{result}}{\textsc{Challenge}(\textrm{largest}, \mathcal{P}[i])}\label{alg.tournament.protocol.1}
          \Else
          \Let{\textrm{result}}{\textsc{Challenge}(\mathcal{P}[i], \textrm{largest})}\label{alg.tournament.protocol.2}
          \EndIf
          \If{$\textrm{result} \text{ is ``nested MMRs''}$}          \Let{\mathcal{S}}{\mathcal{S} \cup \{\mathcal{P}[i]\}}\label{alg.tournament.both}
          \ElsIf{$\textrm{result} \text{ is ``\textrm{largest} loses''}$}
            \Let{\mathcal{S}}{\mathcal{S} \setminus \{\textrm{largest}\}}\label{alg.tournament.node}
            \Let{\textrm{largest}}{\argmax_{p \in \mathcal{S}}{\textrm{sizes}}}
          \EndIf
          \CommentLine{The set $\mathcal{S}$ is not updated if $\mathcal{P}[i]$ loses.}\label{alg.tournament.largest}
        \doWhile{$\textrm{result} \text{ is ``\textrm{largest} loses''} \land \mathcal{S} \neq \emptyset$}\label{alg.tournament.while}
        \If{$\mathcal{S} = \emptyset$}
          \Let{\mathcal{S}}{\{\mathcal{P}[i]\}}\label{alg.tournament.added}
        \EndIf
      \EndFor
      \State\Return{\textrm{largest}}
    \EndFunction
  \end{algorithmic}
\end{algorithm}

\section{Latency-Bandwidth Trade-off for Bisection Games}
\label{sec:latency-bandwidth-tradeoff}
This section models the latency-bandwidth trade-off realized by tuning the degree $m$ of dirty trees, and complements the experimental results in Section~\ref{sec:implementation-results}. 
In an $m$-ary dirty tree representing
an $L$-sized ledger, the tree height decreases logarithmically
as the degree $m$ of the tree increases,
making the number of rounds of interactivity in the bisection game $\log_m{L}$.
However, the challenger must now indicate the index of the child to open,
making its messages $\log{m}$ bits in size. Similarly, when the responder opens up an inner node
and reveals its children, $m$ children need to be sent over the network.
If the hash used is $H$ bits long, then the messages sent by the responder
are $mH$ bits. There is therefore a \emph{latency/bandwidth} tradeoff in
the parameter $m$. A large $m$ incurs less interactivity, but larger network
messages, while a small $m$ incurs more interactivity but shorter
messages. In this section, we calculate the optimal $m$, given the respective
network parameters on bandwidth and latency.

Let $\Delta$ be the network latency between the prover and verifier,
measured in seconds, and $C$ be the communication bandwidth of the channels
connecting each prover to the verifier.
We assume that, upon downloading any given message, the prover and the verifier compute
the corresponding reply instantly (network latency dominates computational latency).
At each round of the bisection game, $\log{m}$ and $mH$ bits are downloaded by the responder and the challenger respectively.
Moreover, each message sent between the responder and challenger takes $2\Delta$ seconds to reach its destination,
because it has to be forwarded through the verifier.
Hence, each round is completed in $4\Delta+(mH+\log{m})/C$ seconds.
As the bisection game lasts for $\log_m{L}$ rounds, the total running time of the game becomes
$\left(4\Delta+mH/C+\log{m}/C\right)\log_m{L} = \frac{\log{L}}{\log{m}}\left(4\Delta+mH/C\right) + \log{L}/C$.
This expression is minimized for $m$ that satisfies the expression
$m(\log{m}-1) = 4\Delta C / H,$ \ie, $m=\exp{(W_1(4\Delta \frac{C}{eH}) + 1)}$, where $W_1$ is the Lambert $W$ function.
The different optimal $m$ for common bandwidths and latencies are plotted in Figure~\ref{fig:latency-bandwidth}.

\begin{figure}
    \centering
    \ifonecolumn
    \includegraphics[width=0.6\linewidth]{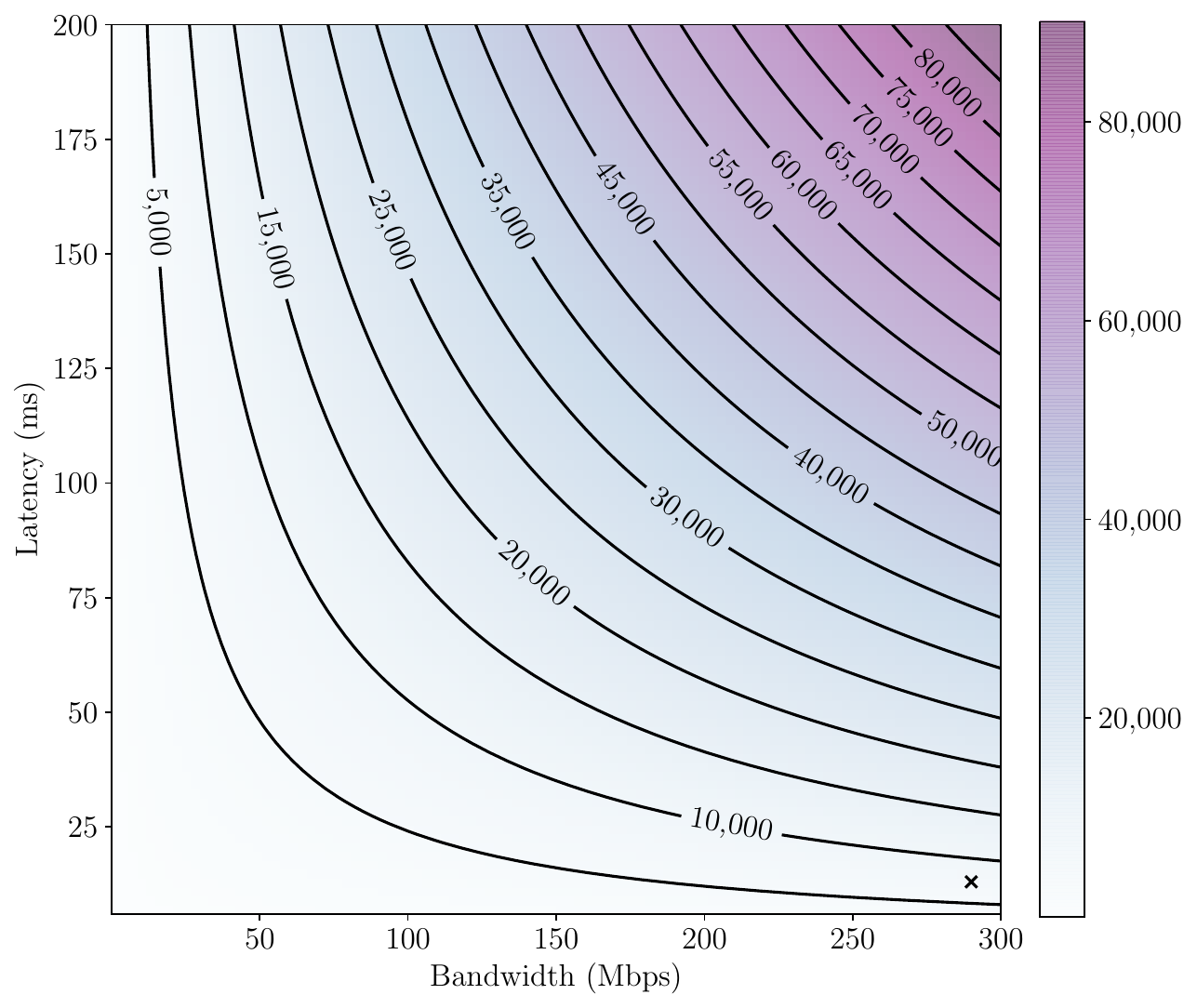}
    \fi
    \iftwocolumn
    \includegraphics[width=\linewidth]{figures/tradeoff.pdf}
    \fi
    \caption{Optimal Merkle tree degree $m$ (isolines) for a given network connection bandwidth (x-axis, in Mbps) and latency (y-axis, in ms). The $\times$ marker marks the particular example described in the text.}
    \label{fig:latency-bandwidth}
\end{figure}

Given\footnote{
\ifanon Typical conditions for the network connection in our university graduate student residences.
\fi
\ifnonanon
Typical conditions for the network connection in Stanford university graduate student residences.
\fi}
$C=290$ Mbps, $\Delta = 13$ ms, $H=256$ bits, the optimal $m$ is $7{,}442$,
yielding dirty trees that have quite a large degree. 
The optimal $m$ only depends
on the network parameters and not the ledger size.
Given the Ethereum ledger of $L = 1.5 \cdot 10^9$ transactions at the time of writing\footnote{Google Cloud Platform
BigQuery table \texttt{bigquery-{\allowbreak}public-{\allowbreak}data:{\allowbreak}crypto\_ethe\-reum.{\allowbreak}trans\-ac\-tions}
as of March 6th, 2022}, using the optimal $m$ for the given network parameters gives an estimate of $0.96$ seconds, where $0.86$ seconds of the time is due to the network delay $\Delta$.
This captures the duration of the whole bisection game with its $\log_m{L}$ rounds of interactivity.

\section{Superlight Clients}
\label{sec:super_light_clients}
In our construction, we abstracted the checking of transaction order that the verifier
performs into a consensus oracle, and discussed how this can be realized in the blockchain
setting using the standard SPV technique, achieving communication complexity of
$\mathcal{O}(C) = \mathcal{O}(r)$, where $C$ is the chain size and $r$ is the round
during which the light client is booting up. This gives a total of
$\mathcal{O}(C + \log L)$ communication complexity for our lazy light client protocol.
However, the consensus oracle can be replaced with a \emph{superlight} client that
does not download the whole header chain, and instead samples a small portion of it.
Such examples include interactive~\cite{popow} or non-interactive PoPoWs and PoPoS,
a primitive which
can be constructed using either superblocks~\cite{nipopows,logspace}, FlyClient~\cite{flyclient} or bisection games~\cite{popos}, and brings down the consensus oracle communication
complexity to a succinct $\mathcal{O}(\text{poly}\log C) = \mathcal{O}(\text{poly}\log r)$.
When composed with our protocol for identifying lazy ledger disagreements, the
total communication complexity then becomes
$\mathcal{O}(\text{poly}\log C + \log L) = \mathcal{O}(\text{poly} \log r)$,
which is the desirable succinctness. We highlight the different roles of each protocol
here: On the one hand, the superlight client, such as FlyClient or superblocks,
plays the role of the \emph{consensus oracle} and is used to answer queries about
\emph{which transaction succeeds another on the chain};
on the other hand, the interactive verification game is administered to determine
the current \emph{state of the world}, given access to such a consensus oracle.
The two protocols are orthogonal and can be composed to achieve an overall performant
system.

\iftwocolumn
\setlength{\tabcolsep}{6pt}
\renewcommand{\arraystretch}{1.5}
\begin{table*}[h]
\centering
\begin{tabular}{|p{2.5cm}|p{2.5cm}|p{2.5cm}|p{2.5cm}|p{2.5cm}|}
 \hline
  & \textbf{Full Node} & \textbf{Light Client} & \textbf{Superlight Client} & \textbf{Custodian Wallet} \\  [0.5ex]
 \hhline{|=|=|=|=|=|}
 \textbf{Interactivity} & $\mathcal{O}(1)$ & $\mathcal{O}(\log{L})$ & $\mathcal{O}(\log{L})$ & $\mathcal{O}(1)$ \\
 \hline
 \textbf{Communication} & $\mathcal{O}(C+L)$ & $\mathcal{O}(C+\log{L})$ & $\mathcal{O}(\log{C}+\log{L})$ & $\mathcal{O}(1)$ \\
 \hline
 \textbf{Decentralized} & $\checkmark$ & $\checkmark$ & $\checkmark$ & $\times$ \\
 \hline
\end{tabular}
\vspace{2pt}
\caption{Comparison of different client types on a lazy blockchain.}
\label{tab:table-comp}
\end{table*}
\fi

\ifonecolumn
\setlength{\tabcolsep}{6pt}
\renewcommand{\arraystretch}{1.5}
\begin{table*}[ht]
\centering
\begin{tabular}{|p{2.5cm}|p{1.7cm}|p{2cm}|p{1.9cm}|p{2cm}|}
 \hline
  & \textbf{Full} \newline \textbf{Node} & \textbf{Light} \newline \textbf{Client} & \textbf{Superlight} \newline \textbf{Client} & \textbf{Custodian} \newline \textbf{Wallet} \\  [0.5ex]
 \hhline{|=|=|=|=|=|}
 \textbf{Interactivity} & $\mathcal{O}(1)$ & $\mathcal{O}(\log{L})$ & $\mathcal{O}(\log{L})$ & $\mathcal{O}(1)$ \\
 \hline
 \textbf{Communication} & $\mathcal{O}(C+L)$ & $\mathcal{O}(C+\log{L})$ & $\mathcal{O}(\log{CL})$ & $\mathcal{O}(1)$ \\
 \hline
 \textbf{Decentralized} & $\checkmark$ & $\checkmark$ & $\checkmark$ & $\times$ \\
 \hline
\end{tabular}
\vspace{2pt}
\caption{Comparison of different client types on a lazy blockchain.}
\label{tab:table-comp}
\end{table*}
\fi

The interactivity and communication complexity
for synchronization times for lazy light clients composed with different
consensus oracles is illustrated in Table~\ref{tab:table-comp}.
A \emph{Full Node} (left-most column) downloads the whole header chain of size $C$
and every transaction of size $L$, thus does not need to play any
interactive games, achieving constant interactivity but large communication
complexity. A \emph{Custodian Node} (right-most column) is a wallet that
trusts a server to deliver correct data and does not verify it (e.g., MetaMask);
this has the best performance in both complexity and interactivity.
These were the only two previously known means of constructing clients for
lazy blockchains.
The two protocols titled \emph{Light Client} and \emph{Superlight Client}
in the middle columns
are clients composed with the lazy light clients explored in this
work. In the light client case, an SPV
client is used for the consensus oracle, while in the super light client
case, a NIPoPoW superblock client is used for the consensus oracle. The
$C$ or $\log C$ term stems from the underlying consensus oracle, while
the $\log L$ term stems from our lazy protocol.

\section{Generalizing the Model}
\label{sec:generalize}
So far, we assumed the underlying consensus protocol is blockchain-based and the computation of state
mimicks Ethereum's EVM. There exist different architectures for consensus (\eg, DAG-based
constructions~\cite{parallel-chains,phantom,spectre}) and execution (\eg, UTXO~\cite{bitcoin-dev-guide}).
Our protocol is generic and agnostic to these details.
To enable the formal analysis of our construction in a generic manner,
in this section we axiomatize the protocol requirements regarding consensus and execution.

\iflong
\begingroup
\setlength{\tabcolsep}{6pt} %
\renewcommand{\arraystretch}{1.5} %
\begin{table*}[]
    \centering
    \begin{tabular}{|l|l|l|l|}
    \hline
    \textbf{Primitive}                  & \textbf{Axiom}        & \textbf{Param} & \textbf{Requirement}                                                                                    \\ \hhline{|=|=|=|=|}
    \multirow{3}{*}{\textbf{Ledger}}    & \emph{Safety}       & -                  & $\ledger^{P_1}_{r_1} \preceq \ledger^{P_2}_{r_2}$                                                       \\ \cline{2-4}
                                        & \emph{Liveness}     & $u$                & $\tx$ in $\ledger^{P}_{r+u}$                                                                            \\ \cline{2-4}
                                        & \emph{Lipschitz}    & $\alpha$           & $|\ledger^P_{r_2}| - |\ledger^P_{r_1}| \leq \alpha(r_2 - r_1)$                                          \\ \hhline{|=|=|=|=|}
    \multirow{3}{*}{\textbf{Consensus}}
                                        & \emph{Completeness} & -                  & $(\tx, \tx')$ in $\ledger^{\cup}_{r} \rightarrow \mathcal{CO}_r(\tx, \tx')$                               \\ \cline{2-4}
                                        & \emph{Soundness}    & $\nu$           & $(\tx, \tx')$ not in $\ledger^{\cup}_{r + \nu} \rightarrow \lnot \mathcal{CO}_r(\tx, \tx')$              \\ \cline{2-4}
                                        & \emph{Succinctness} & $f$                & $f(r) \in \mathcal{O}(\text{poly} \log r)$                                                              \\ \hhline{|=|=|=|=|}
    \multirow{3}{*}{\textbf{Execution}}
                                        & \emph{Completeness} & -                  & $\transition(\st, \tx) = \st' \rightarrow \left<\transition\right>(\stc, \tx, \pi) = \left<\st'\right>$ \\ \cline{2-4}
                                        & \emph{Soundness}    & -                  & $\left<\transition(\st, \tx)\right> \neq \left<\transition\right>(\stc, \tx, \pi)$ is hard              \\ \cline{2-4}
                                        & \emph{Succinctness} & $g$                & $g(r) \in \mathcal{O}(\text{poly}\log r)$                                                               \\ \hline
    \end{tabular}
    \caption{The $9$ axioms required to construct a succinct light client.}
    \label{tab:axioms}
\end{table*}
\endgroup
\fi

\subsection{Consensus Protocol}
\label{sec:consensus-protocol}
We consider a consensus protocol $\mathcal{C}$ executed by honest full nodes.
Each honest full node $P$ exposes a \emph{read ledger} functionality which, at round $r$,
returns a finite sequence $\LOGdirty{r}{P}$ of stable transactions as the dirty ledger.
They also expose a \emph{write transaction} functionality which, given a transaction $\tx$
at round $r$, attempts to include the transaction into the ledger.

Our consensus protocol must satisfy the following properties.

\begin{definition}[Ledger Safety]
A consensus protocol $\mathcal{C}$ is \emph{safe} if for all honest parties $P_1, P_2$
at rounds $r_1 < r_2$, $\LOGdirty{r_1}{P_1}$ is a prefix of $\LOGdirty{r_2}{P_2}$.
\end{definition}

\begin{definition}[Ledger Liveness]
A consensus protocol $\mathcal{C}$ is \emph{live} with liveness parameter $u$ if,
when an honest party $P$ attempts to
write a transaction $\tx$ to the ledger at round $r$, the transaction appears
in $\LOGdirty{r + u}{P}$.
\end{definition}

Proof-of-work (in static and variable difficulty)
and proof-of-stake protocols satisfy the above properties~\cite{backbone,varbackbone,ouroboros,praos,snowwhite}.

\begin{definition}[Ledger Lipschitz]
A consensus protocol $\mathcal{C}$ is \emph{Lipschitz} with parameter $\alpha$
if for every honest party $P$
and rounds $r_1 \leq r_2$, we have that
$|\LOGdirty{r_2}{P}| - |\LOGdirty{r_1}{P}| < \alpha(r_2 - r_1)$.
\end{definition}

The above requirement states that ledgers grow at a bounded rate. 
For protocols that have a longest chain component such as Prism, Snap-and-Chat, Ouroboros, Babylon and Bitcoin, this follows from the fact that chains have an upper bound in their growth rate~\cite[Lemma 13]{backbone},
and that the number of transactions in each block is limited by a constant.
Such a chain growth upper bound is also present for Tendermint, which Celestia is based on, as each Tendermint round has a duration of at least $\Delta$~\cite{tendermint}.
Given the definitions of $\alpha$, $u$ and $\nu$ in the section below, we express the parameter $\psi$ used in the suffix monologue as $\psi = \alpha(u+\nu)$.

\subsection{\COracle}
\label{sec:consensus-oracle}

To enable queries about the order of transactions on the dirty ledger, we assume that the lazy blockchain
protocol provides access to a \emph{\coracle} $\mathcal{CO}$. The \coracle is a single-round black box interactive protocol executed among
the verifier and the provers. The verifier invokes the oracle with input two transactions $(\tx, \tx')$ and
receives a boolean response.
The goal of the verifier is to determine whether a transaction $\tx'$ \emph{immediately follows} another transaction $\tx$ on $\ledgercup$
(\ie, $(\tx, \tx') \mid \ledgercup$).

We require that the consensus oracle satisfies the following properties.

\begin{definition}[Consensus Oracle Security]
\label{def:coracle-p}
An \coracle is \emph{secure} if it satisfies:
\begin{itemize}
    \item \textbf{Completeness.}
    $(\tx, \tx')\mid \ledgercup_r \Rightarrow \mathcal{CO}_r(\tx,\tx')$.
    \item \textbf{Soundness.} The \coracle is sound with \emph{delay parameter} $\nu$ if for any PPT adversary $\mathcal{A}$,
    $
    \Pr[(\tx, \tx', r) \gets \mathcal{A}(1^\lambda);\allowbreak
        \mathcal{CO}_r(\tx,\tx') \land (\tx, \tx') \nmid \ledgercup_{r + \nu}] \leq \negl(\lambda)\,.
    $
\end{itemize}
\end{definition}

\begin{definition}[Consensus Oracle Communication Complexity]
A \coracle has communication complexity $f(r)$ if the total size of the query and response messages exchanged during an oracle query invoked at round $r$
is $f(r) \in \mathcal{O}(r)$.
\end{definition}

We assume that every transaction in the dirty ledger is unique and there are no duplicate transactions.
Under this assumption, a \coracle on top of the Nakamoto longest chain consensus protocol can be instantiated as follows.
Since the construction below also applies to protocols that output a chain of blocks, we will refer to the longest chain as the \emph{canonical} chain.

The blockchain consists of a header chain, each header containing the Merkle tree root, \ie the transaction root, of the transactions organized within the associated block.
The ordering of the blocks by the header chain together with the ordering of the transactions by each Merkle tree determine the total order across all transactions.
Thus, to query the \coracle with the two transactions $\tx$ and $\tx' \neq \tx$, the verifier first downloads all the block headers
from the honest provers and determines the canonical stable header chain.
Then, it asks a prover if $\tx$ immediately precedes $\tx'$ on its dirty ledger.
To affirm, the prover replies with
(a) the positions $i$ and $i'$ of the transactions $\tx$ and $\tx'$ within their respective Merkle trees,
(b) the Merkle proofs $\pi$ and $\pi'$ from the transactions $\tx$ and $\tx'$ to the transaction roots,
(c) the positions $j \leq j'$ of the block headers containing these transaction roots, on the canonical stable header chain.

Then, the verifier checks that the Merkle proofs are valid, and accepts the prover's claim iff either of
(1) the two blocks are the same, \ie, $j'=j$, and $i'=i+1$, or
(2) otherwise, the two blocks are consecutive, \ie, $j'=j+1$, and $i$ is the index of the last leaf in the tree of block $j$ while $i'$ is the index of the first leaf in the tree of block $j' = j + 1$.

If no prover is able to provide such a proof, the oracle returns \emph{false} to the verifier. The oracle's soundness follows the ledger safety.

The above is one example instantiation of a \coracle.
\iflong
Appendix~\ref{sec:coracle-constructions} gives proofs of completeness and soundness for the \coracle
as well as notes on how it can be implemented on different blockchain protocols.
\else
Proofs of completeness and soundness for the \coracle are available in the full version of this paper.
\fi

\iflong
To relax the uniqueness assumption for the transactions in the dirty ledger, each augmented dirty ledger entry containing a transaction $\tx$ can be extended by adding the index $j_{\tx}$ of the header of the block containing $\tx$, and the index of $\tx$ within the Merkle tree of that block.
In this case, the verifier queries the \coracle not only with transactions $\tx$ and $\tx' \neq \tx$, but also with the corresponding block header and Merkle tree indices $j_{\tx}$, $i_{\tx}$ and $j_{\tx'}$, $i_{\tx'}$.
Hence, during the query, the verifier also checks if the block and transaction indices for $\tx$ and $\tx'$, \eg, $j,i$ and $j',i'$, received from the prover matches the claimed indices: $j=j_{\tx}$, $j'=j_{\tx'}$, $i=i_{\tx}$, $i'=i_{\tx'}$.
\fi

\subsection{\EOracle}
\label{sec:execution-oracle}

To enable queries about the validity of state execution, we assume that the lazy blockchain
protocol provides access to an \emph{\eoracle}. The \eoracle is a single-round black box interactive protocol executed among
the verifier and the provers. The verifier invokes the oracle with a transaction $\tx$ and two state commitments, $\stc$ and $\stc'$ as input, and
receives a boolean response.
The goal of the verifier is to determine whether there exists a state $\st$ such that $\stc$ is the commitment of $\st$ and $\stc' = \left<\transition(\tx,\st)\right>$.

The \eoracle is parametrized by a triplet $(\transition, \left<\cdot\right>, \left<\transition\right>)$
consisting of an efficiently computable
\emph{transition function} $\transition(\cdot, \cdot)$,
a \emph{commitment scheme} $\left<\cdot\right>$,
and
a \emph{succinct transition function} $\left<\transition\right>(\cdot, \cdot, \cdot)$.
The succinct transition function $\left<\transition\right>$ accepts a state commitment $\stc$,
a transaction $\tx$, and a proof $\pi$, and produces
a new state commitment $\stc'$ which corresponds to the commitment of the updated state.

To query the \eoracle on $\tx$, $\stc$ and $\stc'$, the verifier first asks a prover for a proof $\pi$.
If the prover claims that he knows a state $\st$ such that $\stc$ is the commitment of $\st$ and $\stc' = \left<\transition(\tx,\st)\right>$, it gives a proof $\pi$.
Then, the verifier accepts the prover's claim if $\stc' = \left<\transition\right>(\stc, \tx, \pi)$.
Otherwise, if $\left<\transition\right>$ throws an error or outputs a different commitment, the verifier rejects the claim.

\begin{definition}[Execution Oracle Security]
\label{def:eoracle-p}
An \eoracle is \emph{secure} if it satisfies:

\noindent
\textbf{Completeness.} \Eoracle is \emph{complete} with respect to a proof-computing PPT machine $M$ if
for any state $\st$ and transaction $\tx$, it holds that $M(\st, \tx)$ outputs a $\pi$ that satisfies
$\left<\delta\right>(\left<st\right>, \tx, \pi) = \left<\delta(\st, \tx)\right>$.

\noindent
\textbf{Soundness.}
For any PPT adversary $\mathcal{A}$:
\begin{align*}
\Pr[(\st, \tx, \pi) \gets \mathcal{A}(1^\lambda); \left<\delta(\st, \tx)\right> \neq \left<\delta\right>(\left<\st\right>, \tx, \pi)]
\leq \negl(\lambda)\,.
\end{align*}
\end{definition}

\begin{definition}[Execution Oracle Communication Complexity]
An \eoracle has communication complexity $g(r)$ if the total size of the query and response messages exchanged during an oracle query invoked at round $r$
is $g(r) \in \mathcal{O}(r)$.
\end{definition}

In the account based model~\cite{albassam2018fraud}, the state is a Sparse Merkle Tree (SMT)~\cite{sparse-mt}
representing a key-value store.
The values constitute the leaves of the SMT and the keys denote their indices.
The state commitment corresponds to the root.

The verifier queries the \eoracle with $\tx$, $\stc$ and $\stc'$.
Suppose there is a state $\st$ with commitment $\stc$ and $\stc' = \left<\transition(\st, \tx)\right>$.
Let $D$ denote the leaves of the SMT $\st$. Let $\mathcal{S}_\tx$ be the keys of the SMT that the transaction $\tx$ reads from or writes to\footnote{In Ethereum, these can be obtained by the verifier via the $\mathsf{eth\_createAccessList}$ RPC.}.
We assume that the number of leaves touched by a particular transaction is constant.
Then, the proof required by $\left<\delta\right>$ consists of:
\begin{itemize}
    \item The key-value pairs $(i, D[i])$ for $i \in \mathcal{S}_{\tx}$ within $\st$.
    \item The Merkle proofs $\pi_i$, $i \in \mathcal{S}_{\tx}$, from the leaves $D[i]$ to the root $\stc$.
\end{itemize}

Given the components above, $\left<\delta\right>$ verifies the proofs $\pi_i$ and the validity of $\tx$ with respect
to the pairs $(i,D[i])$, \eg, $\tx$ should not be spending from an account with zero balance.
If there are pairs read or modified by $\tx$ that have not been provided by the prover, then $\left<\delta\right>$ outputs $\bot$.
If all such key-value pairs are present and $\tx$ is invalid with respect to them, $\left<\delta\right>$ outputs $\stc$, and does not modify the state commitment.
Otherwise, $\left<\delta\right>$ modifies the relevant key-value pairs covered by $\mathcal{S}_{\tx}$, which can be done efficiently~\cite{practical-delegation}.
Finally, it calculates the new SMT root, \ie the new state commitment,
using the modified leaves and the corresponding Merkle proofs among $\pi_i$, $i \in \mathcal{S}_{\tx}$.

SMTs can also be used to represent states based on the UTXO~\cite{bitcoin} model.
In this case, the value at each leaf of the SMT is a UTXO.
Thus, the execution oracle construction above generalizes to the UTXO model.

\section{\COracle Constructions}
\label{sec:coracle-constructions}
\iflong
\Coracle constructions for Celestia (LazyLedger)~\cite{lazyledger}, Prism~\cite{prism}, and Snap-and-Chat~\cite{ebb-and-flow,snap-and-chat} follow the same paradigm described in Section~\ref{sec:consensus-oracle}.
\fi

\subsection{Celestia}

Celestia uses Tendermint~\cite{tendermint} as its consensus protocol, which outputs a chain of blocks containing transactions.
Blocks organize the transactions as namespaced Merkle trees, and the root of the tree is included within the block header.
Hence, the construction of Section~\ref{sec:consensus-oracle} can be used to provide a \coracle for Celestia.

Celestia is designed as a data availability and consensus layer for multiple rollups. 
However, as Celestia is a lazy blockchain, each rollup on Celestia (called ‘sovereign rollups’) also need a mechanism for their \emph{rollup light clients} to discover the correct latest rollup state.
Towards this goal, our succinct light client construction can be utilized by the rollup nodes to support these light clients. 
For instance, as rollups are maintained by full nodes that execute the rollup-specific transactions (ignoring other transactions) posted to Celestia, these nodes can aid the rollup light clients by creating a dirty ledger of rollup-specific transactions, the corresponding dirty trees and MMRs, in the same way as the full nodes of a lazy blockchain with a single state transition function would help its light clients discover the correct latest state.

\subsection{Prism}

In Prism, a transaction $\tx$ is first included within a transaction block.
This block is, in turn, referred by a proposer block.
Once the proposer block is confirmed in the view of a prover $\prover$ at round $r$, $\tx$ enters the ledger $\LOGdirty{r}{\prover}$.
Hence, the proof of inclusion for $\tx$ consists of two proofs: one for the inclusion of $\tx$ in a transaction block $B_T$, the other for the inclusion of the header of $B_T$ in a proposer block $B_P$.
If transactions and transaction blocks are organized as Merkle trees, then, the proof of inclusion for $\tx$ would be two Merkle proofs: one from $\tx$ to the Merkle root in the header of $B_T$, the other from the header of $B_T$ to the Merkle root in the header of $B_P$.

The construction of Section~\ref{sec:consensus-oracle} can be generalized to provide a \coracle for Prism.
In this case, to query the \coracle with two transactions $\tx$ and $\tx' \neq \tx$, the verifier first downloads all the proposal block headers from the honest provers and determines the longest stable header chain.
Then, it asks a prover if $\tx$ immediately precedes $\tx'$ on its dirty ledger.

To affirm, the prover replies with:
\begin{itemize}
    \item the positions $i_t$ and $i'_t$ of the transactions $\tx$ and $\tx'$ within their respective Merkle trees contained in the respective transaction blocks $B_T$ and $B'_T$.
    \item the Merkle proofs $\pi_t$ and $\pi'_t$ from the transactions $\tx$ and $\tx'$ to the corresponding Merkle roots within the headers of $B_T$ and $B'_T$,
    \item the positions $i_p$ and $i'_p$ of the headers of the transaction blocks $B_T$ and $B'_T$ within their respective Merkle trees contained in the respective proposal blocks $B_P$ and $B'_P$.
    \item the Merkle proofs $\pi_p$ and $\pi'_p$ from the headers of $B_T$ and $B'_T$ to the corresponding Merkle roots within the headers of $B_P$ and $B'_P$,
    \item the positions $j \leq j'$ of the headers of $B_P$ and $B'_P$ on the longest stable header chain.
\end{itemize}
Then, the verifier checks that the Merkle proofs are valid, and accepts the prover's claim if and only if either of the following cases hold:
\begin{enumerate}
\item if $j=j'$ and $i'_p=i_p$, then $i'_t=i_t+1$.
\item if $j=j'$ and $i'_p=i_p+1$, then $i_t$ is the index of the last leaf in the tree of $B_T$ while $i'_t$ is the index of the first leaf in the tree of $B'_T$.
\item if $j'>j$, then $i_p$ is the index of the last leaf in the tree of $B_P$ while $i'_p$ is the index of the first leaf in the tree of $B'_P$. Similarly, $i_t$ is the index of the last leaf in the tree of $B_T$ while $i'_t$ is the index of the first leaf in the tree of $B'_T$.
\end{enumerate}
If no prover is able to provide such a proof, the oracle returns \emph{false} to the verifier.

\subsection{Snap-and-Chat}

In Snap-and-Chat protocols, a transaction $\tx$ is first included within a block $B_T$ proposed in the context of a longest chain protocol.
Upon becoming $k$-deep within the longest chain, where $k$ is a predetermined parameter, $B_T$ is, in turn, included within a block $B_P$ proposed as part of a partially-synchronous BFT protocol.
Once $B_P$ is finalized in the view of a prover $\prover$ at round $r$, $\tx$ enters the \emph{finalized ledger} $\LOGdirty{r}{\prover}$.
There are again two Merkle proofs for verifiable inclusion, one from $\tx$ to the Merkle root included in $B_T$, the other from the header of $B_T$ to the Merkle root in the header of $B_P$.
Consequently, the \coracle construction for Prism also applies to Snap-and-Chat protocols.

\begin{remark}
We remark here that protocols that do not follow the longest chain rule, or are not even proper chains, can be utilized by our protocol.
Such examples include Parallel Chains~\cite{parallel-chains}, PHANTOM~\cite{phantom} / GHOSTDAG~\cite{ghostdag}, and SPECTRE~\cite{spectre}.
The only requirement is that these systems provide a succinct means of determining whether two transactions follow one another on the ledger.
\end{remark}

\subsection{Colored Coins and Babylon}
Colored coins~\cite{colored-coin} refer to assets other than Bitcoin that are maintained on the Bitcoin blockchain, and derive their security from the consensus security of Bitcoin.
Babylon~\cite{babylon} is a protocol that checkpoints off-the-shelf PoS protocols onto Bitcoin to mitigate PoS-related problems such as non-slashable posterior corruption attacks, low liveness resilience and difficulty to bootstrap from low token valuation.
To post checkpoints and other types of data, Babylon and other colored coin applications use the $\mathsf{OP\_RETURN}$ scripting code, which allows arbitrary data to be recorded in an unspendable Bitcoin transaction.
Since the miners do not check the validity of the data within the $\mathsf{OP\_RETURN}$ transactions with respect to the corresponding application state, Bitcoin acts as a lazy blockchain towards these applications.
Section~\ref{sec:coracle-longest-chain} describes how a \coracle can be instantiated for longest chain protocols such as Bitcoin.

In the case of Babylon, provers can interact with an \coracle after the bisection game to prove the validity of the checkpoint at the first point of disagreement.
To support an \coracle, each checkpoint posted to Bitcoin must be augmented by the \emph{active validator set} of the portion of the PoS protocol ledger corresponding to the checkpoint (\cf \cite{babylon}[Sections IV-C and V-B]).
To verify the validity of the disputed checkpoint (via \cite{babylon}[Algorithms 1 and 2]), the verifier has to read only a constant-size portion of the PoS protocol ledger, namely the portion between the chain checkpointed by the earlier, common checkpoint, and the latter checkpoint at the source of disagreement.

\subsection{Longest Chain}
\label{sec:coracle-longest-chain}
As an illustration of how the consensus oracle can be realized in a longest header chain protocol, we provide sketches for the proofs of \coracle completeness and soundness in the Nakamoto setting.

Even in the original Nakamoto paper~\cite{bitcoin}, a description of an SPV client is provided, and it realizes our consensus oracle axioms, although these virtues were not stated or proven formally. The consensus oracle works as follows. The verifier connects to multiple provers, at least one of which is assumed to be honest. It inquires of the provers their longest chains, downloads them, verifies that they are chains and that they have valid proof-of-work, and keeps the heaviest chain. It then chops off $k$ blocks from the end to arrive at the stable part. Upon being queried on two transactions $(\tx, \tx')$, the oracle inquires of its provers whether these transactions follow one another on the chain. To prove that they do, the honest prover reveals two Merkle proofs of inclusion for $\tx$ and $\tx'$. These must appear in either consecutive positions within the same block header, or at the last and first position in consecutive blocks.

The terminology of \emph{typical executions}, the Common Prefix parameter $k$, and the Chain Growth parameter $\tau$
are borrowed from the Bitcoin Backbone~\cite{backbone} line of works, where these properties are proven. We leverage
these properties to show that our Consensus Oracle satisfies our desired axioms. Our proofs are in the static synchronous
setting, but generalize to the $\Delta$-bounded delay and variable difficulty settings.

\begin{lemma}[Nakamoto Completeness]
  In typical executions where honest majority is observed,
  the Nakamoto Consensus Oracle is complete.
\end{lemma}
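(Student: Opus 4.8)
The plan is to unfold the definition of \coracle completeness (Definition~\ref{def:coracle-p}) and, for every pair with $(\tx, \tx') \mid \ledgercup_r$, exhibit a reply by the honest prover $\prover_h$ that the verifier accepts, whence $\mathcal{CO}_r(\tx,\tx') = \text{true}$. First I would observe that $(\tx, \tx') \mid \ledgercup_r$ means there is an honest party whose dirty ledger equals $\ledgercup_r$ and in which $\tx'$ immediately follows $\tx$. By the way a dirty ledger is read off a chain (blocks in order, transactions within each block in Merkle-leaf order), this forces one of two cases: either (a) $\tx$ and $\tx'$ occupy consecutive leaves $i$ and $i{+}1$ of the transaction Merkle tree of a single block $B_j$ of that party's stable chain, or (b) $\tx$ is the last leaf of block $B_j$'s tree and $\tx'$ is the first leaf of the tree of the next (non-empty) block $B_{j+1}$. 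In either case an honest prover holding that chain can produce the leaf indices, the Merkle proofs $\pi, \pi'$ against the transaction roots in those headers, and the header positions $j \leq j'$ that the construction of Section~\ref{sec:consensus-oracle} demands, and these fit exactly the acceptance rule (1) or (2).

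The substance of the argument is then to show that the blocks $B_j$ (and $B_{j+1}$) appear, at the positions the prover claims, on the \emph{canonical stable header chain} the verifier derives from the chains submitted by the provers. I would use the Common Prefix property with parameter $k$: in a typical execution with honest majority, the $k$-truncated chains of all honest parties at round $r$ are mutually consistent, so the blocks $B_j, B_{j+1}$ --- stable in the view of the honest party realizing $\ledgercup_r$, and hence also held by the (fully synchronized) honest prover $\prover_h$ --- lie on $\prover_h$'s chain. Next I would invoke the standard consequence of Common Prefix and Chain Growth (parameter $\tau$) in typical executions that no adversary can exhibit a valid chain at least as heavy as an honest party's chain that forks from it more than $k$ blocks deep; hence the heaviest chain $C^*$ that the verifier selects agrees with $\prover_h$'s chain on all but its last $k$ blocks, so $C^{*\lceil k}$ still contains $B_j$ and $B_{j+1}$ at the same heights. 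Finally, validity of the submitted Merkle inclusion proofs against the transaction roots inside these headers follows from collision resistance of $H$, so the verifier accepts.

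The main obstacle is exactly this synchronization point: relating the \emph{global} longest honest ledger $\ledgercup_r$ to the possibly slightly stale chains that the particular provers the verifier connects to actually hold, and then to the canonical stable chain the verifier computes. I expect to lean on the ``fully synchronized'' assumption on provers together with the typical-execution guarantees from the Backbone analysis~\cite{backbone} to conclude that any block that is $k$-deep in some honest party's chain at round $r$ is present --- and still at least $k$-deep, hence stable --- in $\prover_h$'s chain and therefore in the verifier's canonical stable chain. Once that is settled, the remaining steps (the two-case analysis on consecutive leaves versus consecutive blocks, and verification of the Merkle proofs) are routine.
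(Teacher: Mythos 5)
Your proof follows essentially the same route as the paper's sketch: observe that $(\tx,\tx')\mid\ledgercup_r$ places the pair consecutively in the stable portion of an honest party's chain, apply the Common Prefix property to conclude the pair sits at the same positions in the canonical stable header chain adopted by the verifier, and let the honest prover supply the routine Merkle inclusion data matching acceptance rules (1)/(2). The additional detail you include (the consecutive-leaves versus consecutive-blocks case split, collision resistance of the Merkle proofs, and the Chain Growth remark when relating the verifier's heaviest chain to the honest prover's chain) only elaborates steps the paper's sketch leaves implicit, so this is the same argument.
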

\begin{proof}[Sketch]
  We prove that, if $(\tx, \tx')$ are reported in $\ledger^{\cup}$, then an honest prover will be able to prove so.
  Suppose the verifier chose a longest header chain $C^V$.
  If $(\tx, \tx')$ appear consecutively in $\ledger^{\cup}$, by ledger safety, this means that they belong to the ledger of at least one honest
  party $P$ who is acting as a prover. Since $(\tx, \tx')$ appear consecutively in $\ledger^P$, therefore they appear in the stable portion
  $C^P[{:}-k]$ of the chain $C^P$ held by $P$. By the Common Prefix property, $C^P[{:}-k]$ is a prefix of $C^V$
  and therefore $(\tx, \tx')$ appear consecutively in the stable header chain adopted by the verifier.
  Therefore, the verifier accepts.
\end{proof}

\begin{lemma}[Nakamoto Soundness]
  In typical executions where honest majority is observed,
  the Nakamoto Consensus Oracle instantiated with a Merkle Tree that uses a collision resistant
  hash function is sound, with soundness parameter $\nu = \frac{k}{\tau}$ where
  $k$ is the Common Prefix parameter and $\tau$ is the Chain Growth parameter.
\end{lemma}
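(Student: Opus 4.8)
The plan is to reduce the oracle's soundness to two standard ingredients: collision-resistance of the Merkle hash, which pins down what an accepted inclusion proof actually certifies about the chain, and the Common Prefix and Chain Growth properties of a typical Nakamoto execution, which relate the verifier's adopted chain to the honest ledger $\ledgercup$.

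First I would unpack what it means for the oracle, invoked at round $r$, to output $\true$ on a pair $(\tx,\tx')$: some prover has supplied the stable header chain $C^V[{:}-k]$ --- where $C^V$ is the heaviest valid chain the verifier saw --- together with Merkle inclusion proofs $\pi,\pi'$ and leaf/block positions for $\tx$ and $\tx'$ that the verifier accepted under one of the two ``consecutive'' cases. Invoking Proposition~\ref{prop:adv-merkle} (Merkle security), I would argue that, except with probability $\negl(\lambda)$, the transaction roots in the relevant headers genuinely commit to $\tx$ and $\tx'$ at the claimed positions, so $\tx$ and $\tx'$ really do occur consecutively in the ledger $\Lambda$ induced by $C^V[{:}-k]$. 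Since consecutiveness is preserved when $\Lambda$ is extended, it then suffices to show $\Lambda \preceq \ledgercup_{r+\nu}$, which gives $(\tx,\tx') \mid \ledgercup_{r+\nu}$.

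Next I would bring in typicality. Because at least one prover is honest, the verifier has seen the honest chain $C^h$ of round $r$, so the heaviest-chain rule gives $\mathrm{weight}(C^V) \geq \mathrm{weight}(C^h)$. The key backbone fact I would invoke is that, in a typical execution under honest majority, a valid chain this heavy cannot fork from $C^h$ more than $k$ blocks back; hence $C^V[{:}-k] \preceq C^h$, and $\Lambda$ is a prefix of the ledger induced by $C^h$, which differs from the stable honest ledger $\LOGdirty{r}{\prover_h}$ by at most the last $k$ blocks of $C^h$. The role of $\nu = k/\tau$ is precisely to absorb those at most $k$ unstable blocks: by Chain Growth with parameter $\tau$, over $\nu$ rounds every honest chain grows by at least $k$ blocks, so by round $r+\nu$ these blocks have stabilized in some honest view; combined with ledger safety this yields $\Lambda \preceq \LOGdirty{r+\nu}{\prover_h} \preceq \ledgercup_{r+\nu}$, and therefore $(\tx,\tx') \mid \ledgercup_{r+\nu}$. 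The only residual failure event, given that we condition on a typical execution, is a Merkle collision, which has negligible probability, so the oracle is sound with parameter $\nu = k/\tau$.

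I expect the main obstacle to be the claim that ``a heavy valid chain cannot fork from the honest chain more than $k$ blocks back.'' The Common Prefix property is usually phrased for chains held by honest parties, whereas $C^V$ is adversarially chosen and merely constrained to be valid and at least as heavy as $C^h$; making the step rigorous requires either the stronger backbone statement that no valid competitive chain deviates from honest chains beyond common-prefix depth, or re-deriving it from the ``no long adversarial fork'' analysis underlying typical executions. A secondary subtlety is the bookkeeping between stable and full chains --- $C^V[{:}-k]$ may reach into the unstable suffix of $C^h$ --- which is exactly why the $k/\tau$ slack is introduced and why the argument must be carried to round $r+\nu$ rather than closed at round $r$.
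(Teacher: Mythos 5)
Your proposal is correct and follows essentially the same route as the paper's proof sketch: Merkle security (Proposition~\ref{prop:adv-merkle}) pins $(\tx,\tx')$ to consecutive positions in $C^V[{:}-k]$, Common Prefix applied to the verifier's adopted heaviest chain (exactly the step you flag as needing the ``competitive adversarial chain'' form of the property) places $C^V[{:}-k]$ inside honest chains, and Chain Growth over $\nu = k/\tau$ rounds pushes those positions into the stable part of an honest party's chain at round $r+\nu$, whence they appear in $\ledger^{\cup}_{r+\nu}$. The only substantive difference is that the paper argues by contradiction and applies Common Prefix both to $C^P_r$ and to $C^P_{r+\nu}$, the latter application supplying the persistence of those blocks that your ``have stabilized in some honest view'' step quietly relies on (Chain Growth alone gives depth, not membership).
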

\begin{proof}[Sketch]
  Suppose for contradiction that $(\tx, \tx')$ are not reported in $\ledger^{\cup}_{r + \nu}$, yet the adversary
  convinces the verifier of this at round $r$. This means that the adversary has presented some header chain $C^V$
  to the verifier which was deemed to be the longest at the time, and $(\tx, \tx')$ appear in its stable portion
  $C^V[{:}-k]$. Consider an honest prover $P$. At time $r$, the honest prover holds a chain $C^P_r$ and at round
  $r + \nu$, it holds a chain $C^P_{r + \nu}$.
  By the Common Prefix property, $C^V[{:}-k]$ is a prefix of $C^P_r$
  and of $C^P_{r + \nu}$. Furthermore, $(\tx, \tx')$ will appear in the same block (or consecutive blocks)
  in all three.
  By the Chain Growth property, $C^P_{r + \nu}$ contains at least $k$ more blocks than
  $C^P_r$. Therefore, $(\tx, \tx')$ appears in $C^P_{r + \nu}[{:}-k]$ and are
  part of the stable chain at round $r + \nu$ for party $P$. They are hence reported
  in $\ledger^P_{r + \nu} \subseteq \ledger^{\cup}_{r + \nu}$, which is a contradiction.
  Finally, by Proposition~\ref{prop:adv-merkle}, proofs of inclusion for $\tx$ and $\tx'$ cannot be forged with respect to Merkle roots in block headers other than those in $C^V$, that were initially shown to contain $\tx$ and $\tx'$ (except with negligible probability).
\end{proof}

Proofs of correctness and soundness for the \coracle constructions of Celestia, Prism and Snap-and-Chat follow a similar pattern to the proofs for the Nakamoto setting.

Lastly, for succinctness, one must leverage a construction such as superblock NIPoPoWs~\cite{nipopows}.
Here, proofs of the longest chain are $\text{poly}\log C$ where $C$ denotes the chain size.
Transaction inclusion proofs make use of \emph{infix proofs}~\cite{nipopows} in addition to
Merkle Tree proofs of inclusion into block headers. As $C \in \mathcal{O}(\text{poly}\log r)$,
these protocols are $\mathcal{O}(\text{poly}\log r)$ as desired. Completeness and soundness
follow from the relevant security proofs of the construction.

\section{Proofs}
\label{sec:proofs}
\iflong
Our proof structure is as follows. First, we prove some facts about the bisection game,
in particular its succinctness, soundness, and completeness. We later leverage these results
to show that our full game enjoys the same virtues.
This section is based on the generalized model for lazy blockchains presented in Section~\ref{sec:generalize}, and the axioms used by the proofs are given by Table~\ref{tab:axioms}.
\fi

\ifshort
\begingroup
\renewcommand{\arraystretch}{1.5} %
\begin{table*}[]
    \centering
    \begin{tabular}{|l|l|l|}
    \hline
    \textbf{Primitive}                  & \textbf{Axiom}         & \textbf{Requirement}                                                                                    \\ \hhline{|=|=|=|}
    \multirow{3}{*}{\textbf{Ledger}}    & \emph{Safety}          & $\ledger^{P_1}_{r_1} \preceq \ledger^{P_2}_{r_2}$                                                       \\ \cline{2-3}
                                        & \emph{Liveness}        & $\tx$ in $\ledger^{P}_{r+u}$                                                                            \\ \cline{2-3}
                                        & \emph{Lipschitz}       & $|\ledger^P_{r_2}| - |\ledger^P_{r_1}| \leq \alpha(r_2 - r_1)$                                          \\ \hhline{|=|=|=|}
    \multirow{3}{*}{\textbf{Consensus}}
                                        & \emph{Completeness}    & $(\tx, \tx')$ in $\ledger^{\cup}_{r} \rightarrow \mathcal{CO}_r(\tx, \tx')$                               \\ \cline{2-3}
                                        & \emph{Soundness}       & $(\tx, \tx')$ not in $\ledger^{\cup}_{r + \nu} \rightarrow \lnot \mathcal{CO}_r(\tx, \tx')$              \\ \cline{2-3}
                                        & \emph{Succinctness}    & $f(r) \in \mathcal{O}(\text{poly} \log r)$                                                              \\ \hhline{|=|=|=|}
    \multirow{3}{*}{\textbf{Execution}}
                                        & \emph{Completeness}    & $\transition(\st, \tx) = \st' \rightarrow \left<\transition\right>(\stc, \tx, \pi) = \left<\st'\right>$ \\ \cline{2-3}
                                        & \emph{Soundness}       & $\left<\transition(\st, \tx)\right> \neq \left<\transition\right>(\stc, \tx, \pi)$ is hard              \\ \cline{2-3}
                                        & \emph{Succinctness}    & $g(r) \in \mathcal{O}(\text{poly}\log r)$                                                               \\ \hline
    \end{tabular}
    \caption{The $9$ axioms required to construct a succinct light client.}
    \label{tab:axioms}
\end{table*}
\endgroup
\fi

\ifshort
The proofs of the following two lemmas are straightforward and are provided in the full version.
\fi

\begin{restatable}[Bisection Succinctness]{lemma}{restateBisectionSuccinctness}
\label{lem:bisection-succinctness}
Consider a \coracle and an \eoracle with $f$ and $g$ communication complexity respectively.
Then, the bisection game invoked at round $r$ with trees of size $\ell$ ends in $\log(\ell)$ rounds of communication and
has a total communication complexity of $O(\log{\ell}+f(r)+g(r))$.
\end{restatable}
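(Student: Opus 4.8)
The plan is to track the two quantities in the statement --- the number of communication rounds and the total number of bits exchanged --- directly through the execution of Algorithm~\ref{alg.tree.vs.tree} (the challenger), Algorithm~\ref{alg.respond.bisection} (the responder), and the verifier's mediation (\cf Figure~\ref{alg.verifier}). The key observation is that the bound must hold for an arbitrary, possibly adversarial, pair of provers, so the argument should not rely on honest behaviour to terminate; instead it rests on the fact that the verifier itself caps both quantities (``the verifier forwards and verifies exactly up to $\log\ell$ inner node queries and one leaf query'') and enforces a one-round timeout on every message, resolving the game immediately if either party deviates or stalls.

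First I would bound the number of rounds. Since the bisection game compares two dirty trees with $\ell$ leaves and $\ell$ a power of two, each tree has depth $\log\ell$. Algorithm~\ref{alg.tree.vs.tree} descends one level per iteration, and each iteration is a single query (one bit selecting a child) forwarded by the verifier, plus a single response (the two children hashes) forwarded back. After at most $\log\ell$ iterations the descent reaches a leaf index $j$; there is then one more query--response pair to open that leaf, followed by a constant number of additional single-round exchanges: the responder opens leaf $j-1$ together with its Merkle proof when $j\ge 1$, and the verifier invokes the \coracle on $(\tx_{j-1},\tx_j)$ and the \eoracle on $(\stc_{j-1},\tx_j,\stc_j)$, each of which is a single-round black-box protocol by assumption. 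This gives $\log\ell + O(1)$ rounds, which is $\log\ell$ up to the stated additive constant.

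Next I would bound communication. With binary Merkle trees every challenger message in the descent is a single bit and every responder message is two hashes, i.e.\ $2H = O(1)$ bits since $H$ is a fixed security parameter; over the $\le\log\ell$ descent iterations this contributes $O(\log\ell)$ bits. The leaf opening reveals $(\tx_j,\stc_j)$ and, when $j\ge 1$, also $(\tx_{j-1},\stc_{j-1})$ with a Merkle inclusion proof in a depth-$\log\ell$ tree; transactions and state commitments have constant size by assumption, and the Merkle proof is $O(H\log\ell)=O(\log\ell)$ bits. The \coracle invocation contributes $f(r)$ bits and the \eoracle invocation $g(r)$ bits by definition of their communication complexities. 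Summing, the total is $O(\log\ell)+O(\log\ell)+f(r)+g(r)=O(\log\ell+f(r)+g(r))$.

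I expect the only real subtlety --- rather than a genuine obstacle --- to be the bookkeeping around which party may misbehave: the descent must be shown to terminate in $\log\ell$ steps even against an adversarial challenger or responder, which follows from the verifier's hard cap on inner queries and its per-message timeout rather than from any honesty assumption. A secondary point to handle cleanly is the boundary case $j=0$, where no leaf $j-1$ and no oracle query for a preceding entry is needed, so the bound is strictly easier there; and the observation that, since this lemma concerns only message counts and sizes, correctness of the identified point of disagreement is irrelevant here and is deferred to the soundness argument.
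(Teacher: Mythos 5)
Your proposal is correct and follows essentially the same route as the paper's proof: both bound the rounds by the verifier's hard cap of $\log\ell$ inner-node queries (so no honesty assumption is needed for termination), and both sum $O(1)$ bits per descent round, constant-size leaf entries, an $O(\log\ell)$-size Merkle proof for index $j-1$ (with the $j=0$ case handled separately), plus one \coracle and one \eoracle invocation contributing $f(r)+g(r)$.
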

\iflong
\begin{proof}
When the dirty trees have $\ell$ leaves, there can be at most $\log{\ell}$ valid queries,
as the verifier aborts the game after $\log\ell$ queries.
Hence, the bisection game ends in $\log{\ell}$ rounds of interactivity.

At each round of communication, the challenger indicates whether he wants the left or the right child to be opened (which can be designated by a constant number of bits), and the responder replies with two constant size hash values.
At the final round, the responder returns $(\tx_{j-1},\stc_{j-1})$ and $(\tx_j,\stc_j)$, the augmented dirty ledger entries at indices $j-1$ and $j$, along with the Merkle proof for the $j-1^{\text{st}}$ entry (Alternatively, it only returns $(\tx_0,\stc_0)$).
The entries have constant size since transactions and state commitments are assumed to have constant sizes.
The Merkle proof consists of $\log{\ell}$ constant size hash values.
Consequently, the total communication complexity of the bisection game prior to the oracle queries becomes $O(\log{\ell})$.

Finally, the verifier queries the \coracle on $(\tx_{j-1},\tx_j)$ and the \eoracle on $(\stc_{j-1},\tx_j,\stc_j)$ with $O(f(r))$ and $O(g(r))$ communication complexity.
Hence, the total communication complexity of the bisection game becomes $O(\log(\ell)+f(r)+g(r))$.
\end{proof}
\fi

\begin{restatable}[Bisection Completeness]{lemma}{restateBisectionCompleteness}
\label{lem:bisection-completeness}
Suppose the consensus and execution oracles are complete and the ledger is safe.
Then, the honest responder wins the bisection game against any PPT adversarial challenger.
\end{restatable}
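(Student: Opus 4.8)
The plan is to show that every check the verifier performs against an honest responder in the bisection game will pass, so that none of the "challenger wins" conditions in Figure~\ref{alg.verifier} are ever triggered; the responder then wins by default. First I would fix an honest responder holding a \wf augmented dirty ledger $\ledgeraug$ and the corresponding dirty tree $\dtreesp$ (by the remark after Definition~\ref{def:safe}, these are always \wf), and let the adversary play the challenger with arbitrary queries routed through the verifier. I would walk through the responder's algorithm (Algorithm~\ref{alg.respond.bisection}): on each inner-node query it opens the correct children of the appropriate node of $\dtreesp$, so by construction $h = H(h_l \concat h_r)$ holds, clearing the syntactic-validity and hash-consistency conditions. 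Since the verifier aborts after $\log\ell$ inner queries and one leaf query, the interaction terminates, and the responder never times out because it can compute every reply in one round.

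Next I would handle the terminal leaf checks, which is where the \wfc of $\ledgeraug$ does the work. When the challenger pins down a leaf index $j$, the responder reveals $\ledgeraug[j] = (\tx_j, \left<\st_j\right>)$ and, if $j \ge 1$, also $\ledgeraug[j-1] = (\tx_{j-1}, \left<\st_{j-1}\right>)$ with a Merkle proof into $\mroot$; the Merkle proof verifies because the leaf genuinely sits at position $j-1$ in the honestly built tree. For condition~\ref{condition:challenger-consensus}, \wfc gives $(\tx_{j-1}, \tx_j) \mid \ledger$ for the responder's local dirty ledger $\ledger$; by ledger safety $\ledger \preceq \ledgercup$, and since consecutiveness of a subarray is preserved under taking a superledger that extends it, $(\tx_{j-1}, \tx_j) \mid \ledgercup$ as well — so the verifier's consensus-oracle check (invoked via the header chain / \coracle) must accept. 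This is precisely the step I expect to require the most care: one has to argue that a subarray relation witnessed inside the short ledger $\ledger$ survives into $\ledgercup$, using that safety makes $\ledger$ a \emph{prefix} and not merely a subsequence, so no new transaction can be inserted between $\tx_{j-1}$ and $\tx_j$. For condition~\ref{condition:challenger-execution}, \wfc gives $\st_j = \transition^*(\genesisstate, \ledger[{:}j]) = \transition(\st_{j-1}, \tx_j)$, so taking $\st_{j-1}$ itself as the witnessing state we have $\left<\st_{j-1}\right> = \stc_{j-1}$ and $\stc_j = \left<\transition(\st_{j-1}, \tx_j)\right>$; the verifier's \eoracle check then accepts by \emph{completeness} of the execution oracle (the honest prover supplies the proof $\pi$ produced by the proof-computing machine $M$). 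For the base case $j = 0$, \wfc says $\ledgeraug[0] = (\epsilon, \genesisstatec)$, matching the verifier's hard-coded genesis commitment and $\tx_0 = \epsilon$.

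Finally I would remark that the "responder wins" conditions impose obligations only on the challenger (valid queries, no timeout), which the verifier — not the responder — enforces; whatever the adversarial challenger does, either it violates one of these and loses immediately, or it plays by the rules, in which case every "challenger wins" condition above has been shown to hold, so the responder wins. Hence the honest responder wins against any PPT adversarial challenger, which is the claim. The argument is information-theoretic except for the reliance on \coracle/\eoracle completeness, which is assumed in the hypothesis, so no negligible error term appears; that subtlety (as opposed to the soundness direction, which will need collision resistance) is worth flagging.
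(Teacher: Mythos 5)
Your proposal is correct and follows essentially the same route as the paper's proof: enumerate the verifier's winning conditions in Algorithm~\ref{alg.verifier} and show each one holds for the honest responder, using \wfc of the responder's tree for the structural/hash checks, ledger safety to lift $(\tx_{j-1},\tx_j)\mid\ledger$ to $\ledgercup$, and completeness of the consensus and execution oracles for the leaf-level checks (with $\st_{j-1}$ as the witnessing state). Your extra care about the prefix-vs-subsequence point and the $j=0$ base case matches the paper's argument, so nothing further is needed.
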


\iflong
\begin{proof}
We will enumerate the conditions checked by the verifier in Algorithm~\ref{alg.verifier} to show that the honest responder always wins.

The honest responder replies to each valid query from the verifier, and the replies are syntactically valid.
Hence, conditions (1) and (2) of Algorithm~\ref{alg.verifier} cannot fail.

By the construction of the honest responder's Merkle tree, each inner node $h$ queried by the challenger satisfies $h=H(h_l \concat h_r)$ for its children $h_l, h_r$ returned
in response to the query.
For the same reason, the Merkle proof given by the responder is valid.
Hence, conditions (3) and (4) cannot fail either.

If $j=0$, by the \wfc of the responder's dirty ledger, $(\tx_j,\stc_j) = (\epsilon, \left<\genesisstate\right>)$, so condition (7) cannot fail.

Let $r$ denote the round at which the bisection game was started.
If $j \geq 1$,
by the \wfc of the responder's dirty tree, for any consecutive pair of leaves at indices
$j-1$ and $j$, it holds that $(\tx_{j-1}, \tx_j) \mid \LOGdirty{r}{\prover} \preceq \ledger^{\cup}_r$ due to ledger safety.
As the \coracle is complete, by Definition~\ref{def:coracle-p}, it returns \emph{true} on $(\tx_{j-1},\tx_j)$,
implying that the condition (5) cannot fail.

Finally, by the \wfc of the responder's dirty tree, for any consecutive pair of leaves at indices $j-1$ and $j$, there exist a state $\st_{j-1}$ such that $\stc_{j-1} = \left<\st_{j-1}\right>$, and $\stc_j = \left<\transition(\st_{j-1},\tx_{j})\right>$.
As the \eoracle is complete, by Definition~\ref{def:eoracle-p}, $M(\st_{j-1},\tx_j)$ outputs a proof $\pi$ such that $\left<\transition\right>(\left<\st_{j-1}\right>, \tx_{j}, \pi) = \left<\transition(\st_{j-1},\tx_{j})\right>$.
Using the observations above, $\left<\transition\right>(\stc_{j-1}, \tx_{j}, \pi) = \stc_j$.
Consequently, condition (6) cannot fail.
Thus, the honest responder wins the bisection game against any adversary.
\end{proof}
\fi

Let $\textsc{Verify}(\pi, \mroot, \ell, i, v)$ be the verification function for Merkle proofs.
It takes a proof $\pi$, a Merkle root $\mroot$, the size of the tree $\ell$, an index for the leaf $0 \leq i < \ell$ and the leaf $v$ itself.
It outputs $1$ if $\pi$ is valid and $0$ otherwise.
The following proposition is a well-known folklore result about the security of Merkle trees, stating that it is impossible to prove proofs of inclusion for elements that were not present during the tree construction.
It extends the result that Merkle trees are
collision resistant~\cite{katz}.
\begin{proposition}[Merkle Security]
\label{prop:adv-merkle}
Let $H^s$ be a collision resistant hash function used in the binary Merkle trees.
For all PPT $\mathcal{A}$:
$
\Pr[(v,D,\pi,i) \gets \mathcal{A}(1^\lambda): \mroot = \textsc{MakeMT}(D).\mathrm{root} \land
D[i] \neq v \land \textsc{Verify}(\pi, \mroot, |D|, i, v) = 1] \leq \text{negl}(\lambda)
$.
\end{proposition}

\iflong
\begin{proof}
Suppose $\mathcal{A}$ is the adversary of the statement.
We will construct a hash collision adversary $\mathcal{A}'$ that calls $\mathcal{A}$ as a subroutine.
The adversary $\mathcal{A}'$ works as follows. It invokes $\mathcal{A}(1^\lambda)$, and obtains $v, D, \pi, i$.
Let $h^*_1, \ldots, h^*_{a-1}$ denote the hash values within $\pi$, where $a=\log|\ell|+1$ is the height of the Merkle tree.
Let $h_1, \ldots, h_{a-1}$ denote the inner nodes within the Merkle tree at the positions that correspond to those of $h^*_1, \ldots h^*_{a-1}$.
Let $\tilde{h}_1, \ldots, \tilde{h}_{a-1}$ denote the siblings of $h_1, \ldots, h_{a-1}$.
Define $\tilde{h}_{a} := \mroot$.
Then, $\tilde{h}_{1} = H(D[i])$, and for $i=1,\ldots,a-1$;
\begin{itemize}
    \item If $h_i$ is the left child of its parent, $\tilde{h}_{i+1} = H(h_i \concat \tilde{h}_i)$.
    \item If $h_i$ is the right child of its parent, $\tilde{h}_{i+1} = H(\tilde{h}_i \concat h_i)$.
\end{itemize}

Consider the event \textsc{Merkle-Collision} that $\mathcal{A}$ succeeds.
In that case, there exists a sequence of hash values $\tilde{h}^*_1, \ldots, \tilde{h}^*_{a}$ such that $\tilde{h}^*_1=H(v)$, $\tilde{h}^*_a=\mroot$, and for $i=1,\ldots,a-1$,
\begin{itemize}
    \item If $h_i$ is the left child of its parent, $\tilde{h}^*_{i+1} = H(h^*_i \concat \tilde{h}^*_i)$.
    \item If $h_i$ is the right child of its parent, $\tilde{h}^*_{i+1} = H(\tilde{h}^*_i \concat h^*_i)$.
\end{itemize}

Finally, for $i=1,\ldots,a$, define $h_{i,m}$ and $h_{i,c}$ as follows:
\begin{itemize}
    \item $h_{a,m}=\mroot$, $h_{a,c}=\mroot$.
    \item $h_{0,m}=v$, $h_{0,c}=D[i]$.
    \item If $h_i$ is the left child of its parent, $h_{i,m} = h^*_i \concat \tilde{h}^*_i$ and $h_{i,c} = h_i \concat \tilde{h}_i$.
    \item If $h_i$ is the right child of its parent, $h_{i,m} = \tilde{h}^*_i \concat h^*_i$ and $h_{i,m} = \tilde{h}_i \concat h_i$.
\end{itemize}

Finally, the adversary $\mathcal{A}'$ finds the first index $p$ for which there is a collision
\[
    H(h_{i,m}) = H(h_{i,c}) \text{ and } h_{i,m} \neq h_{i,c}
\]
and returns
$a:=h_{p,m}$ and $b:=h_{p,c}$, if such an index $p$ exists. Otherwise, it returns \textsc{Failure}.

In the case of \textsc{Merkle-Collision}, for $i=0,\ldots,a-1$, $h_{i+1,m} = H(h_{i,m})$, $h_{i+1,c} = H(h_{i,c})$.
As $v \neq D[i]$, a collision must have been found for at least one index $p \in [h-1]$.
Therefore, $\Pr[\mathcal{A}' \text{ succeeds}] = \Pr[\textsc{Merkle-Collision}]$.

However, since $\forall \text{ PPT } \mathcal{A}'$:
\[
\Pr[(a, b) \gets \mathcal{A}'(1^\lambda): a \neq b, H(a)=H(b)] \leq \text{negl}(\lambda)\,,
\]
therefore, $\Pr[\textsc{Merkle-Collision}] = \text{negl}(\lambda)$.
\end{proof}
\fi

\iflong
\begin{figure*}
    \centering
    \ifonecolumn
    \includegraphics[width=\linewidth]{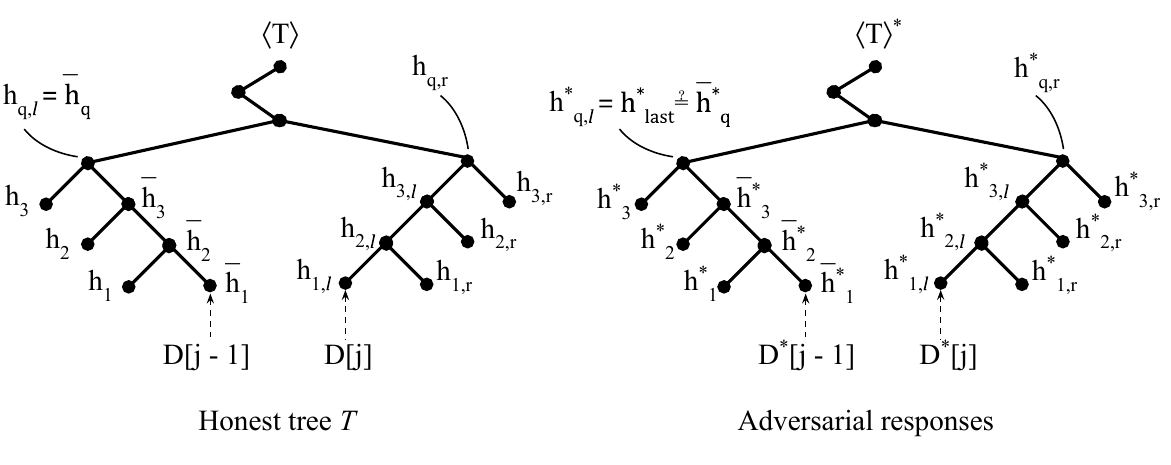}
    \fi
    \iftwocolumn
    \includegraphics[width=0.7\linewidth]{figures/bisection-soundness-proof.pdf}
    \fi
    \caption{The world in the view of the proof of Lemma~\ref{lem:bisection-game-entries}.
             Starred quantities (right-hand side) denote adversarially provided values.
             Unstarred quantities (left-hand side) denote the respective honestly provided values.
             The inner node at height $q$ from the leaves is the level containing the lowest
             common ancestor between leaves with indices $j$ and $j-1$.}
    \label{fig:bisection-soundness-proof}
\end{figure*}
\fi

The next lemma establishes an important result for our bisection game: That the
honest challenger can pinpoint the \emph{first point of disagreement} or \emph{last
point of agreement} indices $j - 1$ and $j$ within the responder's claimed tree. The
result stems from the fact that the data are organized into a Merkle tree which can
be explored, moving left or right, one level at a time, ensuring the invariant
that \emph{the first point of disagreement remains within the subtree explored}
at every step.

\begin{lemma}[Bisection Pinpointing]
\label{lem:bisection-game-entries}
Let $H^s$ be a collision resistant hash function.
Consider the following game among an honest challenger $\prover$, a verifier $\verifier$ and an adversarial responder $\prover^*$:
The challenger $\prover$ receives an array $D$ of size $\ell$ from $\prover^*$, and calculates the corresponding dirty tree $\dtreesp$ with root $\mroot$.
Then, $\prover$ plays the bisection game against $\prover^*$ claiming root $\mroot^* \neq \mroot$ and size $\ell$.
Finally, $\verifier$ outputs $(1, D^*[j-1], D^*[j])$ if $\prover$ wins the bisection game; otherwise, it outputs $(0, \bot, \bot)$.
Here, $D^*[j-1]$ and $D^*[j]$ are the two entries revealed by $\prover^*$ for the consecutive indices $j-1$ and $j$ during the bisection game. ($D^*[-1]$ is defined as $\bot$ if $j=0$.)
Then, for all PPT adversarial responder $\mathcal{A}$,
$\Pr[D \gets \mathcal{A}(1^\lambda);
     (1, D^*[j-1], D^*[j]) \gets \verifier
     \land (D^*[j-1] \neq D[j-1] \lor D^*[j] = D[j])]
     \leq \negl(\lambda)$.
\end{lemma}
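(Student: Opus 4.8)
The plan is to extract from any run in which the honest challenger wins two facts: that the predecessor entry revealed by $\prover^*$ matches the challenger's, $D^*[j-1]=D[j-1]$, and that the entry revealed at $j$ differs, $D^*[j]\neq D[j]$; the disjunction in the statement is false exactly when both hold. I would condition on the bisection reaching the leaf-revelation stage with the verifier's structural checks passing (conditions (1)--(4) of Algorithm~\ref{alg.verifier}: no timeout, syntactic validity, inner-node hash consistency, and a valid Merkle proof $\pi$ for the responder's $(j-1)$-st leaf); if the challenger instead wins because one of these fails, no entries $D^*[j-1],D^*[j]$ get extracted and the claim is vacuous. Note that $D$ is adversarially chosen here, but the argument only uses that the challenger's tree is $\textsc{MakeMT}(D)$.

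The first step is a \emph{path invariant} for Algorithm~\ref{alg.tree.vs.tree}. Since $\ell$ is a power of two, $\dtreesp$ and the tree claimed by $\prover^*$ have the same shape, so each query addresses a well-defined node position $p$ over a leaf range $[a,b)$; write $v_p$ for the challenger's value there, which is the root of $\textsc{MakeMT}(D[a{:}b])$, and $v_p^*$ for the value $\prover^*$ claims. I would show by induction along the traversal that $v_p^*\neq v_p$ at every visited $p$ and that the responder's claimed path from $\mroot^*$ down to $p$ has been checked hash-consistent. The base case is $\mroot^*\neq\mroot$. For the step, the responder reveals children $h_l^*,h_r^*$ and the verifier checks $v_p^*=H(h_l^*\concat h_r^*)$ (a failure aborts the game before any leaf); since $v_p=H(h_l\concat h_r)$ and $v_p\neq v_p^*$ we must have $(h_l,h_r)\neq(h_l^*,h_r^*)$ — only that $H$ is a function is used — and Algorithm~\ref{alg.tree.vs.tree} descends into a child on which the two values disagree, preserving the invariant. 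At the leaf $j$ this yields that $\prover^*$'s claimed value there differs from the challenger's, which forces $D^*[j]\neq D[j]$ since equal entries would produce equal leaf values. This holds with probability one, so the entire failure probability of the lemma comes from the event $D^*[j-1]\neq D[j-1]$.

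For that event, the point is that Algorithm~\ref{alg.tree.vs.tree} descends into a node's \emph{right} child only when its left children agreed. Let $c$ be the lowest common ancestor of the consecutive leaves $j-1$ and $j$; its left subtree covers a range $[a,m)$ with $j-1=m-1$ its last leaf, and $j=m$ opens its right subtree, so $c$ lies on the traversal path and the path turned right at $c$. Hence the value $\prover^*$ claims at $c$'s left child equals the challenger's value there, namely $\textsc{MakeMT}(D[a{:}m]).\mathrm{root}$. I would then take the valid Merkle proof $\pi$ supplied for $D^*[j-1]$ at index $j-1$ against $\mroot^*$, and argue — chaining through the node values on the path to $j$ that the bisection phase has already pinned, and using collision resistance of $H$ to rule out any discrepancy above $c$'s left child — that the lower portion of $\pi$ must verify as a Merkle proof for $D^*[j-1]$ at local index $m-1-a$ against $\textsc{MakeMT}(D[a{:}m]).\mathrm{root}$. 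Since the honest content at that position is $D[j-1]$, if $D^*[j-1]\neq D[j-1]$ this tuple contradicts Merkle security (Proposition~\ref{prop:adv-merkle}), which happens only with negligible probability.

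The step I expect to be the main obstacle is exactly this last reduction: the proof $\pi$ is given against the \emph{claimed} root $\mroot^*$, not against $\mroot=\textsc{MakeMT}(D).\mathrm{root}$, so Proposition~\ref{prop:adv-merkle} does not apply to $D$ directly; the work lies in showing that restricting $\pi$ to $c$'s left subtree is forced, up to a hash collision, to be a proof against the honestly-recomputed sub-root $\textsc{MakeMT}(D[a{:}m]).\mathrm{root}$, which requires invoking the intermediate node values the verifier observed during the bisection. The remaining cases are routine: for $j=0$ there is no predecessor entry and the claim is just $D^*[0]\neq D[0]$, already given by the invariant; and a union bound over the single negligible Merkle/collision event completes the proof.
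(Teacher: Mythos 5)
Your proposal is correct and follows essentially the same route as the paper's proof: the inductive path invariant forcing $D^*[j] \neq D[j]$, the identification of the last right turn with the lowest common ancestor of leaves $j-1$ and $j$ (where the responder's claimed left child agrees with the challenger's sub-root over $D[a{:}m]$), and the split of the event $D^*[j-1] \neq D[j-1]$ into a hash-collision event above that node and a violation of Proposition~\ref{prop:adv-merkle} against the honestly recomputed sub-root below it. The only (harmless) deviation is your treatment of $j=0$ directly via the invariant rather than via condition (7) of Algorithm~\ref{alg.verifier}.
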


\begin{proof}
Consider a PPT adversarial responder $\prover^*$ playing the bisection game against
the honest challenger $\prover$ at some round $r$.
Since the challenger is honest, his queries are valid and he does not time out.
For the responder to win the bisection game, she must satisfy all the conditions checked by Algorithm~\ref{alg.verifier}.

Consider the event $\textsc{Bad}$ that the responder wins.
Conditioned on $\textsc{Bad}$, the responder does not timeout and her replies are syntactically valid.
Let $a=\log\ell+1$ denote the height of the challenger's dirty tree.
At each round $i \in [a-1]$ of interactivity in the bisection game, the responder reveals two hash values $h^*_{a-i,l}$ and $h^*_{a-i,r}$.
The subscript $a-i$ signifies the alleged height of the nodes $h^*_{a-i,l}$ and $h^*_{a-i,r}$.
Let $h_{i,l}$ and $h_{i,r}$ denote the inner nodes in the honest challenger's dirty tree with the same positions as $h^*_{i,l}$ and $h^*_{i,r}$.
These will always exist, as the verifier limits the rounds of interaction to $a$.

At the first round, the responder reveals $h^*_{a-1,l}$ and $h^*_{a-1,r}$ as the alleged left and right children of its dirty tree root $\mroot^*$.
By condition (3) of Algorithm~\ref{alg.verifier}, $H(h^*_{a-1,l} \concat h^*_{a-1,r}) = \mroot^*$.
However, since $\mroot^* \neq \mroot = H(h_{a-1,l} \concat h_{a-1,r})$, either $h_{a-1,l} \neq h^*_{a-1,l}$ or $h_{a-1,r} \neq h^*_{a-1,r}$ or both.
Then, if $h_{a-1,l} \neq h^*_{a-1,l}$, the challenger picks $h^*_{a-1,l}$ to query next; else, he picks $h^*_{a-1,r}$.

We observe that if a node $h^*=h^*_{i,l}$ or $h^*=h^*_{i,r}$, $i \in \{2,\ldots,a-1\}$, returned by the responder, is queried by the honest challenger,
\begin{itemize}
    \item For the two children $h^*_{i-1,l}$ and $h^*_{i-1,r}$ of $h^*$, it holds that $h^* = H(h^*_{i-1,l} \concat h^*_{i-1,r})$ by condition (3).
    \item For the nodes $h$, $h_{i-1,l}$ and $h_{i-1,r}$ in the challenger's dirty tree that have the same positions as $h^*$, $h^*_{i-1,l}$ and $h^*_{i-1,r}$; $h = H(h_{i-1,l} \concat h_{i-1,r})$, and $h \neq h^*$.
    \item By implication, either $h_{i-1,l} \neq h^*_{i-1,l}$ or $h_{i-1,r} \neq h^*_{i-1,r}$ or both. If $h_{i-1,l} \neq h^*_{i-1,l}$, the challenger picks $h^*_{i-1,l}$ as its next query; else, it picks $h^*_{i-1,r}$ as its next query.
\end{itemize}
The queries continue until the challenger queries a node $h^*=h_{1,l}$ or $h^*=h_{1,r}$ returned by the responder, and the responder reveals the leaf $D^*[j]$ such that $H(D^*[j])=h^*$.
By induction, $h^*$ is different from the node $h = H(D[j])$ with the same position in the challenger's dirty tree. Thus, $D^*[j] \neq D[j]$.

If $j=0$, it must hold that $D^*[0] = (\epsilon, \left<\genesisstate\right>)$ by condition (7) of Algorithm~\ref{alg.verifier}.
However, since the challenger's dirty tree is \wf, $D[0] = (\epsilon, \left<\genesisstate\right>)$ as well.
Hence, $D^*[0] = D[0]$, therefore necessarily $j > 0$.

(When the provers hold MMRs instead of Merkle treees, responder's augmented dirty ledger entries $D^*[j-1]$ and $D^*[j]$ can lie in different Merkle trees held by the responder.
In this case, since the honest challenger did not initiate a bisection game between the responder's peak $\dtreesp^*_i$ containing $D^*[j-1]$ and his corresponding inner node $\dtreesp_i=\dtreesp^*_i$, $D^*[j-1]=D[j-1]$ with overwhelming probability.
To show this, we construct the PPT Merkle tree adversary that outputs $D^*[j-1]$, the honest challenger's leaves under $\dtreesp_i$, the responder's Merkle proof $\pi^*$ for $D^*[j-1]$ with respect to $\dtreesp_i=\dtreesp^*_i$ and the index of $D[j-1]$ within the subtree of $\dtreesp_i$, if $D^*[j-1] \neq D[j-1]$; and $\textsc{Failure}$ otherwise.
Since this adversary succeeds except with negligible probability in $\lambda$, $D^*[j-1]=D[j-1]$ with overwhelming probability, and this concludes the proof.
In the rest of this section, we assume that $j-1$ and $j$ lie in the same Merkle tree of the responder.)

As $j>0$,
there must exist a \emph{last} node queried by the challenger such that
for its children $h^*_{q,l}$ and $h^*_{q,r}$ revealed by the responder
at height $q$, it holds that
$h^*_{q,l}=h_{q,l}$ and $h^*_{q,r} \neq h_{q,r}$
(This is the last time the challenger went \emph{right}).
Define $h^*_{\textrm{last}}=h^*_{q,l}$
\iflong(see Figure~\ref{fig:bisection-soundness-proof})\fi.

By condition (4) of Algorithm~\ref{alg.verifier}, the Merkle proof $\pi^*$ for $D^*[j-1]$ is valid with respect to $\mroot^*$.
Let $h^*_1, h^*_2, \ldots, h^*_{a-1}$ denote the sequence of nodes on $\pi^*$
Let $\tilde{h}^*_1 := H(D^*[j-1])$
and define $\tilde{h}^*_{i+1}$, $i=1,\ldots,a-1$, recursively as follows:
$\tilde{h}^*_{i+1} := H(h^*_i \concat \tilde{h}^*_{i})$ if $h^*_i$ is the left child of its parent, and, $\tilde{h}^*_{i+1} := H(\tilde{h}^*_{i} \concat h^*_i)$ if $h^*_i$ is the right child of its parent.
Since $\pi^*$ is valid, $\tilde{h}^*_{a} = \mroot^*$ (The nodes $\tilde{h}^*_i$, $i \in [a-1]$, are the alleged nodes on the path connecting $D^*[j-1]$ to the root $\mroot^*$, and $h^*_i$ are their alleged siblings).

Let $h_i$, $i \in [a-1]$, denote the inner nodes in the challenger's dirty tree with the same positions as $h^*_i$.
Let $\tilde{h}_i$, $i \in [a-1]$, denote the inner nodes in the challenger's dirty tree on the path from $D[j-1]$ to $\mroot$.
These inner nodes satisfy the following relations for $i \in [a-1]$:
$\tilde{h}_a = \mroot$, $\tilde{h}_1 := H(D[j-1])$, $\tilde{h}_{i+1} = H(h_i \concat \tilde{h}_{i})$ if $h_i$ is the left child of its parent, and, $\tilde{h}_{i+1} = H(\tilde{h}_{i} \concat h_i)$ if $h_i$ is the right child of its parent.

Consider the event $\textsc{Discrepancy}$ that $\tilde{h}^*_q \neq h^*_{\textrm{last}}$ and the event \linebreak
$\textsc{Invalid-Proof}$ that $\tilde{h}^*_q = h^*_{\textrm{last}}\ \land\ D[j-1] \neq D^*[j-1]$.
Since
$\Pr[D[j-1] \neq D^*[j-1]\ |\ \textsc{Bad}] \leq
\Pr[\textsc{Discrepancy}] + \Pr[\textsc{Invalid-Proof}]$
we next bound the probabilities of these events.

We first construct a hash collision adversary $\mathcal{A}_1$ that calls the responder as a subroutine, and show that the event $\textsc{Discrepancy}$ implies that $\mathcal{A}_1$ succeeds.
For $i \in \{q,\ldots,a\}$, define $h^*_{i,c}$ as: $h^*_{a,c} := \mroot^*$ and $h^*_{i,c} := h^*_{i,l} \concat h^*_{i,r}$ if $i < h$.
Similarly, define $h^*_{i,m}$ as: $h^*_{a,m} := \mroot^*$, $h^*_{i,m} := \tilde{h}^*_{i} \concat h^*_{i}$ if $h_{i}$ is the right child of its parent, and $h^*_{i,m} := h^*_{i} \concat \tilde{h}^*_{i}$ if $h_{i}$ is the left child of its parent.

The adversary $\mathcal{A}_1$ calls the responder as a sub-routine, and obtains the values $h^*_{i,c}$ and $h^*_{i,m}$, $i \in \{q,\ldots,h\}$.
It finds the first index $p$ for which there is a collision
$
    H(h^*_{p,m}) = H(h^*_{p,c}) \text{ and } h^*_{p,m} \neq h^*_{p,c}
$
and returns
$a:=h^*_{p,m}$ and $b:=h^*_{p,c}$, if such an index $p$ exists.
Otherwise, it returns \textsc{Failure}.

In the case of $\textsc{Discrepancy}$, $\tilde{h}^*_q \neq h^*_{\textrm{last}} = h^*_{q,l}$.
Hence, it must be the case that $h^*_{q,m} \neq h^*_{q,c}$.
However, since $h^*_{a,m} = \mroot^* = h^*_{a,c}$, a collision must have been found for at least one index $i \in \{q,\ldots,a-1\}$.
Consequently, $\textsc{Discrepancy}$ implies that $\mathcal{A}_1$ succeeds.

We next construct a Merkle tree adversary $\mathcal{A}_2$ that calls the responder as a subroutine, and show that the event $\textsc{Invalid-Proof}$ implies that $\mathcal{A}_2$ succeeds.

Let $P$ denote the sequence of leaves in the challenger's dirty tree, \ie, within $D$, that lie under the subtree with root $h_{q,l}$.
Let $\pi$ denote the sub-sequence $h^*_1,\ldots, h^*_{q-1}$ within $\pi^*$.
The adversary $\mathcal{A}_2$ receives $P$ from the responder, and constructs a \wf dirty tree using $P$ in time $O(\text{poly}(\ell))$.
It then obtains the leaf $v := D^*[j-1]$ and the Merkle proof $\pi = (h^*_1,\ldots, h^*_{q-1})$ from the responder.
Finally, it returns $v$, $P$, $\pi$ and the index $\mathrm{idx}$ of the leaf $D[j-1]$ within the sequence $P$ such that $P[\mathrm{idx}]=D[j-1]$.

If $\textsc{Invalid-Proof}$, it must be the case that $\tilde{h}^*_q = h^*_{\textrm{last}} = h_{q,l}$ and $D[j-1] \neq D^*[j-1] = v$.
Hence, $\pi$ is a valid Merkle proof for $v$ with respect to the root $h_{q,l}$ of the Merkle tree with leaves $P$.
Moreover, $v \neq P[\mathrm{idx}]$.
Consequently, $\textsc{Invalid-Proof}$ implies than $\mathcal{A}_2$ succeeds.

Finally, by the fact that $H$ is a collision-resistant hash function and Lemma~\ref{prop:adv-merkle},
\begin{align*}
    &\Pr[D[j-1] \neq D^*[j-1]\ |\ \textsc{Bad}] \leq\\
    &\Pr[\textsc{Discrepancy}] + \Pr[\textsc{Invalid-Proof}] \leq\\
    &\Pr[\mathcal{A}_1 \text{ succeeds}] + \Pr[\mathcal{A}_2 \text{ succeeds}] \leq \text{negl}(\lambda)\,.
\end{align*}
Hence, for any PPT adversarial responder, the probability that the responder wins and $(D^*[j-1] \neq D[j-1]) \lor (D^*[j] = D[j])$ is negligible in $\lambda$.
\end{proof}

The next lemma ensures that an honest challenger can win in the bisection game
by leveraging sound consensus and execution oracles to resolve any disagreements at
the leaf level.

\begin{restatable}[Bisection Soundness]{lemma}{restateBisectionSoundness}
\label{lem:bisection-soundness}
Let $H^s$ be a collision resistant hash function.
Consider an execution that satisfies ledger safety and in which
the consensus and execution oracles are sound.
Then, for all PPT adversarial responders $\mathcal{A}$ claiming root $\mroot^*$ and size $\ell$, the honest challenger claiming $\mroot \neq \mroot^*$ and $\ell$
wins the bisection game against $\mathcal{A}$
with overwhelming probability in $\lambda$.
\end{restatable}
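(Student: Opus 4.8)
The plan is to mirror the enumeration-of-conditions argument used for Bisection Completeness (Lemma~\ref{lem:bisection-completeness}), but now to show that, on winning, the adversarial responder $\mathcal{A}$ must, except with negligible probability, \emph{violate} one of the verifier's checks in Algorithm~\ref{alg.verifier}. Since every bisection game has exactly one winner, it suffices to prove $\Pr[\mathcal{A}\text{ wins}]\le\negl(\lambda)$. Fix the round $r$ of the game. Because the honest challenger always issues valid queries and never times out, it never hands $\mathcal{A}$ a win by default, so $\mathcal{A}$ wins only by satisfying all seven challenger-win conditions of Algorithm~\ref{alg.verifier}. I would condition on this event $\mathrm{Win}$, let $j$ be the leaf index the honest challenger reaches, and let $D^*[j]$ be the leaf $\mathcal{A}$ reveals there, together with $D^*[j-1]$ and a Merkle proof for it when $j\ge1$.

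First I would apply Lemma~\ref{lem:bisection-game-entries} with the honest challenger's own \wf dirty tree (which is \wf since the challenger is honest) playing the role of $D$; its proof relies only on the Merkle structure of $D$ and on the navigation of Algorithm~\ref{alg.tree.vs.tree}, so it transfers verbatim to this setting, and since the claimed sizes coincide both entries live in the same tree. This yields: conditioned on $\mathrm{Win}$, except with negligible probability $D^*[j-1]=D[j-1]$ and $D^*[j]\neq D[j]$. Next I would rule out $j=0$ on $\mathrm{Win}$: there condition~(7) forces $D^*[0]=(\epsilon,\left<\st_0\right>)$, which equals $D[0]=(\epsilon,\genesisstatec)$ by \wfc of the challenger's tree (Definition~\ref{def:safe}), contradicting $D^*[j]\neq D[j]$. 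So $j\ge1$; write $D[j-1]=(\tx_{j-1},\left<\st_{j-1}\right>)$ and $D[j]=(\tx_j,\left<\st_j\right>)$, where \wfc gives $(\tx_{j-1},\tx_j)\mid\LOGdirty{r}{\prover}$ and $\left<\st_j\right>=\left<\transition(\st_{j-1},\tx_j)\right>$. From $D^*[j-1]=D[j-1]$, the responder's entry at $j-1$ is exactly $(\tx_{j-1},\left<\st_{j-1}\right>)$, while its entry $D^*[j]=(\tx_j^*,\stc_j^*)$ satisfies $(\tx_j^*,\stc_j^*)\neq(\tx_j,\left<\st_j\right>)$; hence $\tx_j^*\neq\tx_j$ or $\stc_j^*\neq\left<\st_j\right>$.

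These two cases are closed using oracle soundness. If $\tx_j^*\neq\tx_j$: condition~\ref{condition:challenger-consensus} demands $\mathcal{CO}_r(\tx_{j-1},\tx_j^*)=\true$, but by ledger safety $\LOGdirty{r}{\prover}\preceq\ledgercup_r\preceq\ledgercup_{r+\nu}$, so $(\tx_{j-1},\tx_j)\mid\ledgercup_{r+\nu}$, and since transactions in the dirty ledger are unique, $\tx_j$ is the only transaction that can immediately follow $\tx_{j-1}$ in $\ledgercup_{r+\nu}$, so $(\tx_{j-1},\tx_j^*)\nmid\ledgercup_{r+\nu}$; a reduction that internally runs the game and outputs $(\tx_{j-1},\tx_j^*,r)$ then contradicts soundness of the \coracle (Definition~\ref{def:coracle-p}). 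If instead $\tx_j^*=\tx_j$ and $\stc_j^*\neq\left<\st_j\right>$: to pass condition~\ref{condition:challenger-execution} the verifier obtains a proof $\pi^*$ from $\mathcal{A}$ and checks $\left<\transition\right>(\stc_{j-1}^*,\tx_j,\pi^*)=\stc_j^*$, where $\stc_{j-1}^*=\left<\st_{j-1}\right>$ by the previous paragraph; since $\left<\transition(\st_{j-1},\tx_j)\right>=\left<\st_j\right>\neq\stc_j^*$, a reduction that simulates the honest challenger and verifier (both of which hold $\st_{j-1}$) and forwards the triple $(\st_{j-1},\tx_j,\pi^*)$ contradicts soundness of the \eoracle (Definition~\ref{def:eoracle-p}). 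A union bound over these negligibly-probable events yields $\Pr[\mathrm{Win}]\le\negl(\lambda)$, so the honest challenger wins with overwhelming probability. I expect the main obstacle to be the reduction bookkeeping: verifying that Lemma~\ref{lem:bisection-game-entries} really carries over when $D$ is the honest ledger rather than an adversarial array, and arranging the \eoracle reduction so that its simulator genuinely possesses the state $\st_{j-1}$ that it feeds to the soundness experiment --- the consensus case additionally needs ledger safety combined with transaction uniqueness to certify $(\tx_{j-1},\tx_j^*)\nmid\ledgercup_{r+\nu}$.
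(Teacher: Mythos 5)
Your proposal is correct and follows essentially the same route as the paper's proof: it invokes the Bisection Pinpointing lemma to force $D^*[j-1]=D[j-1]$ and $D^*[j]\neq D[j]$, rules out $j=0$ via the genesis condition, and then splits into the transaction-mismatch and state-commitment-mismatch cases, reducing the first to \coracle soundness (using ledger safety plus transaction uniqueness) and the second to \eoracle soundness with a reduction that recomputes $\st_{j-1}$ from the honest ledger, exactly as the paper's adversaries $\mathcal{A}_1$ and $\mathcal{A}_2$ do. The only cosmetic difference is that the paper handles the $j=0$ exclusion inside the pinpointing lemma rather than in the soundness proof itself.
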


\begin{proof}
Consider an adversarial PPT responder $\prover^*$ playing against the honest challenger $\prover$ at some round $r$.
Since the challenger is honest, his queries are valid and he does not time out.
For the responder to win the bisection game, it must satisfy all the conditions checked by Algorithm~\ref{alg.verifier}.
Let $(\tx^*_{j-1},\stc^*_{j-1})$ and $(\tx^*_j,\stc^*_j)$ denote the two entries revealed by $\prover^*$ for the consecutive indices $j-1$ and $j$ in the event that it wins.

Define $\textsc{Consensus-Oracle}$ as the event that the responder wins and \linebreak $(\tx_{j-1},\stc_{j-1}) = (\tx^*_{j-1},\stc^*_{j-1})\land\tx^*_j \neq \tx_j$.
We construct a consensus oracle adversary $\mathcal{A}_1$ that calls $\prover^*$ as a subroutine and outputs $(\tx^*_{j-1},\tx^*_j,r)$.
By the \wfc of the challenger's dirty ledger and ledger safety, it holds that $(\tx_{j-1}, \tx_j) \mid \LOGdirty{r}{\prover} \preceq \ledger^{\cup}_r \preceq \ledger^{\cup}_{r+\nu}$.
Therefore, if $\textsc{Consensus-Oracle}$, it must be the case that $\tx^*_j \neq \tx_j$ does not immediately follow $\tx^*_{j-1}=\tx_{j-1}$ on $\ledger^{\cup}_{r+\nu}$ as every transaction on $\ledger^{\cup}_{r+\nu}$ is unique.
However, as the responder wins, the \coracle must have outputted \emph{true} on $(\tx^*_{j-1},\tx^*_j,r)$ by condition (5).
Hence, $\textsc{Consensus-Oracle}$ implies that $\mathcal{A}_1$ succeeds. %

Define $\textsc{Execution-Oracle}$ as the event that the responder wins and \linebreak $(\tx_{j-1},\stc_{j-1}) = (\tx^*_{j-1},\stc^*_{j-1})\ \land\ \tx^*_j = \tx_j\ \land\ \stc^*_j \neq \stc_j$.
By the \wfc of the challenger's dirty tree, there exist a state $\st_{j-1}$ such that $\stc_{j-1} = \left<\st_{j-1}\right>$, $\st_{j-1} = \transition^*(\genesisstate,\ledger[{:}j-1])$, and, $\stc_j = \left<\transition(\st_{j-1},\tx_{j})\right>$.
Therefore, if $\textsc{Execution-Oracle}$, it holds that $\stc_{j-1} = \stc^*_{j-1}$, and $\left<\transition(\st_{j-1},\tx_{j})\right> = \stc_j \neq \stc^*_j$.
However, as the responder wins, the \eoracle must have outputted \emph{true} on $\tx^*_j$, $\stc^*_{j-1}$ and $\stc^*_j$ by condition (6).
Thus, the responder must have given a proof $\pi$ such that $\left<\transition\right>(\stc^*_{j-1},\tx^*_j,\pi) = \stc^*_j$.
This implies $\left<\transition\right>(\stc_{j-1},\tx_j,\pi) = \stc^*_j \neq \left<\transition(\st_{j-1},\tx_{j})\right>$.

Finally, we construct an execution oracle adversary $\mathcal{A}_2$ that calls $\prover^*$ as a subroutine and receives $\pi$.
Then, using $\ledger$, $\mathcal{A}_2$ finds $\st_{j-1} = \transition^*(\genesisstate,\ledger[{:}j-1])$ in $O(\text{poly}(\ell))$ time.
It outputs $(\st_{j-1}, \tx_j, \pi)$.
Observe that if $\textsc{Execution-Oracle}$, then $\mathcal{A}_2$ succeeds.

Note that the event $(\tx_{j-1},\stc_{j-1}) = (\tx^*_{j-1},\stc^*_{j-1})\ \land\ (\tx_{j},\stc_{j}) \neq (\tx^*_{j},\stc^*_{j})\ \land\ \textit{Responder wins}$ is the union of the events $\textsc{Consensus-Oracle}$ and \linebreak $\textsc{Execution-Oracle}$:
\begin{align*}
    \Pr[&(\tx_{j-1},\stc_{j-1}) = (\tx^*_{j-1},\stc^*_{j-1}) \land\\
        &(\tx_{j},\stc_{j}) \neq (\tx^*_{j},\stc^*_{j}) \land\ \text{Responder wins}] =\\
    \Pr[&\textsc{Consensus-Oracle} \lor \textsc{Execution-Oracle}] \leq\\
    &\Pr[\mathcal{A}_1 \text{ succeeds}] + \Pr[\mathcal{A}_2 \text{ succeeds}] \leq \text{negl}(\lambda)\,.
\end{align*}
Moreover, by Lemma~\ref{lem:bisection-game-entries};
\begin{align*}
    \Pr[&((\tx_{j-1},\stc_{j-1}) \neq (\tx^*_{j-1},\stc^*_{j-1})\ \lor\\
        &(\tx_{j},\stc_{j}) = (\tx^*_{j},\stc^*_{j}))\ \land\ \text{Responder wins}] \leq \negl(\lambda)\,.
\end{align*}
Consequenty, $\Pr[\text{Responder wins}] = \negl({\lambda})$.
\end{proof}

\begin{theorem}[Succinctness]
\label{thm:succinctness}
Consider a consensus and execution oracle with $f$ and $g$ communication complexity respectively.
Then, the challenge game invoked at round $r$ with sizes $\ell_1$ and $\ell_2>\ell_1$ ends in $\log(\ell_1+\alpha(u+\nu))$ rounds of communication and
has a total communication complexity of $O(\log(\ell_1))+\alpha(u+\nu)(f(r)+g(r)))$.
\end{theorem}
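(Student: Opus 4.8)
The plan is to decompose the challenge game into its three constituent phases — the zooming phase (Algorithms~\ref{alg.peaks.vs.peaks} and~\ref{alg.challenger.tree.vs.peak}), the bisection game (Algorithm~\ref{alg.tree.vs.tree}), and the suffix monologue — and bound the round complexity and communication of each, then add them up. I will first establish a bound on the size of the responder's tree that the bisection game is ultimately played on. The responder is the party with the \emph{smaller} alleged augmented dirty ledger, of size $\ell_1$, so every Merkle tree in its MMR has at most $\ell_1$ leaves; hence the bisection game, when it is entered, runs on a tree of size at most $\ell_1$, and by Lemma~\ref{lem:bisection-succinctness} it takes at most $\log \ell_1$ rounds with communication $O(\log \ell_1 + f(r) + g(r))$.

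Next I would bound the zooming phase. Algorithm~\ref{alg.peaks.vs.peaks} iterates over the responder's peaks; since the responder's MMR has at most $\log \ell_1$ peaks, this contributes $O(\log \ell_1)$ rounds, each of constant-size communication (a peak comparison is a constant number of hashes). Algorithm~\ref{alg.challenger.tree.vs.peak} then walks a root-to-node path inside one of the challenger's trees; that tree has at most $\ell_2$ leaves, but the path is guided by the responder's remaining peaks, so its length is bounded by the height of the challenger's tree, which is $O(\log \ell_2)$ — and since we also check at most $\alpha(u+\nu)$ suffix entries, the relevant portion of the challenger's ledger that matters is $\ell_1 + \alpha(u+\nu)$, giving the claimed $\log(\ell_1 + \alpha(u+\nu))$ round count. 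I would be careful here to argue that only the first $\ell_1 + \alpha(u+\nu)$ leaves of the challenger's ledger are ever inspected (by the succinctness truncation described at the end of Section~\ref{sec:full-protocol}), so all logarithms can be taken with respect to $\ell_1 + \alpha(u+\nu)$ rather than $\ell_2$, which may be unbounded.

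For the suffix monologue, the challenger sends at most $\alpha(u+\nu)$ augmented dirty ledger entries, each of constant size, and the verifier performs one consensus-oracle query and one execution-oracle query per consecutive pair, costing $O(\alpha(u+\nu)(f(r)+g(r)))$ in total; the suffix monologue is non-interactive in rounds (a single message) or at worst adds $O(1)$ rounds. Summing the three phases gives $\log(\ell_1 + \alpha(u+\nu))$ rounds (dominated by the zooming/bisection path) and total communication $O(\log \ell_1 + \alpha(u+\nu)(f(r)+g(r)))$, as claimed; the $f(r)+g(r)$ from the single bisection-game leaf check is absorbed into the $\alpha(u+\nu)(f(r)+g(r))$ term. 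The main obstacle I anticipate is the bookkeeping in the zooming phase: carefully arguing that the path length in Algorithm~\ref{alg.challenger.tree.vs.peak} is $O(\log(\ell_1+\alpha(u+\nu)))$ and not $O(\log \ell_2)$, i.e.\ that the truncation of the challenger's suffix to $\alpha(u+\nu)$ entries genuinely caps the effective tree size the search operates over, and that the "subtree subsumption" invariant (Fact~\ref{fact.subsumption}) is what guarantees the search terminates within that many steps without ever needing to descend below the relevant prefix.
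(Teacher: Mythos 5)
Your decomposition (zooming, bisection, suffix monologue) and your handling of the bisection game and the suffix monologue match the paper: the responder is the party with the smaller ledger, so any bisection is played on a tree of at most $\ell_1$ leaves and Lemma~\ref{lem:bisection-succinctness} gives $\log\ell_1$ rounds and $O(\log\ell_1 + f(r)+g(r))$ communication, while the monologue contributes at most $\alpha(u+\nu)$ constant-size entries plus one consensus-oracle and one execution-oracle query per consecutive pair, i.e.\ $O(\alpha(u+\nu)(f(r)+g(r)))$.

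The gap is in your treatment of the zooming phase. You charge Algorithm~\ref{alg.peaks.vs.peaks} with $O(\log\ell_1)$ rounds of interaction and then worry at length about bounding the path length of Algorithm~\ref{alg.challenger.tree.vs.peak} by $\log(\ell_1+\alpha(u+\nu))$ rather than $\log\ell_2$. Neither cost exists: the responder sends \emph{all} of its peaks (and its length) to the verifier in a single message when it first makes its claim, before the challenge game starts, so both Algorithm~\ref{alg.peaks.vs.peaks} and Algorithm~\ref{alg.challenger.tree.vs.peak} are executed \emph{locally} by the challenger, comparing peaks it already holds against inner nodes of its own MMR. The paper's proof rests precisely on the observation that the zooming phase requires no communication at all; the depth $\log\ell_2$ of the challenger's tree only affects the challenger's local computation, not rounds or bytes on the wire, so no appeal to the suffix truncation is needed there (the truncation to $\alpha(u+\nu)$ entries is relevant only in the monologue). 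This matters quantitatively: with your accounting, the game would take roughly $\log\ell_1$ zooming rounds \emph{plus} $\log\ell_1$ bisection rounds, which exceeds the stated bound of $\log(\ell_1+\alpha(u+\nu))$ rounds, so your argument as written does not establish the claimed round complexity. Once you replace the zooming-phase charge with "zero rounds, zero communication," the rest of your accounting yields the theorem. (A minor further note: the bisection game and the suffix monologue are mutually exclusive outcomes of the zooming phase, so one takes the worse of the two cases rather than their sum, though summing still gives a valid upper bound.)
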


\iflong
\begin{proof}
Suppose the challenge game was invoked on augmented dirty ledgers with (alleged) sizes $\ell_1$ and $\ell_2>\ell_1$ respectively.
The zooming phase of the challenge game does not require any communication among the provers and the verifier.

Suppose that at the end of the zooming phase, the provers play a bisection game on two Merkle trees with $\ell \leq \ell_1$ leaves.
By Lemma~\ref{lem:bisection-succinctness}, the bisection game ends in $\Theta(\log\ell)=\Theta(\log\ell_1)$ rounds and has a total communication complexity of $O(\log\ell + f(r)+g(r))=O(\log\ell_1+f(r)+g(r))$.

Suppose that the challenge game reaches the suffix monologue.
Since the verifier checks for at most $\alpha(u+\nu)$ extra entries, $(\tx_j,\stc_j)$, $j \in \{\ell_1,\ldots, \min(\ell_2,\ell_1+\alpha(u+\nu))\}$, %
at most $\alpha(u+\nu)$ entries are sent to the verifier by the challenger.
These entries have constant sizes since the transactions and the state commitments are assumed to have constant sizes.
Finally, the verifier can query the \coracle on the $\alpha(u+\nu)$ transaction pairs $(\tx_{j-1},\tx_j)$, $j \in \{\ell_1+1,\min(\ell_2,\ell_1+\alpha(u+\nu))\}$, and the \eoracle on the $\alpha(u+\nu)$ triplets $(\stc_{j-1},\tx_j,\stc_j)$, $j \in \{\ell_1+1,\min(\ell_2,\ell_1+\alpha(u+\nu))\}$, with $O(\alpha(u+\nu) f(r))$ and $O(\alpha(u+\nu) g(r))$ communication complexity respectively.
Hence, the total communication complexity of the challenge game becomes $O(\log\ell_1 + \alpha(u+\nu)(f(r)+g(r)))$.
\end{proof}
\fi
\ifshort
The proof follows from the Lemma~\ref{lem:bisection-succinctness}, and is provided in the full version of this paper.
\fi

By the Lipschitz property of the ledger, $|\ledger^{\cup}_r| < \alpha r$,
and $\alpha, \nu, u$ are constants. Superlight client constructions~\cite{nipopows,flyclient}
place $f$ in $\mathcal{O}(\text{poly} \log r)$, and $g$ is in $\mathcal{O}(\text{poly} \log r)$
if standard Merkle constructions~\cite{albassam2018fraud} are used and the transition function
$\transition$ ensures the state grows at most linearly, as is the case in all practical constructions.
In light of these quantities, the result of the above theorem establishes that
our protocol is also $\mathcal{O}(\text{poly} \log r)$ and, hence, \emph{succinct}.

\begin{theorem}[Completeness]
\label{thm:completeness}
Suppose the consensus and execution oracles are complete and the ledger is safe.
Then, the honest responder wins the challenge game against any PPT adversarial challenger.
\end{theorem}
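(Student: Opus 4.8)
The plan is to exploit the two-phase structure of the challenge game and reduce everything to facts already proved about the bisection game. Throughout, the honest party is the responder $\prover$, so its augmented dirty ledger $\ledgeraug$ and the associated MMR $\dtreesp$ are \wf by construction (Definition~\ref{def:safe}), and every message the honest responder emits --- opening a peak, opening the two children of an inner node (so that $h = H(h_l \concat h_r)$), or revealing a leaf together with its Merkle proof --- is syntactically well-formed and internally consistent, so none of the verifier's sanity checks in Figure~\ref{alg.verifier} can ever fail on the honest responder's traffic. Hence the only way the honest responder could lose is by being defeated \emph{inside} a bisection game or \emph{inside} a suffix monologue, whereas the adversarial challenger is declared the loser the instant it times out or sends a malformed query (anything other than a valid root index on the first round or a single bit afterwards), by the ``Responder wins'' conditions of Figure~\ref{alg.verifier}. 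I would first observe that the zooming phase (Algorithms~\ref{alg.peaks.vs.peaks} and~\ref{alg.challenger.tree.vs.peak}) performs no adjudication of \emph{correctness} beyond these syntactic checks, so it can never by itself put the honest responder in a losing position; it can only terminate by reaching a \emph{return} that launches a bisection game (Algorithm~\ref{alg.tree.vs.tree}) or by reaching Line~\ref{alg.challenger.tree.vs.peak.no.bisection} of Algorithm~\ref{alg.challenger.tree.vs.peak}, which triggers a suffix monologue. These outcomes are exhaustive, so it suffices to handle the two cases separately.

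\emph{Case 1: a bisection game is played.} The adversary then acts as the bisection challenger and the honest $\prover$ as the bisection responder on one of its own \wf Merkle trees (a peak of $\dtreesp$), and the two inspected trees have equal numbers of leaves by the invariant maintained during the zooming phase. By Lemma~\ref{lem:bisection-completeness} --- which requires precisely completeness of the \coracle and \eoracle together with ledger safety --- the honest responder wins this bisection game, and hence wins the challenge game. The one wrinkle is the boundary situation in which the pinpointed index $j$ is the first leaf of the inspected tree, so that leaf $j-1$ lies under the previous peak: since $\dtreesp$ is \wf as a \emph{contiguous} augmented dirty ledger, the pair $\ledgeraug[j-1], \ledgeraug[j]$ is a genuine state transition and $(\tx_{j-1},\tx_j) \mid \LOGdirty{r}{\prover} \preceq \ledgercup_r$ by safety, so conditions~\ref{condition:challenger-consensus} and~\ref{condition:challenger-execution} still pass by completeness of the two oracles --- the same argument as in the proof of Lemma~\ref{lem:bisection-completeness}, only across a peak boundary.

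\emph{Case 2: a suffix monologue is played.} Because the honest prover is the \emph{responder}, it holds the shorter claimed ledger, and the monologue amounts to the adversarial challenger having to exhibit a \wf suffix extending $\prover$'s honest $\ledgeraug$; note that $\prover$ has already revealed its last entry $\ledgeraug[-1]$ together with a valid Merkle proof at the start of the protocol. If the challenger fails to present such a suffix, or attempts to present more than $\alpha(u+\nu)$ extra entries, the responder is declared the winner. If instead it presents a short, \wf suffix, then by the rules of the suffix monologue \emph{both} provers are declared winners. In either sub-case the honest responder wins, and combining the two cases proves the theorem.

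I expect the main obstacle to be the bookkeeping needed to justify the first observation rigorously: one must walk through the control flow of Algorithms~\ref{alg.peaks.vs.peaks} and~\ref{alg.challenger.tree.vs.peak} and confirm that every intermediate step invokes only syntactic checks (which the honest, adversary-independent messages always pass) and that the zooming phase is \emph{guaranteed} to terminate in one of the two exhaustive outcomes, and then --- within Case~1 --- to handle the cross-peak boundary opening cleanly by reusing the completeness argument of the bisection game instead of re-deriving it.
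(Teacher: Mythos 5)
Your proposal follows essentially the same route as the paper's proof, which is just the two-case split you describe: if the zooming phase ends in a bisection game, invoke Lemma~\ref{lem:bisection-completeness} (the honest responder wins it), and if it ends in a suffix monologue, the responder is not defeated; your write-up is in fact more detailed than the paper's, e.g.\ in arguing that the zooming phase involves only syntactic checks and in handling the cross-peak opening of leaf $j-1$, both of which the paper leaves implicit.

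One point in your Case~2 is stated backwards relative to the construction, and it is worth flagging because it is exactly the spot where the argument is delicate. You write that if the challenger ``attempts to present more than $\alpha(u+\nu)$ extra entries, the responder is declared the winner,'' but Section~\ref{sec:full-protocol} says the opposite: if the challenger manages to present $\alpha(u+\nu)$ or more extra entries with consecutive transactions and correct state transitions, the \emph{responder} is declared the loser. So the suffix monologue is not automatically safe for the honest responder by the game's rules alone; ruling out that sub-case against a PPT challenger is precisely the content of Lemma~\ref{lem:monologue-succinctness}, which relies on consensus-oracle \emph{soundness} together with the Lipschitz and liveness axioms --- hypotheses not listed in the completeness theorem. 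The paper's own proof elides this as well (it simply asserts the responder ``automatically wins'' once the monologue starts), so your conclusion matches the paper's treatment, but your justification should either cite Lemma~\ref{lem:monologue-succinctness} (and acknowledge the extra assumptions it needs) or avoid claiming that the over-long-suffix outcome favors the responder.
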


\iflong
\begin{proof}
Suppose that at the end of the zooming phase, the challenger invoked the bisection game between one of the honest responder's peaks, $\left<\dtreesp\right>_i$, and a node $\left<\dtreesp\right>^*$ alleged to have the same position as $\left<\dtreesp\right>_i$ within the challenger's MMR.
By Lemma~\ref{lem:bisection-completeness}, the honest responder wins the bisection game.
If the challenger starts a suffix monologue instead of the bisection game at the end of the zooming phase, the responder automatically wins the challenge game.
Hence, the responder wins the challenge game.
\end{proof}

\ifshort
The proof follows from Lemma~\ref{lem:bisection-completeness}, and is provided in the full version of this paper.
\fi

\begin{proposition}
\label{prop:limited-divergence}
For any honest prover $\prover$ and round $r$,
$|\ledger^{\cup}_{r}| < |\ledger^{\prover}_r|+\alpha u$.
\end{proposition}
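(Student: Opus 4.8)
The plan is to exhibit $\ledgercup_r$ as the dirty ledger held by a concrete honest party at round $r$, push that ledger forward in time to compare it against $\prover$'s ledger via \emph{Ledger Safety}, and then control how much $\prover$'s own ledger can have grown in the meantime via \emph{Ledger Lipschitz} (both axioms appear in Table~\ref{tab:axioms}).

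Concretely, first I would fix an honest party $\prover^*$ with $\LOGdirty{r}{\prover^*} = \ledgercup_r$; such a party exists because $\ledgercup_r$ is by definition the longest honest dirty ledger at round $r$, and \emph{Ledger Safety} (applied to any pair of honest parties at round $r$ together with round $r+1$) makes all honest dirty ledgers at round $r$ pairwise prefix-comparable, so a longest one is attained. Next, I would invoke \emph{Ledger Safety} on $\prover^*$ at round $r$ and $\prover$ at round $r+u$ — legitimate since $r < r+u$ as $u \ge 1$ — to conclude $\LOGdirty{r}{\prover^*} \preceq \LOGdirty{r+u}{\prover}$, hence $|\ledgercup_r| = |\LOGdirty{r}{\prover^*}| \le |\LOGdirty{r+u}{\prover}|$. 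Finally, \emph{Ledger Lipschitz} applied to $\prover$ between rounds $r$ and $r+u$ gives $|\LOGdirty{r+u}{\prover}| - |\LOGdirty{r}{\prover}| < \alpha u$. Chaining the two bounds yields $|\ledgercup_r| \le |\LOGdirty{r+u}{\prover}| < |\LOGdirty{r}{\prover}| + \alpha u$, which is the claim.

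There is essentially no deep obstacle here — the statement is a two-line consequence of the axioms — so the only care needed is bookkeeping: ensuring $\ledgercup_r$ is realized by an honest party (so that \emph{Ledger Safety} can name it as the source of a prefix), and ensuring the round indices fed to \emph{Ledger Safety} are strictly increasing, which is where the harmless standing assumption $u \ge 1$ (a positive liveness delay) enters. As a remark, the same argument run with $r+1$ in place of $r+u$ gives the strictly sharper $|\ledgercup_r| < |\LOGdirty{r}{\prover}| + \alpha$; since $u \ge 1$ this implies the stated proposition, but the $\alpha u$ form is the shape that the later succinctness and security theorems consume directly.
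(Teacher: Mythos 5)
Your proof is correct under the axioms as literally stated, but it takes a different route from the paper's. Both arguments share the same skeleton -- realize $\ledgercup_r$ as the ledger $\LOGdirty{r}{\prover'}$ of some honest party, establish $\LOGdirty{r}{\prover'} \preceq \LOGdirty{r+u}{\prover}$, and then invoke the Lipschitz bound on $\prover$ between rounds $r$ and $r+u$ -- but they justify the key prefix inclusion differently. You obtain it in one shot from the cross-party, cross-round form of Ledger Safety ($\LOGdirty{r_1}{P_1} \preceq \LOGdirty{r_2}{P_2}$ for $r_1 < r_2$), and consequently never touch Ledger Liveness; the paper instead argues by contradiction, uses Safety only for same-round comparability and per-party monotonicity, and derives $\LOGdirty{r}{\prover'} \preceq \LOGdirty{r+u}{\prover}$ from Liveness (every transaction in $\prover'$'s ledger missing from $\prover$'s appears in $\LOGdirty{r+u}{\prover}$). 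Your version is shorter and, as your closing remark shows, even yields the sharper bound $|\ledgercup_r| < |\LOGdirty{r}{\prover}| + \alpha$; but that very sharpening is the tell that you are leaning on the literal strength of the Safety axiom, under which the liveness parameter $u$ plays no role at all. The paper's route is the one that survives the weaker (and arguably intended) reading in which honest ledgers are only guaranteed to be prefix-comparable at the same round and may lag one another by up to the liveness delay -- there the $\alpha u$ slack is genuinely supplied by Liveness, which is exactly how the authors use it. So your argument buys brevity and a tighter constant at the price of depending on the strongest reading of one axiom, while the paper's buys robustness of the statement's intended meaning (and keeps $u$ doing real work) at the price of an extra axiom and a contradiction-style write-up.
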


\begin{proof}
Towards contradiction, suppose $|\ledger^{\cup}_{r}| \geq |\ledger^{\prover}_r|+\alpha u$.
By ledger safety, there exists an honest prover $\prover'$ such that $\ledger^{\prover'}_{r}=\ledger^{\cup}_r$, which implies $|\ledger^{\prover'}_{r}| \geq |\ledger^{\prover}_r|+\alpha u$.
Again by ledger safety, $\ledger^{\prover}_r \preceq \ledger^{\prover'}_r$.
By ledger liveness, every transaction that is in $\ledger^{\prover'}_r$ and not in $\ledger^{\prover}_r$ becomes part of $\ledger^{\prover}_{r+u}$, for which $\ledger^{\prover}_{r} \preceq \ledger^{\prover}_{r+u}$ holds by ledger safety.
Hence, $\ledger^{\prover}_{r} \preceq \ledger^{\prover'}_{r} \preceq \ledger^{\prover}_{r+u}$
and, $|\ledger^{\prover}_{r+u}| \geq |\ledger^{\prover'}_{r}| \geq |\ledger^{\prover}_r|+\alpha u$.
However, this is a violation of the ledger Lipschitz property.
Consequently, it should be the case that $|\ledger^{\cup}_{r}| < |\ledger^{\prover}_r|+\alpha u$.
\end{proof}

\begin{lemma}[Monologue Succinctness]
\label{lem:monologue-succinctness}
    Consider an execution of a consensus protocol
    which is Lipschitz with parameter $\alpha$ and
    has liveness with parameter $u$.
    Consider the challenge game instantiated
    with a collision-resistant hash function $H^s$
    and a consensus oracle which is sound with parameter $\nu$.
    For all PPT adversarial challengers $\mathcal{A}$, if the game administered by the honest verifier
    among $\mathcal{A}$ and
    the honest responder $\prover$ at round $r$ reaches the suffix monologue,
    the adversary cannot reveal $\alpha(u +\nu)$ or more
    entries and win the game except with negligible probability.
\end{lemma}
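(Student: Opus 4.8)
The strategy is to play two bounds on $|\ledger^{\cup}_{r+\nu}|$ against each other: a length \emph{lower} bound that the adversary is forced to meet once it reveals many accepted suffix entries, and a length \emph{upper} bound that follows from ledger liveness and the Lipschitz property, and then to observe that the two become incompatible exactly when the number of revealed entries reaches $\alpha(u+\nu)$.

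First I would fix notation. Since the honest responder $\prover$ holds a \wf augmented dirty ledger, its underlying dirty ledger is $\ledger^{\prover}_r$; set $n = |\ledger^{\prover}_r|$, so $|\ledgeraug^{\prover}| = n+1$ and the last entry $\ledgeraug^{\prover}[-1]$ — which $\prover$ revealed with a valid Merkle proof at the very start of the protocol — carries the transaction $\ledger^{\prover}_r[-1]$. Assume toward contradiction that the adversary $\mathcal{A}$ reveals $m \ge \alpha(u+\nu)$ suffix entries in the monologue, all of whose checks pass, and wins. The verifier checks at least the first $\alpha(u+\nu)$ of them, invoking the \coracle on $\alpha(u+\nu)$ consecutive transaction pairs — the first linking $\ledger^{\prover}_r[-1]$ to the first suffix transaction, and each subsequent one linking two adjacent suffix transactions — and all of them return \emph{true}. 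Because $\alpha,u,\nu$ are constants, a union bound over these constantly many invocations together with the soundness of the \coracle with delay $\nu$ shows that, except with negligible probability, every such pair truly satisfies $(\cdot,\cdot)\mid\ledger^{\cup}_{r+\nu}$.

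I would then turn this into the lower bound. By ledger safety $\ledger^{\prover}_r \preceq \ledger^{\cup}_{r+\nu}$, so $\ledger^{\prover}_r[-1]$ occupies position $n-1$ of $\ledger^{\cup}_{r+\nu}$; using that transactions on the dirty ledger are unique, the chain of ``immediately follows'' relations established above pins the $\alpha(u+\nu)$ checked suffix transactions to positions $n, n+1, \ldots, n+\alpha(u+\nu)-1$, whence $|\ledger^{\cup}_{r+\nu}| \ge n + \alpha(u+\nu)$. For the upper bound, pick an honest party $Q$ with $\ledger^{Q}_{r+\nu} = \ledger^{\cup}_{r+\nu}$; the Lipschitz property gives $|\ledger^{Q}_{r+\nu}| < |\ledger^{Q}_r| + \alpha\nu \le |\ledger^{\cup}_r| + \alpha\nu$, and Proposition~\ref{prop:limited-divergence} applied to $\prover$ gives $|\ledger^{\cup}_r| < n + \alpha u$, hence $|\ledger^{\cup}_{r+\nu}| < n + \alpha(u+\nu)$. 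This strict inequality contradicts the lower bound, so the event that $\mathcal{A}$ reveals $\alpha(u+\nu)$ or more valid entries and wins has only negligible probability.

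The difficulty here is not conceptual but a matter of careful bookkeeping. I must justify that the \coracle queries issued during the suffix monologue can be taken to occur at (essentially) round $r$, so that soundness speaks of $\ledger^{\cup}_{r+\nu}$ — this is legitimate because the preceding zooming phase exchanges no messages between the provers and the verifier, so only an $\mathcal{O}(1)$ round slack is incurred, which can be folded into the constants. I must also track the off-by-one between the augmented dirty ledger, which prepends the genesis entry, and the plain dirty ledger, so that ``$\alpha(u+\nu)$ checked entries'' becomes precisely ``$|\ledger^{\cup}_{r+\nu}| \ge n + \alpha(u+\nu)$''. Finally, aligning the strict and non-strict inequalities ($m \ge \alpha(u+\nu)$ against the strict upper bound $|\ledger^{\cup}_{r+\nu}| < n + \alpha(u+\nu)$) is exactly what closes the contradiction, so I would be explicit about it.
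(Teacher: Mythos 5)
Your proposal is correct and follows essentially the same route as the paper's proof: consensus-oracle soundness forces the accepted suffix pairs to be genuinely consecutive in $\ledger^{\cup}_{r+\nu}$, which together with ledger safety and transaction uniqueness yields the lower bound $|\ledger^{\cup}_{r+\nu}| \geq |\ledger^{\prover}_r| + \alpha(u+\nu)$, contradicting the upper bound obtained from the Lipschitz property and Proposition~\ref{prop:limited-divergence}. The only differences are presentational (a union bound over the constantly many oracle queries instead of a single reduction adversary, and applying Lipschitz/limited-divergence in the opposite order), neither of which changes the argument.
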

\begin{proof}
    Suppose the game between the challenger $\mathcal{A}$ and the honest responder $\prover$ reaches the suffix monologue.
    Consider the event $\textsc{Bad}$ that the challenger reveals $\beta\geq\alpha(u +\nu)$ entries and wins the game.
    Let $D = ((\tx_1, \stc_1),\linebreak (\tx_2, \stc_2), \ldots, (\tx_\beta, \stc_\beta))$ denote these entries, and $(\tx_0, \stc_0)$ the responder's last entry prior to the monologue phase.
    Because $\prover$ is in agreement with $\tx_0$, therefore
    $\tx_0 = \ledger^{\prover}_r[-1]$.
    Let $J = (\tx_0, \tx_1, \ldots, \tx_\beta)$.
    Since the challenger wins, the verifier has invoked the consensus oracle
    $\alpha(u +\nu)$ times
    for all consecutive pairs of transactions within $K = J[{:}\alpha(u +\nu)]$.
    At each invocation, the consensus oracle has returned \emph{true}.

    We next construct a consensus oracle adversary $\mathcal{A}'$ that calls $\mathcal{A}$ as a subroutine.
    If $\beta \geq \alpha(u +\nu)$, $\mathcal{A}'$ identifies the first index $p \in [\alpha(u+\nu)]$ such that $\tx_p$ does not immediately follow $\tx_{p-1}$ on $\ledger^{\cup}_{r +\nu}$, and outputs $(\tx_{p-1},\tx_{p},r)$.
    If $\beta < \alpha(u +\nu)$, $\mathcal{A}'$ outputs $\textsc{Failure}$.

    By the ledger Lipschitz property, $|\ledger^{\prover}_{r+\nu}| < |\ledger^{\prover}_{r}|+\alpha \nu$.
    Moreover, by Lemma~\ref{prop:limited-divergence}, $|\ledger^{\cup}_{r +\nu}| < |\ledger^{\prover}_{r+\nu}| + \alpha u$.
    Thus, $|\ledger^{\cup}_{r +\nu}| < |\ledger^{\prover}_{r}| + \alpha(u+\nu)$.

    Let $\ell = |\ledger^{\prover}_r|$.
    By ledger safety, $\tx_0 = \ledger^{\prover}_r[\ell-1]=\ledger^{\cup}_{r +\nu}[\ell-1]$.
    Hence, if $(\tx_0, \tx_1) \mid \ledger^{\cup}_{r +\nu}$, $\tx_1=\ledger^{\cup}_{r +\nu}[\ell]$ as every transaction on $\ledger^{\cup}_{r +\nu}$ is unique.
    By induction, either there exists an index $i \in [\beta]$ such that $\tx_i$ does not immediately follow $\tx_{i-1}$ on $\ledger^{\cup}_{r +\nu}$, or $|\ledger^{\cup}_{r +\nu}| \geq \ell + \beta$ and $\tx_i = \ledger^{\cup}_{r +\nu}[\ell+i-1]$ for all $i \in [\beta]$.

    Finally, if $\beta \geq \alpha(u+\nu)$, there exists an index $i \in [\alpha(u+\nu)]$ such that $\tx_i$ does not immediately follow $\tx_{i-1}$ on $\ledger^{\cup}_{r +\nu}$.
    Thus, $\Pr[\textsc{Bad}] = \Pr[\mathcal{A}' \text{succeeds}]$. However, by the soundness of the consensus oracle, $\forall$ PPT $\mathcal{A}'$, $\Pr[\mathcal{A}' \text{succeeds}] = \negl(\lambda)$.
    Therefore, $\Pr[\textsc{Bad}] = \negl(\lambda)$.
\end{proof}

\begin{theorem}[Soundness]
\label{thm:soundness}
Let $H^s$ be a (keyed) collision resistant hash function.
Suppose the consensus and execution oracles are complete and sound.
Then, for all PPT adversarial responders $\mathcal{A}$, an honest challenger wins the challenge game against $\mathcal{A}$ with overwhelming probability in $\lambda$.
\end{theorem}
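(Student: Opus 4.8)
The plan is to follow the two‑phase structure of the challenge game. The honest prover, being the challenger, holds a \wf augmented dirty ledger $\ledgeraug$ of length $\ell$ and organizes it into an MMR with genuine peaks $\mroot$, while the adversarial responder $\mathcal{A}$ supplies alleged peaks $\mroot^*$ and length $\ell^*$ (which the verifier's well‑formedness checks force to be mutually consistent, else $\mathcal{A}$ loses outright) together with a verified inclusion proof showing $\stc^* \neq \stc$. The zooming phase then either terminates in a bisection game or the challenger enters a suffix monologue, and I would show the honest challenger is declared a winner in either case, except with negligible probability in $\lambda$.

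The central claim to establish is an invariant for the zooming phase: at every step, the \emph{first point of disagreement} between $\ledgeraug$ and $\mathcal{A}$'s alleged ledger lies within the subtree currently inspected. In Algorithm~\ref{alg.peaks.vs.peaks}, a peak is skipped only when $\mroot^*_i = \left<\dtreesp_i\right>$; since these roots are equal, Proposition~\ref{prop:adv-merkle} and collision resistance of $H^s$ imply the underlying leaves agree except with negligible probability, so no disagreement can hide there. At the first differing peak, either the leaf counts match and the game hands off to the bisection game on two equal‑size trees with distinct roots, or, by Fact~\ref{fact.subsumption}, all of $\mathcal{A}$'s remaining data must lie under the challenger's $i$‑th tree, invoking Algorithm~\ref{alg.challenger.tree.vs.peak}. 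There the left/right navigation preserves the invariant: the challenger moves left when the left child's leaf count exceeds the total of the remaining responder leaves (Fact~\ref{fact.subsumption} pins all remaining data to the left), moves right when the left child equals $\mathcal{A}$'s current first remaining peak (equal roots $\Rightarrow$ equal subtrees, again by collision resistance), and halts otherwise. Hence the bisection game is always invoked on a challenger subtree and a responder peak with the \emph{same} number of leaves and \emph{distinct} roots.

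With the invariant in hand, the bisection‑game case is immediate from Lemma~\ref{lem:bisection-soundness}: given ledger safety and sound consensus and execution oracles, the honest challenger wins the bisection game against $\mathcal{A}$ with overwhelming probability; I would also invoke the MMR variant of Lemma~\ref{lem:bisection-game-entries} for the edge case noted in Section~\ref{sec:full-protocol}, where the pinpointed disagreement is the first leaf of the inspected tree and leaf $j-1$ must be opened against the previous peak. For the suffix‑monologue case, there is no point of disagreement, so $\mathcal{A}$'s alleged ledger of length $\ell^*$ is a prefix of the \wf ledger $\ledgeraug$ with $\ell > \ell^*$. The challenger presents $\ledgeraug[\ell^*{:}\min(\ell,\ell^*+\alpha(u+\nu))]$; by \wfc of the challenger's ledger, every consecutive pair satisfies $(\tx_{j-1},\tx_j)\mid\ledgercup$ and every commitment is a correct $\transition$‑application, so the complete consensus and execution oracles affirm each checked entry. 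If $\ell-\ell^* \le \alpha(u+\nu)$ the whole suffix validates and both provers win; otherwise the challenger exhibits more than $\alpha(u+\nu)$ valid extra entries and $\mathcal{A}$ is declared the loser. In both sub‑cases the honest challenger wins. (Lemma~\ref{lem:monologue-succinctness} is the dual statement bounding an \emph{adversarial} challenger and is not needed here.)

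I expect the main obstacle to be the second step: rigorously checking that the zooming‑phase navigation in Algorithms~\ref{alg.peaks.vs.peaks} and~\ref{alg.challenger.tree.vs.peak} preserves the ``first point of disagreement stays inside the inspected subtree'' invariant through all the leaf‑count bookkeeping (via Fact~\ref{fact.subsumption}) and all the ``equal hash $\Rightarrow$ equal subtree'' reductions to collision resistance, and in particular that the bisection game is always entered with two trees of equal size whose roots genuinely differ, so that Lemma~\ref{lem:bisection-soundness} applies verbatim.
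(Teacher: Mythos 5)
Your overall skeleton matches the paper's proof: case-split on whether the zooming phase ends in a bisection game or a suffix monologue, dispatch the first case to Lemma~\ref{lem:bisection-soundness} (with the noted edge case where leaf $j-1$ is opened against the previous peak), and use completeness of the consensus and execution oracles in the second. The extra invariant you develop for the zooming phase is fine but not where the difficulty lies for soundness: since the challenger is honest and runs his own algorithms, the bisection game is by construction started on equal-size trees with differing roots, so Lemma~\ref{lem:bisection-soundness} applies directly.

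The genuine gap is in the suffix-monologue case. You assert that ``there is no point of disagreement, so $\mathcal{A}$'s alleged ledger of length $\ell^*$ is a prefix of the \wf ledger $\ledgeraug$,'' and then only verify the consecutive pairs \emph{inside} the challenger's own suffix via oracle completeness. But the protocol also has the verifier check the junction transition from the responder's previously revealed last entry $\ledgeraug^*[-1]=(\tx^*_{\ell^*-1},\stc^*_{\ell^*-1})$ to the challenger's first suffix entry $\ledgeraug[\ell^*]$, and that first component is adversarially supplied: matching peaks do not by themselves give you an ``alleged ledger'' that is a prefix of $\ledgeraug$, they only bind the one leaf the responder actually opened. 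If that opened leaf differed from the challenger's own $(\tx_{\ell^*-1},\stc_{\ell^*-1})$, the junction check could fail and the honest challenger would lose, so oracle completeness alone does not close this case. The paper's proof spends most of its effort exactly here: it defines the event that the revealed entry equals the challenger's entry, and shows that otherwise the responder's valid inclusion proof against a peak equal to the challenger's corresponding node yields a Merkle-binding adversary, so by Proposition~\ref{prop:adv-merkle} the event holds with overwhelming probability; only conditioned on it does your completeness argument for the suffix go through. You gesture at the right tool (``equal roots imply the underlying leaves agree'') in your zooming invariant, but you never apply it to this specific check, which is the one place an adversarial responder can attack the monologue; your write-up would need to add that reduction explicitly.
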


\iflong
\begin{proof}
Suppose that at the end of the zooming phase, the honest challenger $\prover$ identified one of the responder $\prover^*$'s peaks, $\left<\dtreesp\right>^*_i$, as being different from a node $\left<\dtreesp\right>$ within the challenger's MMR that has the same position as $\left<\dtreesp\right>^*_i$.
In this case, the challenger initiates a bisection game between $\left<\dtreesp\right>^*_i$ and $\left<\dtreesp\right>$.
By Lemma~\ref{lem:bisection-soundness}, the honest challenger wins the bisection game with overwhelming probability.

Suppose the challenger observes that the peaks shared by the responder correspond to the peaks of a \wf MMR. Then, at the end of the zooming phase, the honest challenger starts the suffix monologue.
Let $\ell$ and $\ell^*$ denote the challenger's and the responder's (alleged) augmented dirty ledger sizes respectively.
Let $r$ denote the round at which the challenge game was started.
During the suffix monologue, the challenger reveals its augmented dirty ledger entries $(\tx_j,\stc_j)$ at the indices $\ell^*,\ldots, \min{(\ell,\ell^*+\alpha(u+\nu))}-1$.
Then, for all $j \in \{\ell^*+1,\ell^*+2,\ldots,$ $\min{(\ell,\ell^*+\alpha(u+\nu))}-1\}$, the verifier checks the transactions and the state transitions between $(\tx_{j-1},\stc_{j-1})$ and $(\tx_j,\stc_j)$.
The verifier does the same check between the responder's last (alleged) augmented dirty ledger entry $(\tx^*_{\ell^*-1},\stc^*_{\ell^*-1})$ and $(\tx_{\ell^*},\stc_{\ell^*})$.

Consider the event $\textsc{Equal}$ that $(\tx^*_{\ell^*-1},\stc^*_{\ell^*-1}) = (\tx_{\ell^*-1},\stc_{\ell^*-1})$.
By the \wfc of the challenger's augmented dirty ledger, for any pair of leaves at indices $j-1$ and $j$, $j \in \{\ell^*,\ldots, \min{(\ell,\ell^*+\alpha(u+\nu))}-1\}$, it holds that $(\tx_{j-1}, \tx_j) \mid \LOGdirty{r}{\prover}$, thus, on $\ledger^{\cup}_r$ by ledger safety.
As the \coracle is complete, by Definition~\ref{def:coracle-p}, it returns \emph{true} on all $(\tx_{j-1},\tx_j)$ for $j \in \{\ell^*,\ldots,\min{(\ell,\ell^*\alpha(u+\nu))}-1\}$.
Similarly, by the \wfc of the challenger's augmented dirty ledger, for any pair of leaves at indices $j-1$ and $j$, $j \in \{\ell^*,\ldots, \min{(\ell,\ell^*+\alpha(u+\nu))}-1\}$, there exists a state $\st_{j-1}$ such that $\stc_{j-1} = \left<\st_{j-1}\right>$, and, $\stc_j = \left<\transition(\st_{j-1},\tx_{j})\right>$.
As the \eoracle is complete, by Definition~\ref{def:eoracle-p}, for all $j \in \{\ell^*,\ldots, \min{(\ell,\ell^*+\alpha(u+\nu))}-1\}$, $M(\st_{j-1},\tx_j)$ outputs a proof $\pi_j$ such that $\left<\transition\right>(\left<\st_{j-1}\right>, \tx_{j}, \pi_j) = \left<\transition(\st_{j-1},\tx_{j})\right>$.
Thus, for all $j \in \{\ell^*,\ldots,$
$\min{(\ell,\ell^*+\alpha(u+\nu))}-1\}$, the verifier obtains a proof $\pi_j$ such that $\left<\transition\right>(\stc_{j-1}, \tx_{j}, \pi_j) = \stc_j$.
In other words, if the challenge protocol reaches the suffix monologue and $\textsc{Equal}$, the honest challenger wins the suffix monologue.

Finally, consider the Merkle tree adversary $\mathcal{A}'$ that calls the responder $\prover^*$ as a subroutine.
Let $\pi^*$ denote the Merkle proof revealed by the responder for $(\tx^*_{\ell^*-1},\stc^*_{\ell^*-1})$ with respect to its last (alleged) peak $\left<\dtreesp\right>^*$.
Let $\left<\dtreesp\right>$ denote the corresponding node in the challenger's MMR.
Let $D$ denote the sequence of augmented dirty ledger entries held by the honest challenger in the subtree rooted at $\left<\dtreesp\right>$.
Let $\mathrm{idx} := |D|$ denote the size of this subtree.
If the game reaches the suffix monologue and $\neg\textsc{Equal}$, $\mathcal{A}'$ returns $v := (\tx^*_{\ell^*-1},\stc^*_{\ell^*-1})$, $D$, $\pi$ and $\mathrm{idx}$.
Otherwise, it returns $\textsc{Failure}$.

If the game reaches the suffix monologue, $\left<\dtreesp\right>^* = \left<\dtreesp\right>$, and $\pi$ is valid with respect to $\left<\dtreesp\right>$.
Then, if $(\tx^*_{\ell^*-1},\stc^*_{\ell^*-1}) \neq (\tx_{\ell^*-1},\stc_{\ell^*-1})$, therefore $\left<\dtreesp\right> = \textsc{MakeMT}(D).\textrm{root}$, $D[\mathrm{idx}] \neq v$, and $\textsc{Verify}(\pi, \left<\dtreesp\right>, \mathrm{idx}, v) = 1$.
Conditioned on the fact that the challenge game reaches the suffix monologue, by Proposition~\ref{prop:adv-merkle}, $\Pr[\neg\textsc{Equal}] = \Pr[\mathcal{A}' \text{ succeeds}] = \negl(\lambda)$.
Thus, the honest challenger wins the challenge protocol with overwhelming probability.
\end{proof}
\else
The proof of this theorem follows from Lemma~\ref{lem:bisection-soundness}, and is provided in the full version of the paper.
\fi

\begin{theorem}[Tournament Runtime]
\label{thm:tournament-runtime}
Suppose the consensus and execution oracles are complete and sound, and have $f$ and $g$ communication complexity respectively.
Consider a tournament started at round $r$ with $n$ provers.
Given at least one honest prover, for any PPT adversary $\mathcal{A}$, the tournament ends in $2n\log(|\ledger^{\cup}_r|+\alpha(u+\nu))$ rounds of communication and has a total communication complexity of $O(2n\log(|\ledger^{\cup}_r|+\alpha(u+\nu))+2n\alpha(u+\nu)(f(r)+g(r)))$, with overwhelming probability in $\lambda$.
\end{theorem}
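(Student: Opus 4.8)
The plan is to bound, first, the number of challenge games played during the tournament, and second, the cost of a single challenge game, then multiply. For the first bound, I would invoke the argument already sketched just before Section~\ref{sec:prover}: in Algorithm~\ref{alg.tournament}, each challenge game that is played results in at least one prover being "eliminated" — either the incoming prover $\prover$ is not added to $\mathcal{S}$, or the current $\largest$ is removed from $\mathcal{S}$. Since every one of the $n$ provers can be eliminated at most once, there are at most $n$ eliminations, hence at most $n$ challenge games. (One should be slightly careful: a game in which \emph{both} provers win — the suffix-monologue tie — does not eliminate anyone, but such a game only happens when $\prover$ is added to $\mathcal{S}$, which by the structure of the loop happens at most once per prover as well; so the total count of games is still $O(n)$, and with the constant made explicit it is $\le 2n$ once we account for both the "someone is eliminated" games and the at-most-one-tie-per-prover games.) I would state this counting lemma cleanly and then note that each game is between two provers, one of whom is declared challenger and the other responder according to claimed ledger size.

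For the cost of a single challenge game, I would apply Succinctness (Lemma/Theorem~\ref{thm:succinctness}, equivalently Lemma~\ref{lem:bisection-succinctness} composed with the zooming and monologue analysis): a challenge game invoked at round $r$ between alleged sizes $\ell_1 < \ell_2$ runs in $\log(\ell_1 + \alpha(u+\nu))$ rounds of communication and has total communication complexity $O(\log \ell_1 + \alpha(u+\nu)(f(r)+g(r)))$. The key point now is to uniformly bound the relevant $\ell_1$ across all games of the tournament. Here I would use ledger safety together with Soundness (Theorem~\ref{thm:soundness}) and Lemma~\ref{lem:monologue-succinctness}: the honest prover $\prover_h$ claims size $|\LOGdirty{r}{\prover_h}| \le |\ledger^\cup_r|$, and by Soundness any adversarial prover claiming a larger well-formed-looking ledger would lose to $\prover_h$ in the bisection/monologue, while by Monologue Succinctness no prover surviving past $\prover_h$ can have revealed more than $\alpha(u+\nu)$ suffix entries without losing, except with negligible probability. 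Consequently, with overwhelming probability, in every game actually administered the smaller claimed size is at most $|\ledger^\cup_r| + \alpha(u+\nu)$ — or, more simply, the verifier only ever \emph{processes} $\log(|\ledger^\cup_r| + \alpha(u+\nu))$ rounds and $\alpha(u+\nu)$ suffix entries per game because it aborts games that run longer, and aborting a too-long game costs the offending prover. I would phrase the argument so that the bound holds deterministically for the verifier's \emph{work} and the "overwhelming probability" qualifier attaches only to the claim that the honest prover is never wrongly eliminated (which is what is needed for correctness, handled in Theorem~\ref{thm:security}, but is also implicitly used to know the tournament terminates with the intended structure).

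Combining: at most $2n$ games, each costing $\le \log(|\ledger^\cup_r| + \alpha(u+\nu))$ rounds of communication, gives the stated $2n\log(|\ledger^\cup_r| + \alpha(u+\nu))$ round bound; and each game contributing $O(\log(|\ledger^\cup_r|) + \alpha(u+\nu)(f(r)+g(r)))$ communication, summed over $\le 2n$ games, gives the stated total communication complexity $O(2n\log(|\ledger^\cup_r| + \alpha(u+\nu)) + 2n\alpha(u+\nu)(f(r)+g(r)))$. The main obstacle I anticipate is making the uniform size bound rigorous: one must argue that the $\ell_1$ appearing in each game is controlled by $|\ledger^\cup_r|$ rather than by an arbitrary adversarial claim, which requires either (a) the verifier's explicit truncation of the suffix monologue to $\alpha(u+\nu)$ entries and its abort-after-$\log$-rounds rule in the bisection game, together with (b) Lemma~\ref{lem:monologue-succinctness} and Soundness to ensure that truncating does not cause the honest prover to lose. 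I would take route (a)+(b), emphasizing that the verifier's bounded processing is by construction and that the probabilistic part is only the guarantee that the honest prover survives, so the runtime bound itself is essentially deterministic while the "with overwhelming probability" clause covers the joint event with correctness.
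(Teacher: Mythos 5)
Your overall plan coincides with the paper's proof: bound the number of challenge games by roughly $2n$ via the elimination accounting of Algorithm~\ref{alg.tournament}, bound the claimed sizes appearing in the games by $|\ledger^{\cup}_r|+\alpha(u+\nu)$ using the sorted order of claimed sizes together with Lemma~\ref{lem:monologue-succinctness}, and then apply Theorem~\ref{thm:succinctness} game by game. The paper implements the size bound by splitting at $i^*$, the first step at which an honest prover plays: before $i^*$, every prover considered claims at most the honest prover's size and hence at most $|\ledger^{\cup}_r|$; from $i^*$ onward, $\largest$ is either honest or has at some point won as a \emph{challenger} against an honest responder, and Lemma~\ref{lem:monologue-succinctness} then caps its claimed size at $|\ledger^{\cup}_r|+\alpha(u+\nu)-1$ except with negligible probability. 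Your ``route~(b)'' is exactly this argument, modulo bookkeeping.

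The genuine problem is your ``route~(a)'' and the concluding claim that the runtime bound is ``essentially deterministic'' with the probabilistic qualifier attaching only to the honest prover's survival. There is no abort rule by which the verifier cuts a game off after $\log(|\ledger^{\cup}_r|+\alpha(u+\nu))$ rounds: the verifier does not know $|\ledger^{\cup}_r|$, and the only built-in truncations are (i) at most $\log\ell$ inner-node queries in a bisection game, where $\ell$ is the \emph{claimed} size, and (ii) at most $\alpha(u+\nu)$ checked entries in the suffix monologue. If adversarial provers claim enormous sizes, the verifier by construction processes a bisection of depth $\log$ of that claim, and nothing deterministic ties it to $|\ledger^{\cup}_r|$; the bound on adversarially claimed sizes in the games actually played is itself a cryptographic statement resting on consensus-oracle soundness through Lemma~\ref{lem:monologue-succinctness}, which is precisely why the theorem carries the ``with overwhelming probability'' clause. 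Relatedly, your appeal to Soundness (Theorem~\ref{thm:soundness}) for ``larger claims lose to $\prover_h$'' is mis-aimed: a prover claiming a larger ledger plays as the \emph{challenger} against the honest responder, where the relevant guarantees are Completeness (Theorem~\ref{thm:completeness}, the honest responder is not eliminated) and Lemma~\ref{lem:monologue-succinctness} (a winning challenger cannot have revealed $\alpha(u+\nu)$ or more entries); in the suffix monologue both sides may win, so such an adversary need not lose at all -- its claim is merely capped. Dropping the deterministic framing and carrying route~(b) through, with the $i^*$/ordering bookkeeping made explicit, recovers the paper's proof.
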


\begin{proof}
By the end of the first step, size of the set $\mathcal{S}$ can be at most $2$.
Afterwards, each step of the tournament adds at most one prover to $\mathcal{S}$ and the number of steps is $n-1$.
Moreover, at each step, either there is exactly one challenge game played, or if $k>1$ games are played, at least $k-1$ provers are removed from $\mathcal{S}$.
Hence, the maximum number of challenge games that can be played over the tournament is at most $2n-1$.

Recall that the size alleged by $\prover_i$ is at most the size alleged by $\prover_{i+1}$, $i \in [n-1]$.
Let $i^*$ be the first round where an honest prover plays the challenge game.
If $i^*>1$, until round $i^*$, the sizes alleged by the provers are upper bounded by $|\ledger^{\cup}_r|$.
From round $i^*$ onward, at each round, the prover $\largest$ claiming the largest size is either honest or must have at least once won the challenge game as a \emph{challenger} against an honest responder.
During the game against the honest responder,
by Lemma~\ref{lem:monologue-succinctness}, $\largest$ could not have revealed $\alpha(u+\nu)$ or more entries except with negligible probability.
Hence, from round $i^*$ onward, with overwhelming probability, the size claimed by $\largest$ at any round can at most be $|\ledger^{\cup}_r|+\alpha(u+\nu)-1$.
Thus, with overwhelming probability, by Theorem~\ref{thm:succinctness}, each challenge game ends after at most $\log(|\ledger^{\cup}_r|+\alpha(u+\nu))$ rounds of interactivity and has total communication complexity $O(\log(|\ledger^{\cup}_r|+\alpha(u+\nu))+\alpha(u+\nu)(f(r)+g(r)))$.
Consequently, with overwhelming probability, the tournament started at round $r$ with $n$ provers ends in at most $2n\log(|\ledger^{\cup}_r|+\alpha(u+\nu))$ rounds of interactivity and has total communication complexity $O(2n\log(|\ledger^{\cup}_r|+\alpha(u+\nu))+2n\alpha(u+\nu)(f(r)+g(r)))$.
\end{proof}

\begin{lemma}
\label{lem:challenge-security}
Consider a challenge game invoked by the verifier at some round $r$.
If at least one of the provers $\prover$ is honest, for all PPT adversarial $\mathcal{A}$, the state commitment obtained by the verifier at the end of the game between $\prover$ and $\mathcal{A}$ satisfies state security with overwhelming probability.
\end{lemma}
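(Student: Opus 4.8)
The plan is to case-split on which prover the verifier designates as challenger in the challenge game and, in each case, reduce to the already-established completeness, soundness and monologue-succinctness results, exhibiting a ledger $\ledger$ that witnesses State Security (Definition~\ref{def:state-security}) for the commitment the verifier finally outputs. Write $\prover$ for the honest prover and $\mathcal{A}$ for the adversary, recall that the verifier outputs the state commitment of the prover declared winner (the longer claimant if both are declared winners, as in the tournament), and let $\stc_\prover := \left<\transition^*(\genesisstate,\LOGdirty{r}{\prover})\right>$ be the well-formed commitment $\prover$ claims. If $\prover$ is the \emph{challenger}, then by Soundness (Theorem~\ref{thm:soundness}) $\prover$ wins with overwhelming probability --- through a bisection game or a suffix monologue, in the latter being among the winners and, as the longer claimant, the one whose answer is accepted --- so the verifier outputs $\stc_\prover$. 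Taking $\ledger := \LOGdirty{r}{\prover}$, the first clause of Definition~\ref{def:state-security} is immediate, \textbf{Safety} follows from $\LOGdirty{r}{\prover}\preceq\ledgercup_r\preceq\ledgercup_{r'}$ for all $r'\ge r+\nu$ by ledger safety, and \textbf{Liveness} from $\ledgercap_r\preceq\LOGdirty{r}{\prover}=\ledger$, again by ledger safety.

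If instead $\prover$ is the \emph{responder} and $\mathcal{A}$ the challenger, then whenever $\mathcal{A}$ triggers a bisection game, Completeness (Theorem~\ref{thm:completeness}) gives that $\prover$ wins; the same holds if $\mathcal{A}$ fails to present a well-formed suffix in a suffix monologue. Moreover, by Monologue Succinctness (Lemma~\ref{lem:monologue-succinctness}) the adversary cannot reveal $\alpha(u+\nu)$ or more valid suffix entries and win, so (up to a negligible event) the ``responder loses because his ledger is too short'' outcome never fires against $\prover$. In all of these sub-cases the verifier outputs $\stc_\prover$ and the argument of the previous paragraph applies verbatim with $\ledger := \LOGdirty{r}{\prover}$.

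The one genuinely new sub-case is a suffix monologue in which $\mathcal{A}$, as challenger, reveals a valid extension of $\prover$'s augmented dirty ledger of length fewer than $\alpha(u+\nu)$, so that \emph{both} win and, $\mathcal{A}$ being the longer claimant, the verifier outputs $\mathcal{A}$'s claimed final commitment. Write the revealed entries $(\tx'_1,\stc'_1),\dots,(\tx'_t,\stc'_t)$ as an extension of $\prover$'s last, genuine entry $(\tx'_0,\stc'_0)=\ledgeraug^\prover[-1]$ (so $\tx'_0=\LOGdirty{r}{\prover}[-1]$ and $\stc'_0=\stc_\prover=\left<\st'_0\right>$ for $\prover$'s current state $\st'_0$), and set $\ledger := \LOGdirty{r}{\prover}\concat(\tx'_1,\dots,\tx'_t)$; the accepted commitment is $\stc'_t$. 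For the first clause, I would chain execution-oracle soundness along the suffix, anchored at $\st'_0$: since the verifier accepted $\stc'_j=\left<\transition\right>(\stc'_{j-1},\tx'_j,\pi_j)$ at each step, an adversary against execution-oracle soundness extracts $(\st'_{j-1},\tx'_j,\pi_j)$, whence inductively $\stc'_j=\left<\transition(\st'_{j-1},\tx'_j)\right>$ with overwhelming probability and $\left<\transition^*(\genesisstate,\ledger)\right>=\stc'_t$. For \textbf{Safety}, I would reuse the argument in the proof of Lemma~\ref{lem:monologue-succinctness}: consensus-oracle soundness yields $(\tx'_{j-1},\tx'_j)\mid\ledgercup_{r+\nu}$ for each $j$ with overwhelming probability (with $\tx'_0=\LOGdirty{r}{\prover}[-1]$), and since $\LOGdirty{r}{\prover}\preceq\ledgercup_r\preceq\ledgercup_{r+\nu}$ and transactions on $\ledgercup_{r+\nu}$ are unique, an induction forces $\ledger$ to be a prefix of $\ledgercup_{r+\nu}$, hence of $\ledgercup_{r'}$ for all $r'\ge r+\nu$ by ledger safety. \textbf{Liveness} is immediate from $\ledgercap_r\preceq\LOGdirty{r}{\prover}\preceq\ledger$.

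A union bound over the finitely many negligible failure events --- one collision/Merkle event per bisection game (via Lemma~\ref{lem:bisection-soundness} and Proposition~\ref{prop:adv-merkle}), at most $\alpha(u+\nu)$ consensus-oracle and $\alpha(u+\nu)$ execution-oracle queries along a suffix, and the event of Lemma~\ref{lem:monologue-succinctness} --- then yields state security with overwhelming probability in $\lambda$. The main obstacle is precisely the last sub-case: when the dishonest challenger wins a suffix monologue \emph{alongside} the honest responder and its commitment is the one accepted, the witnessing ledger is not the honest prover's ledger, and one must carefully compose execution-oracle soundness (to pin the accepted commitment down to the honestly computed state) with consensus-oracle soundness anchored at $\prover$'s last genuine entry (to place the entire ledger, suffix included, as a prefix of $\ledgercup_{r+\nu}$). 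Every other branch collapses to a direct application of Theorem~\ref{thm:completeness} or Theorem~\ref{thm:soundness}.
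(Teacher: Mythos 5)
Your proposal is correct and follows essentially the same route as the paper's proof: the easy branches are dispatched by Theorem~\ref{thm:soundness} and Lemma~\ref{lem:bisection-completeness}, and the only delicate case -- the adversarial challenger winning a suffix monologue alongside the honest responder -- is handled exactly as in the paper, by bounding the suffix length via Lemma~\ref{lem:monologue-succinctness} and chaining consensus-oracle and execution-oracle soundness reductions anchored at the honest responder's last entry to show the accepted commitment corresponds to a ledger of the form $\LOGdirty{r}{\prover}$ extended by the revealed transactions, which is a prefix of $\ledgercup_{r+\nu}$. No gaps; this matches the paper's argument in both structure and substance.
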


\iflong
\begin{proof}
If the challenger is honest, by Theorem~\ref{thm:soundness}, he wins the challenge game with overwhelming probability
and the verifier accepts his state commitment.

Suppose the responder $\prover$ of the challenge game is honest, and it is challenged by a challenger $\prover^*$.
If $\prover^*$ starts a bisection game, by Lemma~\ref{lem:bisection-completeness}, $\prover^*$ loses the challenge game and $\prover$ wins the game.
In this case, the verifier accepts the state commitment given by the honest responder.
On the other hand, if the challenge game reaches the suffix monologue and the challenger loses the monologue, the verifier again accepts the state commitment given by the honest responder.
As the state commitment of an honest prover satisfies security as given by Definition~\ref{def:state-security}, in all of the cases above, the commitment accepted by the verifier satisfies state security with overwhelming probability.

Finally, consider the event $\textsc{Win}$ that the game reaches the suffix monologue and the challenger wins.
Let $\ell$ and $\ell^*$ denote the responder's and the challenger's (alleged) augmented dirty ledger lengths respectively.
During the suffix monologue, the challenger reveals its alleged entries $(\tx^*_i,\stc^*_i)$ at indices $i = \ell+1,\ldots, \min{(\ell^*,\ell+\alpha(u+\nu))}-1$.
Let $(\tx_{\ell-1},\stc_{\ell-1})$ denote the responder's last entry.
As the challenger wins, \coracle must have returned true on $(\tx_{\ell-1},\tx^*_{\ell})$ and $(\tx^*_{i-1},\tx^*_i)$ for all $i \in \{\ell+1,\ldots,\min{(\ell^*,\ell+\alpha(u+\nu))}-1\}$.
Similarly, for all $i \in \{\ell,\ldots,\min{(\ell^*,\ell+\alpha(u+\nu))}\}$, \eoracle must have outputted a proof $\pi_{i-\ell+1}$ such that
$\left<\transition\right>(\stc_{\ell-1}, \tx^*_{\ell}, \pi_1) = \stc^*_{\ell}$ and
it holds $\left<\transition\right>(\stc^*_{i-1}, \tx^*_i, \pi_{i-\ell+1}) = \stc^*_i$ for $i \in \{\ell+1,\ldots,\min{(\ell^*,\ell+\alpha(u+\nu))}-1\}$.

Let $D$ denote the sequence $\tx_{\ell-1}, \tx^*_{\ell}, \ldots,$ $\tx^*_{\min{(\ell^*,\ell+\alpha(u+\nu))}-1}$ of transactions.
Consider the event $\textsc{Consensus-Oracle}$ that $\textsc{Win}$ holds, $\ell^* < \ell + \alpha(u+\nu)$, and there exists an index $i \in \{1,\ldots,\ell^*-\ell\}$ such that $D[i]$ does not immediately follow $D[i-1]$ on $\ledger^{\cup}_{r+\nu}$.
We next construct a consensus oracle adversary $\mathcal{A}_1$ that calls $\prover^*$ as a subroutine.
The adversary $\mathcal{A}_1$ identifies the first index $p>0$ such that $D[p]$ does not immediately follow $D[p-1]$ on $\ledger^{\cup}_{r+\nu}$ if such an index exists, and outputs $(D[p-1],D[p],r)$.
Otherwise, $\mathcal{A}_1$ outputs $\textsc{Failure}$.
Hence, $\textsc{Consensus-Oracle}$ implies that $\mathcal{A}_1$ succeeds.

Let $S$ denote the sequence $\stc_{\ell-1}, \stc^*_{\ell}, \ldots,$ $\stc^*_{\min{(\ell^*,\ell+\alpha(u+\nu))}-1}$.
Define $\st_i=\transition^*(\genesisstate, \LOGdirty{r}{\prover} || (\tx^*_{\ell}, \ldots, \tx^*_{\ell+i-1}))$ for $i \in \{1,\ldots,\ell^*-\ell\}$ ($\st_0 = \transition^*(\genesisstate, \LOGdirty{r}{\prover})$).
Consider the event $\textsc{Execution-Oracle}$ that $\textsc{Win}$ holds, $\ell^* < \ell + \alpha(u+\nu)$, $\neg\textsc{Consensus-Oracle}$ holds, and $S[i] \neq \left<\st_{i}\right>$ for at least one index $i \in \{1,\ldots,\ell^*-\ell\}$.
We next construct an execution oracle adversary $\mathcal{A}_2$ that calls $\prover^*$ as a subroutine.
Using $\LOGdirty{r}{\prover}$, $\mathcal{A}_2$ finds $\st_i$ for all $i \in \{0,1,\ldots,\ell^*-\ell\}$ in $O(\text{poly}(|\LOGdirty{r}{\prover}|)$ time.
Then, $\mathcal{A}_2$ identifies the first index $p>0$ such that $S[p] \neq \left<\st_p\right>$ if such an index exists, and outputs $\st = \st_{p-1}$, $\tx = D[p] = \tx^*_{\ell+p-1}$, and $\pi=\pi_p$.
Otherwise, $\mathcal{A}_2$ outputs $\textsc{Failure}$.
Since $\left<\transition\right>(S[i-1], D[i], \pi_i) = S[i]$ for $i \in \{0,1,\ldots,\min{(\ell^*-\ell,\alpha(u+\nu))}-1\}$, the $\textsc{Execution-Oracle}$ implies that 
\[\left<\transition(\st_{p-1},D[p])\right> = \left<\st_p\right> \neq S[p] = \left<\transition\right>(S[p-1],D[p],\pi_p) = \left<\transition\right>(\left<\st_{p-1}\right>,D[p],\pi_p),\]
\ie,
$\mathcal{A}_2$ succeeds.

Finally, if $\textsc{Win}\ \land$ $\neg\textsc{Consensus-Oracle}\ \land$ $\neg\textsc{Execution-Oracle}\ \land$ $\ell^* < \ell+\alpha(u+\nu)$, the verifier accepts the commitment $\stc^*_{\ell^*-1}$, which satisfies state security by Definition~\ref{def:state-security} (Here, $\ledger = \LOGdirty{r}{\prover}\concat (\tx^*_{\ell},\ldots,\tx^*_{\ell^*-1}) \preceq \ledger^{\cup}_{r+\nu}$ and $\stc^*_{\ell^*-1} = \left<\transition^*(\st_0, \ledger)\right> = \stc$).
However,
\begin{align*}
    \Pr[&\textsc{Consensus-Oracle} \lor \textsc{Execution-Oracle}] \leq\\
    &\Pr[\mathcal{A}_1 \text{ succeeds}] + \Pr[\mathcal{A}_2 \text{ succeeds}] \leq \text{negl}(\lambda)\,.
\end{align*}
Moreover, by Lemma~\ref{lem:monologue-succinctness}, $\prover^*$ cannot reveal $\alpha(u+\nu)$ or more entries and win the game except with negligible probability.
Hence,
\begin{align*}
    \Pr[\textsc{Win}] = &\negl(\lambda) + \Pr[\textsc{Win} \land \neg\textsc{Consensus-Oracle}\\
    &\land \neg\textsc{Execution-Oracle} \land \ell^* < \ell+\alpha(u+\nu)]\,.
\end{align*}
which implies that either $\Pr[\textsc{Win}]=\negl(\lambda)$ or conditioned on $\textsc{Win}$, the commitment accepted by the verifier satisfies state security except with negligible probability.
Consequently, in a challenge game invoked by the verifier at some round $r$, if at least one of the provers is honest, the state commitment obtained by the verifier satisfies state security except with negligible probability.
\end{proof}
\else
The proof follows from Theorem~\ref{thm:soundness} and Lemmas~\ref{lem:bisection-completeness} and \ref{lem:monologue-succinctness}
and is provided in the full version.
\fi

\begin{theorem}[Security]
\label{thm:security}
Suppose the consensus and execution oracles are complete and sound, and have $f$ and $g$ communication complexity respectively.
Consider a tournament started at round $r$ with $n$ provers.
Given at least one honest prover, for any PPT adversary $\mathcal{A}$, the state commitment obtained by the prover at the end of the tournament satisfies State Security with overwhelming probability in $\lambda$.
\end{theorem}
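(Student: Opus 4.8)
The plan is to derive State Security of the commitment accepted by the verifier from the per-game guarantee of Lemma~\ref{lem:challenge-security}, combined with the structural facts about the tournament already used in the proof of Theorem~\ref{thm:tournament-runtime}. First I would establish the invariant that the honest prover $\prover_h$ never loses a challenge game: by Completeness (Theorem~\ref{thm:completeness}) it wins whenever it is the responder, and by Soundness (Theorem~\ref{thm:soundness}) it wins whenever it is the challenger, in both cases except with probability negligible in $\lambda$; in an equal-length disagreement the game is a bisection game and the same conclusion follows from Lemma~\ref{lem:bisection-completeness} or Lemma~\ref{lem:bisection-soundness}. Since a prover leaves $\mathcal{S}$ only by losing, once $\prover_h$ is considered at its step it is added to $\mathcal{S}$ and stays there. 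In particular, the tournament winner $\largest$, whose commitment the verifier accepts, satisfies $|\ledger^{\largest}| \ge |\LOGdirty{r}{\prover_h}| \ge |\ledger^{\cap}_r|$.

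Next I would show that $\largest$'s commitment satisfies Definition~\ref{def:state-security}. If $\largest = \prover_h$ (or, trivially, if $\largest$ reports the same commitment as $\prover_h$), this is immediate: $\prover_h$'s dirty ledger $\LOGdirty{r}{\prover_h}$ is a prefix of $\ledger^{\cup}_{r'}$ for all $r' \ge r$ by ledger safety and contains $\ledger^{\cap}_r$, so its committed state is safe and live. Otherwise $\largest$ claims a strictly longer ledger than $\prover_h$; I would then trace the games through which $\largest$ entered and remained in $\mathcal{S}$. Because $\prover_h$ is un-removable, $\largest$ must, directly or along a chain of provers that $\largest$ defeated or tied, have played a suffix monologue acting as challenger against a responder whose claimed ledger is a genuine prefix of $\prover_h$'s real ledger --- the prefix structure being exactly what the absence of a first point of disagreement forces, so that every prover surviving a comparison with $\prover_h$ holds such a prefix. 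For that monologue, Lemma~\ref{lem:challenge-security} --- whose winning-challenger case is precisely this configuration --- together with soundness of the consensus and execution oracles shows that $\largest$'s committed ledger is a prefix of $\ledger^{\cup}_{r+\nu}$; combining this with the length bound $|\ledger^{\largest}| \ge |\ledger^{\cap}_r|$ yields both safety and liveness.

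The main obstacle is the bookkeeping in this second step: the tournament lets $\largest$ be admitted to $\mathcal{S}$ by \emph{tying} with another, possibly adversarial, prover rather than by facing $\prover_h$ head on, so I must rule out that $\largest$'s accepted ledger is a valid suffix-extension of an adversarially crafted ledger that is short or malformed. The resolution is an induction on the sequence of such admissions: the base case is a prover that either faced $\prover_h$ (and hence holds a genuine prefix of $\prover_h$'s ledger) or was itself eliminated, which cannot happen for a prover ending up in $\mathcal{S}$ alongside $\prover_h$; and consensus-oracle soundness forbids genuine transactions from following a forged last entry, closing the chain. The remaining steps are routine: a union bound over the $O(n)$ challenge games of the tournament absorbs the negligible failure probabilities contributed by Theorems~\ref{thm:completeness} and~\ref{thm:soundness}, Lemmas~\ref{lem:bisection-soundness} and~\ref{lem:monologue-succinctness}, and Proposition~\ref{prop:adv-merkle}.
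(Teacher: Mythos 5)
Your overall strategy matches the paper's: show the honest prover never loses (Theorems~\ref{thm:completeness} and~\ref{thm:soundness}) and hence survives in $\mathcal{S}$, then argue that the commitment of the final $\largest$ is either the honest prover's or inherits State Security from a challenge game, with Lemma~\ref{lem:challenge-security} doing the per-game work and a union bound over the $O(n)$ games absorbing the negligible terms. Where you diverge is in the second step: the paper's proof simply asserts that if $\largest \neq \prover_{i^*}$ then $\largest$ ``must have played the challenge game with $\prover_{i^*}$'' and applies Lemma~\ref{lem:challenge-security} directly, whereas you explicitly allow for the case where $\largest$ was admitted by tying with another (possibly adversarial) prover and never faced the honest prover head on, and you close that case by induction along the chain of admission games. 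Since in the tournament each newcomer plays only the current $\largest$, such chains can indeed occur, so your extra bookkeeping is not wasted effort; it is the more careful rendering of the step the paper compresses into one sentence.

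One caution on your inductive invariant: you state that every prover surviving a comparison with $\prover_h$ ``holds a genuine prefix of $\prover_h$'s real ledger.'' That is only true when the surviving prover was the \emph{responder} against challenger $\prover_h$; a surviving adversarial \emph{challenger} holds a verified \emph{extension} of $\LOGdirty{r}{\prover_h}$, and provers further along the chain extend those extensions, so their ledgers need not be prefixes of $\prover_h$'s ledger at all. The invariant you actually want to propagate is the conclusion of Definition~\ref{def:state-security} itself: the responder's committed ledger is a prefix of $\ledger^{\cup}_{r+\nu}$ of length at least $|\ledger^{\cap}_r|$, with a correctly computed state commitment. Lemma~\ref{lem:challenge-security} gives you the base case (the game involving $\prover_h$), but it does not apply verbatim to a link between two adversarial provers; for those links you must rerun the suffix-monologue analysis (consensus- and execution-oracle soundness, Merkle security for the responder's revealed last entry, and Lemma~\ref{lem:monologue-succinctness} to cap the suffix length) under the inductive hypothesis rather than under honesty of the responder. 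With the invariant restated this way, your argument goes through and, combined with the length bound $|\ledger^{\largest}| \geq |\LOGdirty{r}{\prover_h}| \geq |\ledger^{\cap}_r|$, yields both safety and liveness as you claim.
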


\iflong
\begin{proof}
Let $\prover_{i^*}$ denote an honest prover within $\mathcal{P}$.
Let $n = |\mathcal{P}|-1$ denote the total number of rounds.
By Theorems~\ref{thm:completeness} and~\ref{thm:soundness}, $\prover_{i^*}$ wins every challenge game and stays in $\mathcal{S}$ after step $i^*$ with overwhelming probability.

The prover $\largest$ with the largest alleged MMR at the end of each step $i \geq i^*$ is either $\prover_{i^*}$ or has a larger (alleged) MMR than the one held by $\prover_{i^*}$.
In the first case, as $\prover_{i^*}$ is honest, its state commitment satisfies safety and liveness per Definition~\ref{def:state-security}.
In the latter case, $\largest$ must have played the challenge game with $\prover_{i^*}$.
Then, by Lemma~\ref{lem:challenge-security}, the state commitment of $\largest$ satisfies safety and liveness per Definition~\ref{def:state-security} with overwhelming probability.
Consequently, the state commitment obtained by the verifier at the end of the tournament, \ie, the commitment of $\largest$ at the end of round $n \geq i^*$, satisfies safety and liveness with overwhelming probability.
\end{proof}
\fi
\ifshort
The proof follows from Lemma~\ref{lem:challenge-security}, and is provided in the full
version of this paper.
\fi

\begin{theorem}[Prover Complexity]
\label{thm:prover-complexity}
When updating the MMR on a rolling basis, provers do constant amortized number of hash computations per transaction.
Moreover, a node with an MMR of $\ell$ leaves can append a new leaf to its MMR with at most $O(\log{\ell})$ hash computations.
\end{theorem}

\begin{proof}
Given an augmented dirty ledger of length $\ell$, a prover can construct the corresponding MMR with $O(\ell)$ operations upon entering the challenge game.
This is because each prover can obtain the binary representation of $\ell$ with $O(\ell)$ operations, and create each of the $k$ Merkle trees $\dtreesp_i$, $i \in [k]$, with $O(2^{q_i})$ hash computations, making the total compute complexity $O(\ell)$ (\cf Section~\ref{sec:constructing-dirty-ledgers}).
Hence, updating the MMR on a rolling basis, each prover can obtain an MMR with $\ell$ leaves with $O(1)$ amortized number of operations per transaction.

Finally, a node with an MMR of $\ell$ leaves can append a new leaf to its MMR with $O(\log{\ell})$ hash computations; since in the worst case, it only needs to combine the existing $\log\ell$ hashes to update the MMR.
Hence, per each new transaction, each prover only incurs at most logarithmic compute complexity.
\end{proof}
\fi

\end{document}